\newtheorem{thm}{Theorem}
\newtheorem{cor}[thm]{Corollary}
\newtheorem{lem}[thm]{Lemma}
\newtheorem{dfn}[thm]{Definition}
\newtheorem{con}[thm]{Conjecture}
\def\tr{\operatorname{tr}}   
\def\ii{\mathrm{i}}
\begin{document}

\title{Emergent statistical mechanics in holographic random tensor networks}
\author{Shozab Qasim, Jens Eisert, and Alexander Jahn}
\affiliation{Department of Physics, Freie Universit\"at Berlin, 14195 Berlin, Germany}

\begin{abstract}
Recent years have enjoyed substantial progress in capturing properties of complex quantum systems by means of random tensor networks (RTNs),
which form ensembles of quantum states that depend only on the tensor network geometry and bond dimensions. Of particular interest are RTNs on hyperbolic geometries, with local tensors typically chosen from the unitary Haar measure, that model critical boundary states of holographic bulk-boundary dualities. In this work, we elevate static pictures of ensemble averages to a dynamical one, to show that RTN states exhibit equilibration of time-averaged operator expectation values under a highly generic class of Hamiltonians with non-degenerate spectra. 
We prove that RTN states generally equilibrate at large bond dimension and also in the scaling limit for three classes of geometries: Those of matrix product states, regular hyperbolic tilings, and single ``black hole'' tensors.
Furthermore, we prove a hierarchy of equilibration between finite-dimensional instances of these classes for bulk and boundary states with small entanglement. This suggests an equivalent hierarchy between corresponding many-body phases, and reproduces a holographic degree-of-freedom counting for the effective dimension of each system. These results demonstrate that RTN techniques can probe aspects of late-time dynamics of quantum many-body phases and suggest a new approach to describing aspects of holographic dualities using techniques from statistical mechanics.
\end{abstract}

\date{\today}
\maketitle

%\tableofcontents
\section{Introduction}

Complex quantum systems encountered in condensed-matter and high-energy physics exhibit a rich array of fascinating phenomena. While their behavior is compelling, these systems are also notoriously difficult to describe theoretically. To address this challenge, physicists have long relied on simplified ``proxies'' that, while easier to analyze, still capture key features of the underlying physics. Among the most powerful of these tools are tensor networks and quantum circuits: 
models that, despite being governed by relatively few parameters, can accurately represent a wide range of physical scenarios \cite{Fannes:1990ur, Verstraete_2008, Cirac_2021}.
Along similar lines, recent years have seen \emph{randomness}  emerge as a particularly valuable ingredient in this context.
Random quantum circuits \cite{Fisher_2023, Emerstmruc, Nahum_2018, Nahum_2017, hunterjones2019unitarydesignsstatisticalmechanics, Brand_o_2019, Haferkamp_2022, Tan_2025, Vasseur_2019, Fava_2025}, for example, serve 
as effective and versatile models for chaotic quantum many-body dynamics \cite{Sekino:2008he,Shenker_2014, shenker2015stringyeffectsscrambling, Maldacena_2016, Cotler_2017, Gharibyan_2018, saad2019semiclassicalrampsykgravity}, while \emph{random tensor networks} (RTNs) have proven successful in reproducing static 
features of holographic dualities, notably in the context of 
the AdS/CFT correspondence
\cite{Hayden:2016cfa,jia2020petzreconstructionrandomtensor,Chandra_2023,cheng2022randomtensornetworksnontrivial}, a bulk/boundary duality that relates quantum gravity in an asymptotically \emph{anti-de Sitter}  (AdS) bulk spacetime to a \emph{conformal field theory} (CFT) on the spacetime boundary  \cite{Maldacena_1999, witten1998antisitterspaceholography}.
A crucial feature of this setting is the emergence of a bulk spacetime whose connectivity is closely related to the entanglement structure of the boundary CFT, similar to how a hyperbolic tensor network geometry represents its ground state entanglement \cite{Swingle:2009bg,Nozaki:2012zj}. 

Thus far, RTNs -- a simple example of which is shown in Fig.~\ref{fig:equilibration-intro}(a) -- have been shown to capture key features expected of a holographic CFT, displaying an analog of the 
\emph{Ryu-Takayanagi} (RT) formula \cite{Ryu_2006, Ryu_2006long} for R\'enyi entropies in the large bond dimension limit. According to the RT formula, the entanglement entropy for a CFT subregion $A$ is proportional to the area of a matching minimal surface $\gamma_A$ in the bulk geometry, and a similar scaling appears in RTNs. 
They have also been shown to display algebraically decaying correlations as expected by CFT states.
Furthermore, they reproduce a gap in the operator spectrum that is typically expected of the ``gapped CFTs'' appearing in holography \cite{Heemskerk_2009, El_Showk_2012}. Holographic random tensor networks in the large bond dimension limit have also been associated with fixed-area states in quantum gravity, as they demonstrate a corresponding flatness in their entanglement spectrum \cite{Akers:2018fow,Dong_2019}, though variants with non-flat entanglement spectra can also be constructed by adding non-maximally entangled \emph{link states} \cite{cheng2022randomtensornetworksnontrivial}. 
Holographic tensor network models, however, have so far not been shown to model the thermalization and equilibration behavior of holographic CFTs \cite{lashkari2016eigenstatethermalizationhypothesisconformal, saad2019latetimecorrelationfunctions, Pollack_2020, Bao_2019, jafferis2023jt, jafferis2023matrix, Sonner_2017, Nayak_2019, Dymarsky_2018, Dymarsky:2016ntg, Basu_2017, Lashkari:2017hwq, Faulkner:2017hll, Brehm:2018ipf, Romero-Bermudez:2018dim}.

In this work, we take steps towards exploring the 
\emph{non-equilibrium dynamics}  \cite{Gogolin_2016,PolkovnikovReview,Eisert_2015} of holographic toy models constructed from RTNs, making a precise and concrete step to resolve this issue. 
As random tensor networks produce an ensemble of states with no clear Hamiltonian to describe their time evolution, it had generally been assumed that they are severely limited in modeling dynamical phenomena of holographic boundary theories.
Nonetheless, building upon the results presented in Ref.\ \cite{Haferkamp_2021}, we show that the time dynamics of random tensor network states under highly generic Hamiltonians are in fact computable. 
We specifically model the phenomenon of equilibration, defined in terms of the deviations of the expectation value of simple operators around the thermal equilibrium value at infinite time.
The randomness inherent to RTNs allow us to make rigorous statements about equilibration whilst respecting the locality structure of $(1{+}1)$-dimensional boundary theories, beyond what is probed, e.g., by random matrix models \cite{Hubener:2014pfa}. 

\begin{figure*}[t]
    \centering
    \includegraphics[width=0.9\linewidth]{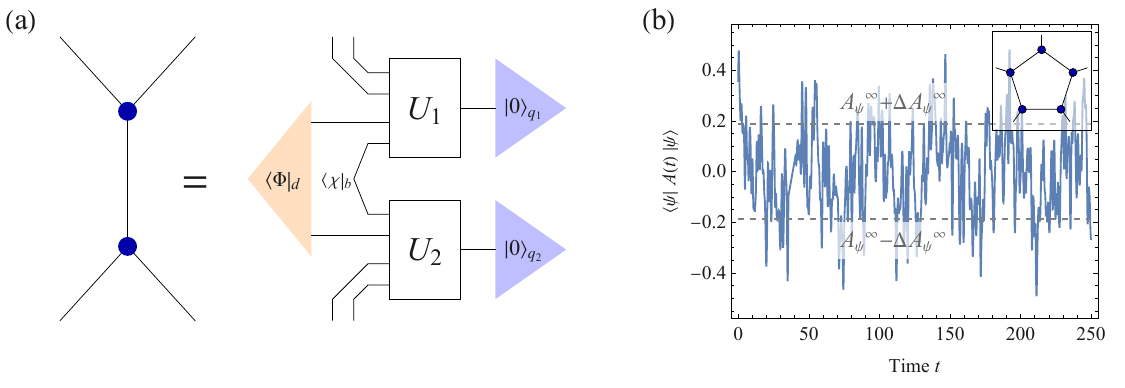}
    \caption{(a) Construction of a random tensor network (RTN) state vector $\ket{\psi}$ on a graph: Each vertex $k$ is associated with a $q_k$-dimensional Haar-random unitary $U_k$ acting on a reference state vector $\ket{0}_{q_k}$, and each closed edge with projection on an EPR pair $\ket{\chi}_b$ of bond dimension $b$. Each vertex may include ``bulk'' degrees that are jointly projected onto a state vector $\ket{\Phi}_d$ of dimension $d$. The ``boundary'' state vector $\ket{\psi}$ is identified with the open edges. 
    (b) Equilibration of an observable 
    $A(t) = e^{\ii H t} A e^{-\ii H t}$ is 
    determined by the late-time fluctuations $\Delta A_\psi^\infty$ around the average $A_\psi^\infty$. Here we show an example of an RTN given by a five-spin \emph{matrix product state} (MPS) under an Ising Hamiltonian $H$ and $A$ as Pauli $X$ acting on the first site. Overlaid are the analytical deviations $\Delta A_\psi^\infty \approx 0.188$ from \eqref{EQ_EFF_DIM_MPS_CLOSED}, calculated independently from $H$ but tightly upper-bounding the standard deviation $\sigma_{\langle A \rangle} \approx 0.175$ of samples of expectation values $\langle A \rangle$.
    }
    \label{fig:equilibration-intro}
\end{figure*}

We will begin with a primer on ensembles, typicality, and equilibration in Sec.~\ref{SEC_BACKGROUND}, and then discuss our setup and basic calculation tool (Weingarten calculus) in Sec.~\ref{SEC_SETUP}. 
The main results will be presented in Sec.~\ref{SEC_RESULTS}. 
There we will show that perfect equilibration -- defined as zero fluctuations of local observables at late times -- appears in RTNs in two limits: One is the \emph{continuum limit} where the local dimension of boundary sites diverges, and the other is the \emph{scaling limit} where the geometry of the tensor network and the number of its boundary sites becomes infinitely large. We will prove perfect equilibration in the continuum limit for generic RTNs, but only prove it for the scaling limit for three types of RTN geometries: Linear chains commonly known as \emph{matrix product states} (MPS) or \emph{tensor train states} (TTS), non-triangular regular hyperbolic tilings (typically used as holographic models), and single random tensors.
We will also be able to distinguish equilibration between finite-dimensional instances of these three RTN geometries for a boundary system of fixed size and dimension, establishing a hierarchy of equilibration in which regular hyperbolic RTN states provably equilibrate more strongly than RTN states from MPS geometries, but equilibrate more weakly than single random tensor states.
For generic RTNs, we will further prove that equilibration of an RTN can never be decreased by \emph{fusing} two connected tensors into a single larger random tensor, thus relating equilibration between more general RTN geometries. We additionally extend our results to RTN's constructed from all of Dysons circular ensembles and finally make an extension to proving equilibration for multi-point observables. 
The implications of our results for many-body physics as well as holography will then be discussed in Sec.~\ref{SEC_DISCUSSION}.

\section{Background}
\label{SEC_BACKGROUND}

In this section, we set the stage for applying notions of \emph{pure-state statistical mechanics} \cite{Gogolin_2016, Linden_2009} to our setting at hand. We formally elucidate these aspects in Appendix \ref{APP:DETEQUIB}. 
Statistical mechanics is formulated in terms of ensembles. The micro-canonical ensemble in quantum statistical mechanics takes the form of the micro-canonical state (formally defined in \eqref{microcanonical}). Both the micro-canonical ensemble and state are defined with respect to an energy window, with the micro-canonical state encoding the uniform mixture over eigenstates in that energy window. However, the applicability of an ensemble is a postulate. In order to derive the rules of statistical mechanics, one needs a deeper argument. An example of such an argument is the notion of \emph{typicality} which can be made rigorous by showing that most states in the ensemble will be indistinguishable by low-complexity (local or coarse-grained) observables.
This can be made concrete by demonstrating a \emph{concentration of measure} phenomenon. A natural way to formulate such arguments is in terms of state vectors for high dimensional subspaces which have reduced states on small subsystems that appear close to the reduced state of the micro-canonical state corresponding to that subspace. The likelihood of picking these state vector from this subspace is equal. 

A most natural measure for this is the \emph{Haar measure} \cite{haar1933,Weingarten:1977ya} for which one can show that for any bounded observable $A$, the likelihood that the expectation value of the observable for a micro-canonical state deviating 
with respect to a state picked from the Haar measure is exponentially small in the dimension of the Hilbert space of the energy window. 
In a lattice system, the dimension of the Hilbert space in the energy window that is far from the spectral edges (and within a fixed superselection sector) will typically grow exponentially with system size. 

Consequently, considering the role of measurements also enforces an indistinguishability of subsystems. In other words, restricting to low-complexity measurements (with support on subsystems that are much smaller than the dimension of the energy window) will lead to an indistinguishability between the subsystem density matrix of the Haar-random state and the subsystem microcanonical density matrix - in other words, they will be exponentially close in trace distance (Theorem \ref{measureconcentration}). In this work, we will be concerned with equilibration of such simple observables, i.e., few-body local operators. 

Thus far we have fixed an energy window and described typicality, as a static statement - however, we can make a dynamical statement, namely concerning equilibration. Under chaotic unitary dynamics, a system will scramble. If the energy levels of a system are sufficiently populated, the rapid phase mixing will mean that expectation value of an observable will, for most times, be near a steady value that will be obtained by time-averaging. The time averaged state will be determined by the energy level population. Generic states will be those for which energy level population will be nearly uniform. Typicality essentially guarantees that most pure states in an energy window will have almost uniform populations - thus, their time-averaged state will locally behave as the microcanonical state (Theorem \ref{equibonav}). This also holds physical relevance as most real experiments do not resolve exponentially many outcomes. 

However, even for simple measurements, any two local pure states which differ microscopically will be indistinguishable from the microcanonical state. Thus all one observes in practice is equilibration. Proofs of \emph{thermalization}, require a much stronger condition than equilibration, can be achieved using random matrix theory \cite{weidenmüller2022randommatrixmodelthermalization}. 
However, due to the randomness, the long-time expectation values of observables become completely independent of the initial state and  the initial energy.
Eigenstates of physical Hamiltonians have additional structure beyond a random matrix \cite{Keating_2015}.

The structure present is elucidated by a reading of the so-called \emph{eigenstate thermalisation hypothesis}
(ETH) \cite{Srednicki_1994, RevModPhys.91.021001, D_Alessio_2016} and its generalization \cite{Pappalardi_2022, Foini_2019}. An alternative route which we spearhead in this work is to exploit unitary random matrices to construct states for which we prove equilibration. As a result, we can prove equilibration for random states exhibiting locality -- and yet, exploit techniques from random matrices in our rigorous proofs. We will make use of the following lemma,
expressed in terms of the \emph{effective dimension} $D_\text{eff} 
\coloneqq D_\text{eff}(\ket{\psi})$ 
\cite{Linden_2009,Gogolin_2016},
which is for a state vector $\ket\psi$ defined as
\begin{equation}
\label{EQ_EFF_DIM_DEF}
    \frac{1}{D_\text{eff}} \coloneqq \sum_j |\langle \psi | j \rangle|^4 \ ,
\end{equation}
with the $\ket{j}$ forming the energy eigenbasis of $H$ which captures how many eigenstates of the Hamiltonian substantially contribute to a given state vector.

\begin{lem}[\cite{huang2020instabilitylocalizationtranslationinvariantsystems}]\label{fluclem}
    For any Hamiltonian $H$ whose spectrum has non-degenerate gaps and any operator \( A \) with bounded norm \( \|A\| = O(1) \),
\begin{equation}
    (\Delta A_{\psi}^{\infty})^2 = O(1/D_\text{eff}).
\end{equation}
\end{lem}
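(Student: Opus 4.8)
The plan is to work entirely in the energy eigenbasis and reduce both time-averaged quantities, $A_\psi^\infty$ and $(\Delta A_\psi^\infty)^2$, to sums over spectral data that can then be controlled by a single trace inequality. Writing $\ket\psi=\sum_j c_j\ket j$ with $H\ket j=E_j\ket j$ and populations $p_j=|c_j|^2$, the evolved expectation value reads $\langle A(t)\rangle=\sum_{j,k}\overline{c_j}\,c_k\,e^{\ii(E_j-E_k)t}A_{jk}$ with $A_{jk}=\langle j|A|k\rangle$. The first step is the infinite-time average: since the spectrum has non-degenerate gaps (which in particular forces a non-degenerate spectrum), every phase with $j\neq k$ averages to zero, leaving $A_\psi^\infty=\sum_j p_j A_{jj}=\tr(\omega A)$, where $\omega=\sum_j p_j\ket j\bra j$ is the dephased diagonal state.

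The second step is the variance. I would expand $|\langle A(t)\rangle-A_\psi^\infty|^2$ as a double sum over pairs $(j,k)$ and $(m,n)$ with $j\neq k$ and $m\neq n$, carrying the phase $e^{\ii(E_j-E_k-E_m+E_n)t}$. Time-averaging forces $E_j-E_k=E_m-E_n$, and here the non-degenerate-gap hypothesis does the real work, collapsing this constraint to $m=j,\ n=k$. Taking $A=A^\dagger$ (the physically relevant observable case), so that $A_{kj}=\overline{A_{jk}}$, this yields the clean identity $(\Delta A_\psi^\infty)^2=\sum_{j\neq k}p_jp_k|A_{jk}|^2$. Adding back the nonnegative diagonal terms can only enlarge the sum, giving the upper bound $\sum_{j,k}p_jp_k|A_{jk}|^2=\tr(\omega A\omega A)=\lVert\sqrt\omega\,A\,\sqrt\omega\rVert_2^2$.

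The final and key step is the trace inequality $\tr(\omega A\omega A)\le\lVert A\rVert^2\,\tr(\omega^2)$. The naive submultiplicative estimate $\lVert\sqrt\omega A\sqrt\omega\rVert_2\le\lVert\sqrt\omega\rVert_\infty\lVert A\rVert\lVert\sqrt\omega\rVert_2$ is too lossy, producing only $O(1/\sqrt{D_\text{eff}})$, so I would instead use operator monotonicity: from $-\lVert A\rVert\,\mathbb{1}\le A\le\lVert A\rVert\,\mathbb{1}$ one obtains $-\lVert A\rVert\,\omega^2\le\omega A\omega\le\lVert A\rVert\,\omega^2$, and pairing these against the positive and negative parts $A=A_+-A_-$ gives $\tr(\omega A\omega A)\le\lVert A\rVert\,\tr(|A|\,\omega^2)\le\lVert A\rVert^2\,\tr(\omega^2)$. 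Since $\tr(\omega^2)=\sum_j p_j^2=1/D_\text{eff}$ and $\lVert A\rVert=O(1)$, this closes the argument at $(\Delta A_\psi^\infty)^2\le\lVert A\rVert^2/D_\text{eff}=O(1/D_\text{eff})$.

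I expect the main obstacle to be securing the correct $1/D_\text{eff}$ scaling rather than $1/\sqrt{D_\text{eff}}$: this hinges entirely on the operator-monotonicity step above, since every elementary norm bound is off by precisely a square root. A secondary point requiring care is the bookkeeping in the variance's time-average, where the non-degenerate-gap assumption must be invoked exactly to exclude accidental resonances $E_j-E_k=E_m-E_n$ that would otherwise contribute spurious cross terms.
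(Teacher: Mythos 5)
Your proof is correct, and after the (forced, and identical) spectral reduction to $(\Delta A_\psi^\infty)^2=\sum_{j\neq k}p_jp_k|A_{jk}|^2$ via non-degenerate gaps, your closing step genuinely differs from the paper's. The paper applies Cauchy--Schwarz to the double sum, splitting $p_jp_k$ into $\smash{\sqrt{\sum_{j\neq k}p_j^2|A_{jk}|^2\sum_{j\neq k}p_k^2|A_{jk}|^2}}$, completes each inner sum to $(AA^\dagger)_{jj}\leq\|A\|^2$ by pinching, and lands at $\sum_j O(p_j^2)$; you instead add back the diagonal to get the closed form $\tr(\omega A\omega A)$ and prove the trace inequality $\tr(\omega A\omega A)\leq\|A\|^2\tr(\omega^2)$ by conjugating $-\|A\|\mathbb{1}\leq A\leq\|A\|\mathbb{1}$ with $\omega$ and pairing against $A=A_+-A_-$. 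Both yield the sharp constant $(\Delta A_\psi^\infty)^2\leq\|A\|^2/D_\text{eff}$, and you correctly identify that any bound routed through $\|\sqrt{\omega}\|_\infty^2=\max_j p_j$ loses a square root, since $\max_j p_j$ can be as large as $1/\sqrt{D_\text{eff}}$. Your version is conceptually cleaner in that it exhibits $1/D_\text{eff}=\tr(\omega^2)$ as the purity of the dephased state and makes the key estimate a one-line operator inequality, whereas the paper's Cauchy--Schwarz route has one advantage you should note: the lemma is stated for \emph{any} bounded $A$, and your positive/negative-part argument uses Hermiticity. This is only a cosmetic gap, since the variance equals $\sum_{j\neq k}p_jp_k|A_{jk}|^2$ for general $A$ as well (because $(A^\dagger)_{kj}=\overline{A_{jk}}$), and your bound extends either by splitting $A$ into Hermitian and anti-Hermitian parts or, more directly, via the Schatten--H\"older inequality $\|\omega^{1/2}A\,\omega^{1/2}\|_2\leq\|\omega^{1/2}\|_4\,\|A\|\,\|\omega^{1/2}\|_4$, whose square is exactly $\|A\|^2\tr(\omega^2)$.
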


\begin{proof}
For clarity, we reproduce a proof of this lemma from Ref.~\cite{huang2020instabilitylocalizationtranslationinvariantsystems}, but with a modified definition of $(\Delta A_{\psi}^{\infty})^2$ to coincide with the typical definition of a variance.
Let \( c_j := \langle \psi | j \rangle \) be the coefficients of the state vector expressed in the Hamiltonian eigenbasis.
Writing out the matrix elements, one finds
\begin{align}
    (\Delta A_{\psi}^{\infty})^2 &= \lim_{\tau \to \infty} \frac{1}{\tau} \int_0^{\tau} \left| \langle \psi | A(t) | \psi \rangle - A_{\psi}^{\infty} \right|^2 dt \\
    &= \lim_{\tau \to \infty} \frac{1}{\tau} \int_0^{\tau} \left| \sum_{j \neq k} c_j c_k^* A_{j,k} e^{i(E_j - E_k)t} \right|^2 dt \nonumber\\
    &= \sum_{j \neq k, j' \neq k'} c_j c_k^* c_{j'} c_{k'}^* A_{j,k} A_{k'j'}^* \nonumber\\ \nonumber
    &\times \lim_{\tau \to \infty} \frac{1}{\tau} \int_0^{\tau} e^{i(E_j - E_k - E_{j'} + E_{k'})t} dt \nonumber\\
    &= \sum_{j \neq k} |c_j|^2 |c_k|^2 A_{j,k} A_{k,j}^{\dagger} ,
\end{align}
which gives
    \begin{align}
    (\Delta A_{\psi}^{\infty})^2 
    &\leq \sqrt{\sum_{j \neq k} |c_j|^4 A_{j,k} (A^{\dagger})_{k,j}  \times \sum_{j \neq k} |c_k|^4 (A^{\dagger})_{k,j} A_{j,k} } \nonumber\\
    &\leq \sqrt{\sum_j |c_j|^4 (A A^{\dagger})_{j,j}  \times \sum_k |c_k|^4 (A^{\dagger} A)_{k,k} } \nonumber\\
    &= \sum_j O(|c_j|^4) = O(1/D_\text{eff}),
\end{align}
where we have made use of the assumption that the spectrum of the Hamiltonian has non-degenerate gaps and the last step relies on 
the pinching inequality
\begin{equation}
(AA^{\dagger})_{j,j} \leq 
\|A\|^2 = O(1) .
\end{equation}
\end{proof} 
Intuitively, the condition of non-degenerate gaps ensures that over sufficiently long times, constructive interference of the phases for different energy eigenstate contributions is strongly suppressed. This ensures that deviations from equilibrium are for most times contained within a window of states of size $D_\text{eff}$.
We show an example of the equilibration of an RTN state and fluctuations $\Delta A_\psi^\infty$ around a late-time average $A_\psi^\infty$ for a simple Pauli observable $A=X_1$ acting on a single site in Fig.~\ref{fig:equilibration-intro}(b).

The inverse effective dimension is additionally related to \emph{delocalization} via the \emph{inverse participation ratio} (IPR) and the Loschmidt echo \cite{Gogolin_2016}. Given a general quantum state vector 
\begin{equation}
\ket{\psi}=\sum_i c_i \ket{\psi}, 
\end{equation}
the IPR is given by 
$\sum_i|c_i|^4$. This quantity measures how concentrated a state vector 
is in the energy eigenbasis $\{\ket{i}\}$ (note that this can be computed in real space or momentum space). Anderson localized states have an IPR close to $1$ while delocalized states (ergodic or thermal) states have a smaller IPR. The Loschmidt echo is defined given an initial state vector $\ket{\psi}$ and two Hamiltonians $H$ and $H'=H+\delta H$, the 
\emph{Loschmidt echo} is defined as 
\begin{equation}
\mathcal{L}(t)\coloneqq |\bra{\psi}e^{iH't}e^{-iHt}\ket{\psi}|^2. 
\end{equation}
The infinite time average of the Loschmidt echo is then given by
\begin{equation}
\overline{\mathcal{L}}=\lim_{T\xrightarrow{}\infty}\int_0^T \mathcal{L}(t). 
\end{equation}
Expanding in the energy eigenbasis of the Hamiltonian $H$, we have $\ket{\psi}=\sum_n c_n \ket{E_n}$ and thus 
\begin{align}
    \mathcal{L}(t) = |\sum_n |c_n|^2e^{-iE_nt}|^2=\sum_{n,m} |c_n|^2|c_m|^2 e^{-i(E_n-E_m)t}
\end{align}
so that $\overline{\mathcal{L}}=\sum_n|c_n|^4$ and the Loschmidt echo will be exponentially small for a delocalized quantum state.

\section{General setup}
\label{SEC_SETUP}

\subsection{Random matrix techniques}

We define a \emph{random tensor network} (RTN) analogous to the construction in Ref.~\cite{Hayden_2016}, where each random tensor of total dimension $D$ (i.e., the dimension of all its legs) is constructed by acting with a random unitary $U$ 
on a reference state vector $\ket{0}_D$ (see Fig.~\ref{fig:equilibration-intro}(a) for an example with two tensors). 
The space of unitary matrices is special because it forms a compact group. There are well established ways to specify and determine a volume form which defines a uniform measure on this space. The volume form, or Haar form, in keeping with the terminology of the corresponding measure which is called the Haar measure, is denoted by $\mathrm{d}U$ and is required to satisfy homogenity, in that for any fixed matrix $V$, 
\begin{equation}
\label{volform}
    \mathrm{d}UV = \mathrm{d}VU = \mathrm{d}U
\end{equation}
which is referred to as invariance under left and right actions of the group. In this sense, the Haar form implies a uniform measure on the space of unitary matrices. The Haar form exists and is unique up to normalization for all compact groups. 
We can now make an explicit definition of an ensemble of random unitary matrices as 
follows.

\begin{dfn}[Circular unitary ensemble \cite{forrester2010loggases}]
    The circular unitary ensemble (CUE) is the group of unitary matrices endowed with the volume form (\ref{volform}).
\end{dfn}

We will be interested in the integration of a polynomial function $f(U)$ of the matrix elements of an $N \times N$ unitary matrix $U$ over the unitary group $U(N)$. The fundamental object of consideration is 
\begin{equation}
    \int \mathrm{d}U U_{i_1 j_1} \dots U_{i_k j_k} U^\dagger_{l_1 m_1} \dots U^\dagger_{l_k m_k} .
\end{equation}
We now arrive at the following result.

\begin{thm}[Weingarten calculus \cite{Collins_2006, köstenberger2021weingartencalculus, Brouwer_1996}] 
\begin{align}
    &\int \mathrm{d}U U_{i_1 ,j_1} \dots U_{i_k ,j_k} U^\dagger_{l_1 ,m_1} \dots U^\dagger_{l_k, m_k} = \nonumber \\ 
    &\sum_{\sigma, \tau \in S_k} \delta_{\sigma} (\mathbf{i}|\mathbf{m}) \delta_\tau (\mathbf{j}|\mathbf{l})\text{Wg}(\sigma \tau^{-1},d)
\end{align}
where we sum over elements of the permutation group $S_k$ and define a contraction of indices with respect to a permutation $\sigma \in S_k$ as 
\begin{equation}
\delta_\sigma(\mathbf{i}|\mathbf{j}) := \prod_{s=1}^{k} \delta_{i_s j_{\sigma(s)}} = 
     \delta_{i_1, j_{\sigma(1)}} \dots \delta_{i_k ,j_{\sigma(k)}}
\end{equation}
averages of $U^{\otimes k} \otimes U^{\dagger \otimes k'}$ with $k \neq k'$ vanish identically. The index contraction $\delta_{\sigma}(\mathbf{i}|\mathbf{j})$ can be interpreted as a permutation operator that acts on the basis of the $k-$fold space as
\begin{equation}
\delta_\sigma(\mathbf{i}|\mathbf{j}) = P_\sigma
\end{equation}
The weight associated to a given contraction is called the Weingarten function. It is a function on elements of $S_k$ and admits an expansion in terms of characters of the symmetric group 
\begin{equation}
\label{weingarten}
    \text{Wg}(\sigma,d) = \frac{1}{k\!} \sum_{\lambda} \frac{f_\lambda \chi_\lambda(\sigma)}{c_\lambda (d)}
\end{equation}
where we sum over integer partitions of $k$ which label the irreps of $S_k$. $\chi_\lambda(\sigma)$ is an irreducible character of $\lambda$, and $f_\lambda$ is the dimension of the irrep $\lambda$. The polynomial in the denominator is defined as 
\begin{equation}
    c_\lambda(d) = \prod_{(i,j) \in \lambda}(d+j-1),
\end{equation}
where we take a product over the coordinates $(i,j)$ of the Young diagram of $\lambda$. Writing $\lambda$ as an integer partition of $k$, with elements $\lambda_i$, the product is taken over $i$ from $1$ to $l(\lambda)$, the length of the partition, and $j$ from $1$ to $\lambda_i$. The expression for the Weingarten function (\ref{weingarten}), is valid for $k \geq d$ by restricting the sum over partitions of length $l(\lambda) \leq d$ so that there are no poles in the polynomial $c_\lambda(d)$.
\end{thm}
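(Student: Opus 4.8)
The plan is to reduce the computation to a structural statement forced by the invariance \eqref{volform} of the Haar measure, and then to fix the surviving constants by inverting a Gram matrix of permutation operators. I would first dispose of the claim that averages of $U^{\otimes k}\otimes U^{\dagger\otimes k'}$ vanish for $k\neq k'$. The map $U\mapsto \mathrm{e}^{\ii\theta}U$ is right multiplication by the unitary $\mathrm{e}^{\ii\theta}\mathbbm{1}$, so \eqref{volform} leaves the integral invariant, while the integrand is multiplied by $\mathrm{e}^{\ii(k-k')\theta}$. Averaging the resulting identity over $\theta\in[0,2\pi)$ annihilates the integral unless $k=k'$.

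\emph{Structure from Schur--Weyl.} For $k=k'$ I would write $U^\dagger_{l_a m_a}=\overline{U_{m_a l_a}}$ and assemble the integral into the operator $T=\int\mathrm{d}U\,U^{\otimes k}\otimes \bar U^{\otimes k}$, viewed through the isomorphism $(\mathbb{C}^d)^{\otimes k}\otimes\overline{(\mathbb{C}^d)}^{\otimes k}\cong\mathrm{End}\big((\mathbb{C}^d)^{\otimes k}\big)$. Averaging a representation over a compact group yields the orthogonal projection onto its invariant subspace, so $T$ is the projector onto the operators $M$ obeying $V^{\otimes k}M=MV^{\otimes k}$ for all $V\in U(d)$. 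By Schur--Weyl duality this commutant is exactly the span of the permutation operators $P_\sigma$, $\sigma\in S_k$, which are the objects $\delta_\sigma$ appearing in the statement. Hence $T=\sum_{\sigma,\tau}c_{\sigma,\tau}\,\delta_\sigma(\mathbf{i}|\mathbf{m})\,\delta_\tau(\mathbf{j}|\mathbf{l})$, and expressing the projector in the non-orthonormal basis $\{P_\sigma\}$ shows $c_{\sigma,\tau}=(G^{-1})_{\sigma,\tau}$, where $G_{\sigma,\tau}=\tr(P_\sigma^{\dagger}P_\tau)=d^{\,\mathrm{cyc}(\sigma^{-1}\tau)}$ is the Gram matrix and $\mathrm{cyc}(\cdot)$ denotes the number of cycles. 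Since $G$ depends only on the relative permutation, so does $G^{-1}$, and I define this function to be $\mathrm{Wg}(\sigma\tau^{-1},d)$.

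\emph{Character expansion.} It then remains to diagonalize $G$ to obtain \eqref{weingarten}. As $d^{\,\mathrm{cyc}(\pi)}$ is a class function, the element $g=\sum_{\pi}d^{\,\mathrm{cyc}(\pi)}\pi$ is central in $\mathbb{C}[S_k]$ and so acts on the irreducible block labelled by a partition $\lambda$ as the scalar $\omega_\lambda=\tfrac{1}{f_\lambda}\sum_\pi d^{\,\mathrm{cyc}(\pi)}\chi_\lambda(\pi)$. The key representation-theoretic input, supplied by the Frobenius formula together with the Schur--Weyl correspondence between the $S_k$-irrep $\lambda$ and the $U(d)$-irrep of the same shape, is the identity $\sum_\pi d^{\,\mathrm{cyc}(\pi)}\chi_\lambda(\pi)=f_\lambda\,c_\lambda(d)$ with $c_\lambda(d)=\prod_{(i,j)\in\lambda}(d+j-i)$ the product of shifted box contents, so that $\omega_\lambda=c_\lambda(d)$. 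Inverting block by block gives $G^{-1}$ eigenvalue $1/c_\lambda(d)$ on block $\lambda$, and the inverse Fourier transform on $S_k$ (using that $S_k$-characters are real) reassembles the class function as $\mathrm{Wg}(\sigma,d)=\tfrac{1}{k!}\sum_\lambda f_\lambda\chi_\lambda(\sigma)/c_\lambda(d)$. The sum is restricted to $l(\lambda)\le d$ precisely because $c_\lambda(d)$ vanishes otherwise, which is exactly where the naive inverse fails.

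\emph{Main obstacle.} The invariance-plus-Schur--Weyl part is clean and essentially forces the permutation-sum form for free; the genuine work lies in the final paragraph. Establishing $\sum_\pi d^{\,\mathrm{cyc}(\pi)}\chi_\lambda(\pi)=f_\lambda\,c_\lambda(d)$ requires the content and hook-length machinery of symmetric-group characters, and one must treat with care the degenerate regime $k\ge d$, where several $c_\lambda(d)$ vanish and $G$ becomes singular, so that $\mathrm{Wg}$ is well defined only after restricting to the non-degenerate irreps $l(\lambda)\le d$ and reading $G^{-1}$ as a pseudo-inverse on the surviving blocks.
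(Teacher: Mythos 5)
The paper does not prove this theorem at all: it is imported verbatim from the literature with citations to Collins--\'Sniady, K\"ostenberger, and Brouwer--Beenakker. Your argument is a correct, self-contained sketch of precisely the standard proof given in those references: the phase-invariance argument ($U\mapsto e^{\ii\theta}U$) for the vanishing at $k\neq k'$, the identification of $\int \mathrm{d}U\, U^{\otimes k}\otimes \bar U^{\otimes k}$ as the projector onto the commutant via Schur--Weyl, the Gram-matrix inversion $\mathrm{Wg}=G^{-1}$ with $G_{\sigma,\tau}=d^{\,\mathrm{cyc}(\sigma^{-1}\tau)}$, and the diagonalization of the central element $\sum_\pi d^{\,\mathrm{cyc}(\pi)}\pi$ using $\sum_\pi d^{\,\mathrm{cyc}(\pi)}\chi_\lambda(\pi)=f_\lambda\, c_\lambda(d)$ (equivalently $k!\,s_\lambda(1^d)$), including the pseudo-inverse reading in the singular regime. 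One place where you silently and correctly deviate from the statement as printed: the denominator must be $c_\lambda(d)=\prod_{(i,j)\in\lambda}(d+j-i)$, as you write, not $\prod_{(i,j)\in\lambda}(d+j-1)$ as in the paper's equation; with the paper's version $c_\lambda(d)$ never vanishes (since $j\geq 1$), so the stated restriction to $l(\lambda)\leq d$ would be vacuous, whereas with $d+j-i$ the box $(d+1,1)$ contributes a zero factor exactly when $l(\lambda)>d$, which is what your pseudo-inverse discussion requires. (The paper's prefactor $1/k\!$ is likewise a typo for $1/k!$, which you use correctly.)
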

The instances we make use of are
\begin{align}
\label{weingarten2n4a}
    & \int \mathrm{d}U U_{i_1,j_1}U^*_{i_2,j_2} = \frac{1}{D}\delta_{i_1,i_2}\delta_{j_1,j_2}, \\
\label{weingarten2n4b}
    & \int \mathrm{d}U U_{i_1,j_1}U_{i_2,j_2}U_{i_3,j_3}^*U_{i_4,j_4}^* =  \\
     &\frac{1}{D^2-1}\left[\delta_{i_1,i_3}\delta_{i_2,i_4}\delta_{j_1,j_3}\delta_{j_2,j_4} - \frac{1}{D}\delta_{i_1,i_3}\delta_{i_2,i_4}\delta_{j_1,j_4}\delta_{j_2,j_3} \right] \nonumber \\
     &+ \frac{1}{D^2-1}\left[- \frac{1}{D} \delta_{i_1,i_4}\delta_{i_2,i_3}\delta_{j_1,j_3}\delta_{j_2,j_4} +\delta_{i_1,i_4}\delta_{i_2,i_3}\delta_{j_1,j_4}\delta_{j_2,j_3} \right] .\nonumber
\end{align}
The corresponding graphical rules for these integrals are given by 
\begin{align}
\label{EQ_WEINGARTEN_STATEBASE1}
    &\begin{gathered}
         \includegraphics[width=0.32\linewidth]{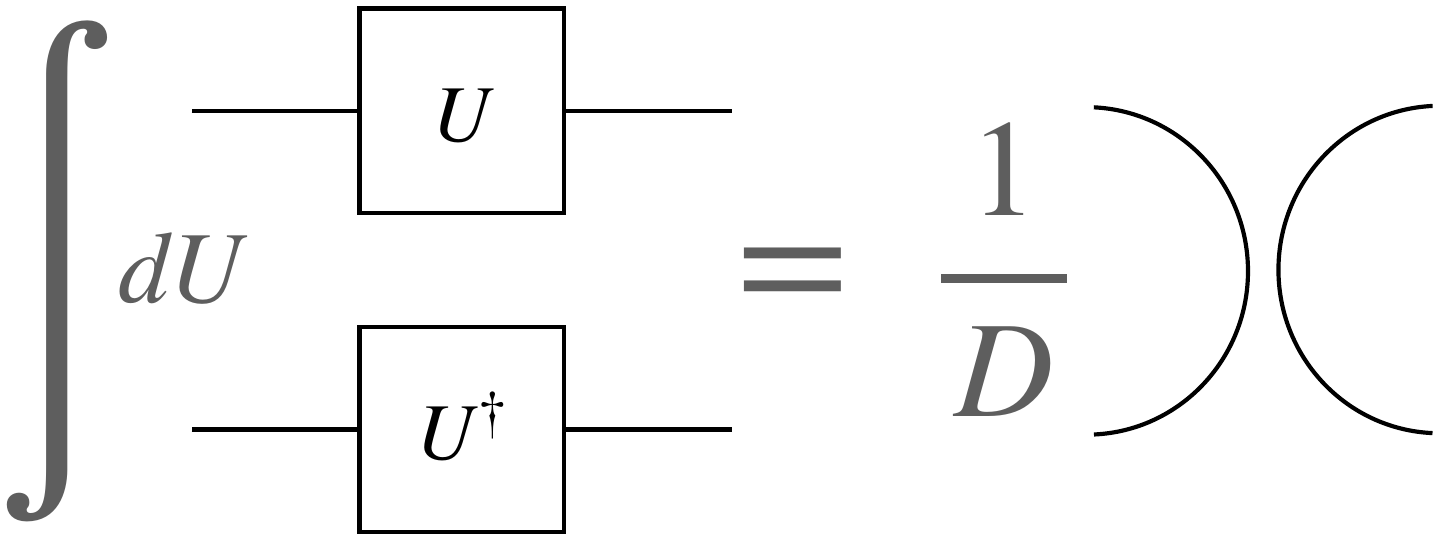}
    \end{gathered} \ , \\[3pt]
\label{EQ_WEINGARTEN_STATEBASE2}
    & \begin{gathered}
         \includegraphics[width=0.85\linewidth]{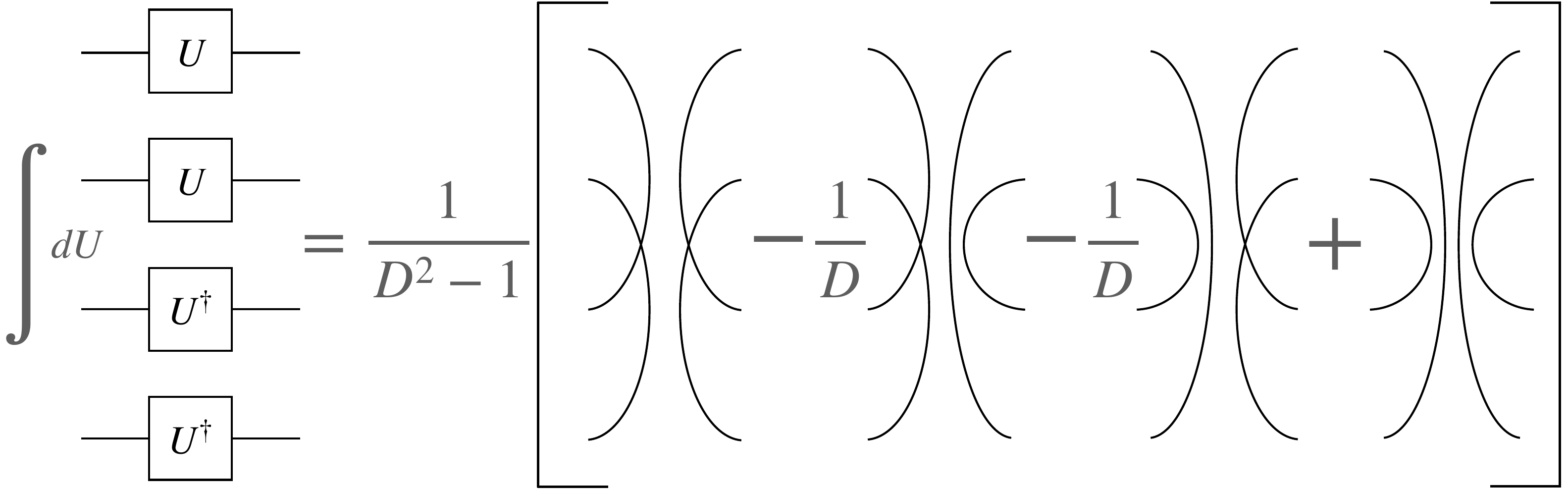}
    \end{gathered} \ .
\end{align}
Acting with these expressions on any reference state, i.e., computing the expectation value of a product of Haar-random \emph{states}, leads to the 
graphical rules
\begin{align}
\label{EQ_WEINGARTEN_STATE_2}
    &\begin{gathered}
         \includegraphics[width=0.45\linewidth]{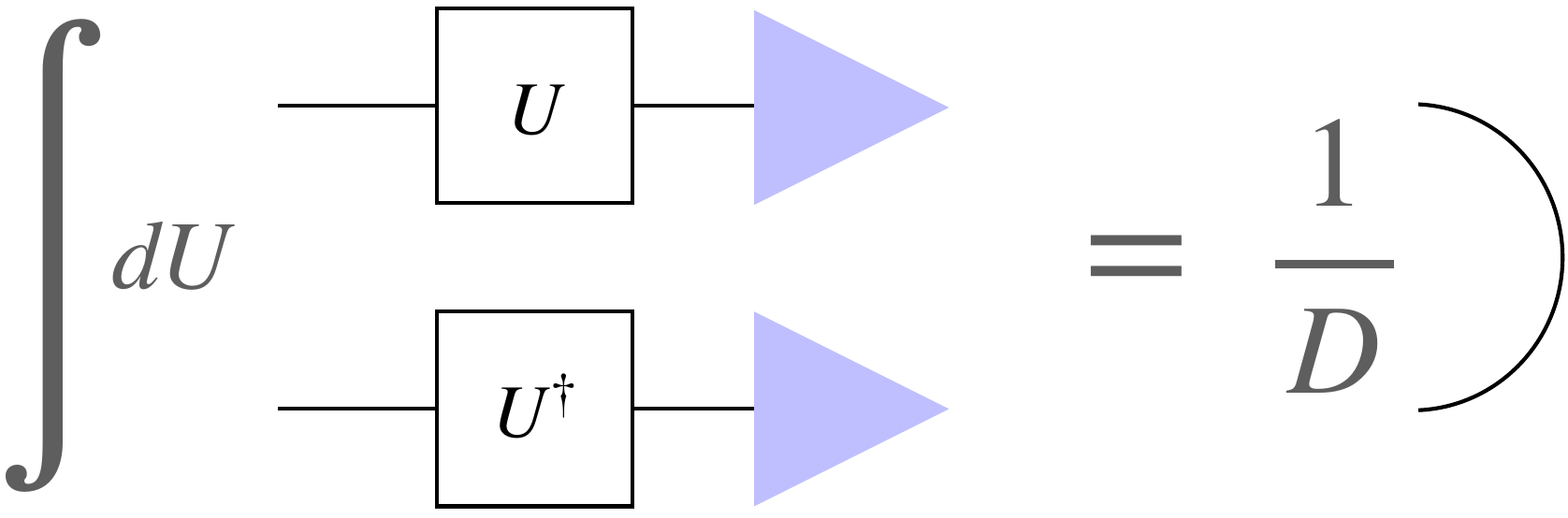}
    \end{gathered} \ , \\[4pt]
\label{EQ_WEINGARTEN_STATE_4}
    & \begin{gathered}
         \includegraphics[width=0.85\linewidth]{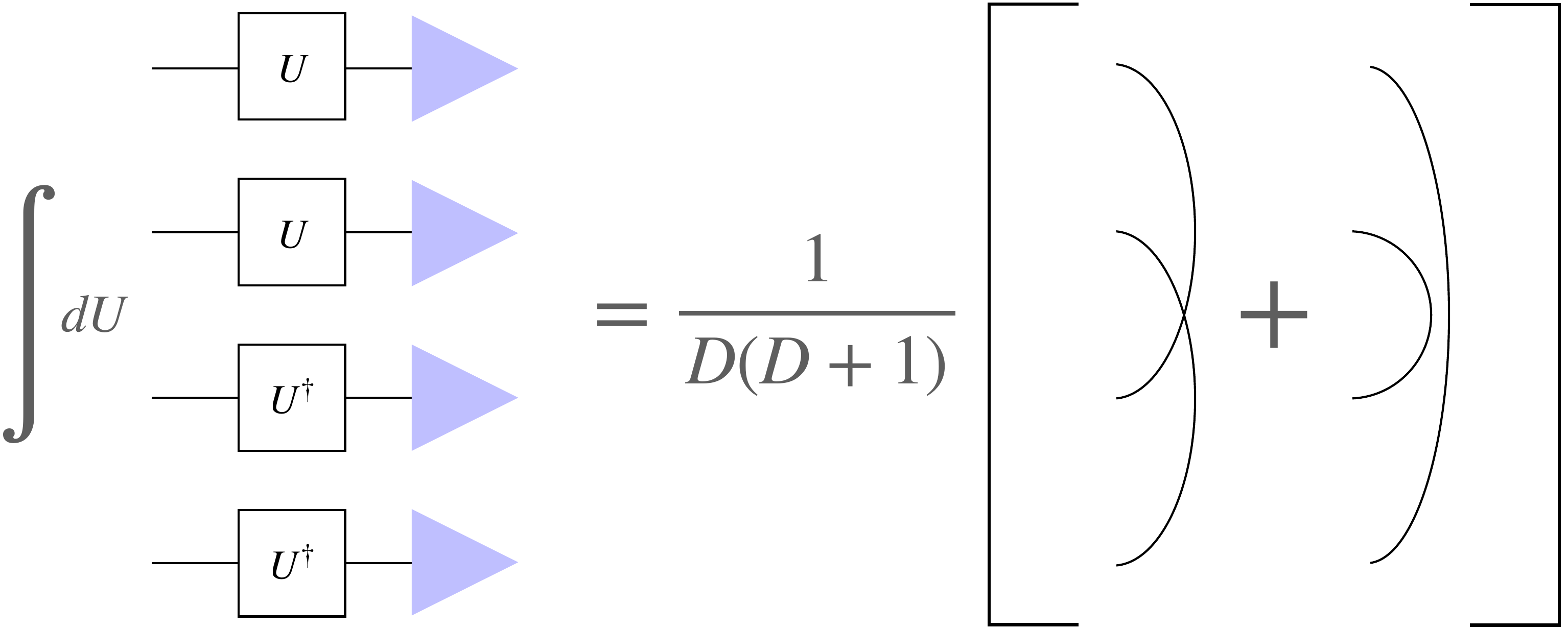}
    \end{gathered} \ .
\end{align}
Here the reference state vector is visualized as a blue triangle as in Fig.~\ref{fig:equilibration-intro}(a). In the calculations below, we will use a notation of Ref.\ \cite{Haferkamp_2021} to write \eqref{EQ_WEINGARTEN_STATE_4} as
\begin{align}
    \frac{1}{D(D+1)} \left[ \ket{1} + \ket{F} \right] \ ,
\end{align}
with the terms in the sum corresponding to the ``identity'' and ``flip'' state vectors in a quadrupled Hilbert space (also referred to as vectorized form \cite{Mele_2024}) of dimension $D^4$. The inner products between these states are given by
\begin{align}
\label{EQ_ID_FLIP_INNER}
    \braket{1}{1}&=\braket{F}{F}=D^2 \ , &
    \braket{1}{F}&=\braket{1}{F}=D \ ,
\end{align}
a result that directly follows from the graphical representation \eqref{EQ_WEINGARTEN_STATE_4}: Combining the same state produces two loops, while combining different ones produces only one.

\subsection{Random tensor networks}

We consider tensor networks on a graph $G$ with three types of edges (tensor legs): \emph{Internal} or contracted edges of (bond) dimension $b$, \emph{logical} edges of dimension $d$ that are associated with bulk degrees of freedom on each vertex/tensor, and \emph{physical} or open edges of dimension $a$ on the boundary of the tensor network. 
A contracted tensor network then furnishes a bulk-to-boundary map $V:\mathcal{H}_\text{bulk} \to \mathcal{H}_\text{bdy}$, where $\operatorname{dim} \mathcal{H}_\text{bulk} = d^{n_v}$ and $\operatorname{dim} \mathcal{H}_\text{bdy} = a^{n}$, $n_v$ and $n$ being the number of vertices and physical edges, respectively. The total number of contracted edges will be denoted by $n_\text{int}$.  
We will often consider tensor networks where the logical legs of the tensor are projected onto a bulk state vector $\ket{\Phi}$, with the resulting boundary state vector $\ket{{\psi}} = V \ket{\Phi}$ simply referred to as the \emph{tensor network state}. 
Though for most purposes we will assume that each type of edge is associated with the same dimension, we will in some cases consider the more general setting where each vertex with index $k$ can have a different logical dimension $d_k$, and each edge $\langle j,k\rangle$ between two indices $j$ and $k$ can have bond dimension $b_{\langle j,k\rangle}$. 
Following the notation of 
Ref.\ \cite{Haferkamp_2021}, we also define the total dimension of the $k$-th tensor as
\begin{equation}
\label{EQ_DIM_TENSOR}
    q_k \coloneqq a^{l_k-m_k}\, d_k \prod_{i=1}^{m_k} b_{k,i}\ ,
\end{equation}
in terms of the number $l_k$ of its planar (i.e., non-bulk) legs and $m_k$ planar internal legs, and where $b_{k,i}$ denotes the bond dimension of its $i$-th internal leg. For tensor networks where $a=b$ is constant for all tensors, this simplifies to $q_k = d_k\, b^{l_k}$.

Using these definitions, we are now in the position to formally define RTNs as follows.

\begin{dfn}[Random tensor networks]\label{DEF_RTN}
Let $G=(V,E)$ be a connected undirected finite graph of vertices $V$ and edges $E=E_\text{int} \cup E_\text{ext}$, the latter divided into \emph{internal} edges $\langle j, k \rangle$ and \emph{external} edges $\langle i, \emptyset\rangle$, $i,j,k \in V$, with external edges connected to only one vertex.
We map each graph to a tensor network by taking a tensor for each vertex, contracting over internal edges, and identifying the external edges with 
physical sites for the 
tensor network state vector $\ket{\psi} = \ket{\psi(G)}$.
Each tensor (before contraction) on vertex $i$ can be identified with a state vector 
\begin{equation}
    \ket{\psi_i} \in (\mathbb{C}^a)^{\otimes (l_i-m_i)} \otimes \mathbb{C}^d \bigotimes_{j \in \mathrm{adj}_i} \mathbb{C}^{b_{\langle i,j \rangle}} \ ,
\end{equation}
where $\mathrm{adj}_i$ denotes the vertices 
adjacent to the $i$-th vertex, and $a,b_{\langle i,j \rangle},d$ denote physical, internal, 
and logical dimensions as defined above.
We then define a \emph{tensor network state} vector as
\begin{equation}
     \ket{\psi}\coloneqq \bra{\Phi} \bigotimes_{\langle j, k \rangle \in E_\text{int}} \bra{j\;k} \bigotimes_{i \in V} \ket{\psi_i} \in (\mathbb{C}^a)^{\otimes n},
\end{equation}
where $\ket{\Phi}$ is a logical \emph{bulk state}  vector and $\ket{j\;k} \in \mathbb{C}^{b_{\langle j,k \rangle}}\otimes \mathbb{C}^{b_{\langle j,k \rangle}}$ is a maximally entangled EPR pair between the two endpoints of the edge $\langle j, k \rangle$.
We call $\ket{\psi}$ a \emph{random tensor network state} vector if each tensor is chosen as
\begin{equation}
    \ket{\psi_i} = U_{\text{Haar}} \ket{0} \ , 
\end{equation}
where $U_\text{Haar}$ is a Haar-random unitary chosen independently for each site $i$ acting on an arbitrary reference state vector $\ket{0}$. 
\end{dfn}

In general, the tensor network state vectors $\ket{\psi}$ will be unnormalized. Normalizing each sample when computing expectation values $\mathbb{E}[ {\bra{\psi}\mathcal{O}\ket{\psi}}/{\braket{\psi}{\psi}}]$ of the sample average for observables $\mathcal{O}$ is often cumbersome, while dividing the expectation value by the expectation value of the norm, i.e., computing ${\mathbb{E}[\bra{\psi}\mathcal{O}\ket{\psi}]}/{\mathbb{E}[\braket{\psi}{\psi}}]$, is much simpler.
Though this is a standard approach in the RTN literature \cite{Akers_2022, Hayden_2016}, one may worry that quantities such as \eqref{EQ_EFF_DIM_DEF} defined for the first averaging method may deviate from computations via the second one.
In App.~\ref{APP:FLUCTUATIONS}, we, therefore, derive a bound on the fluctuations of the norm $\braket{\psi}{\psi}$ for a very general class of RTN states and show that these vanish in the limit of large physical and bond dimensions, with fast convergence. We also show that this implies tight bounds on the expectations of properties of normalized random tensor network states. Numerical checks for systems such as those in Fig.~\ref{fig:equilibration-intro}(b) also confirm that this approach is unproblematic, and that the norm converges to its expectation value with far fewer samples than the observables we study here.

\section{Results}
\label{SEC_RESULTS}

\subsection{Equilibration of random tensor networks}

We now state the main technical result of our work, the inverse effective dimension for an RTN on a general graph $G$, which we derive from the Weingarten formulas of the previous section.
Rather than computing the effective dimension \eqref{EQ_EFF_DIM_DEF} for a full set of energy eigenstates of a specific Hamiltonian $H$, we will be assuming that almost all state vectors $\ket\psi$ in the RTN ensemble have a flat (small) overlap with each of the eigenstates, allowing us to write
\begin{equation}
    \frac{1}{D_\text{eff}} = a^n \frac{\mathbb{E} [ |\braket{\psi}{\phi}|^4 ]}{(\mathbb{E} [ \braket{\psi} ])^2} \ , 
\end{equation}
where $a^n$ is the dimension of the physical (boundary) Hilbert space and $\ket\phi$ is a reference state vector representing a typical eigenstate of $H$. This approach was previously used for random matrix product states (RMPS) in Ref.\ \cite{Haferkamp_2021}.

\begin{lem}[RTN effective dimension]
\label{LEM_EFF_DIM}
For a \emph{random tensor network} (RTN) state vector $\ket\psi$ with $n_\text{v}$ tensors, $n$ physical (external) boundary edges, and a logical bulk state vector $\ket{\Phi}$, the inverse effective dimension as defined by the on-average normalized fourth-power overlap with a reference state vector $\ket\phi$ acting on the external edges is upper-bounded by
\begin{align}
\label{EQ_EFF_DIM}
    &\frac{1}{D_\text{eff}} 
    \leq 
    \frac{Z_{\log\sqrt{b}}}{a^{n}} \prod_{k=1}^{n_\text{v}} \frac{1}{1+\frac{1}{q_k}} \ ,
\end{align}
with the bound saturated for non-entangled $\ket{\Phi}$ and $\ket{\phi}$.
Here $q_k$ is the total dimension \eqref{EQ_DIM_TENSOR} of the $k$-th tensor and  
$Z_{J}$ is defined as a rescaled classical Ising partition function at unit coupling,
\begin{align}
    Z_{J} &= \frac{1}{N}\sum_{\sigma}  e^{\sum_{\langle i,j \rangle \in E_\text{int}} J_{\langle i,j\rangle} \sigma_i \sigma_j} \ , \\
    N &= e^{\sum_{\langle i,j \rangle \in E_\text{int}} J_{\langle i,j\rangle}},
\end{align}
summing over $n_\text{v}$ classical spins taking values $\sigma_k \in \{-1,1\}^{}$ and is normalized by the equal-spin contribution $N$.
With this normalization, $Z_{J}$ effectively assigns factors $1$ and $e^{-2J_{\langle i,j\rangle}}=\frac{1}{b_{\langle i,j\rangle}}$ to aligned and anti-aligned spins, respectively, along each internal edge $\langle i,j\rangle \in E_\text{int}$ with bond dimension $b_{\langle i,j\rangle}$.
\end{lem}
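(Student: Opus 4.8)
The plan is to evaluate both the numerator $\mathbb{E}[|\braket{\psi}{\phi}|^4]$ and the denominator $(\mathbb{E}[\braket{\psi}{\psi}])^2$ of the defining ratio directly with the Weingarten formulas of the previous subsection, exploiting that the tensors on distinct vertices are drawn independently. Writing $|\braket{\psi}{\phi}|^4 = \braket{\psi}{\phi}^2\braket{\phi}{\psi}^2$ as a polynomial containing two factors of $U_k$ and two of $U_k^\dagger$ at each vertex, the expectation factorizes over vertices, and at each vertex I would insert the fourth-moment rule \eqref{EQ_WEINGARTEN_STATE_4}, i.e.\ $\mathbb{E}[(\ket{\psi_k}\bra{\psi_k})^{\otimes 2}] = \frac{1}{q_k(q_k+1)}(\ket{1}+\ket{F})$ in vectorized form, with $q_k$ the total tensor dimension \eqref{EQ_DIM_TENSOR}. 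The denominator is simpler: each vertex carries only the second moment \eqref{EQ_WEINGARTEN_STATE_2}, $\mathbb{E}[\ket{\psi_k}\bra{\psi_k}] = \frac{1}{q_k}\mathbb{1}$, so that $\mathbb{E}[\braket{\psi}{\psi}]$ is a single fully contracted scalar whose square sets the overall normalization.

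Next I would turn the product $\prod_k(\ket{1}_k+\ket{F}_k)$ into a sum over classical Ising spins $\sigma_k\in\{+1,-1\}$, identifying $\sigma_k=+1$ with the identity term $\ket{1}_k$ and $\sigma_k=-1$ with the flip $\ket{F}_k$. For each configuration $\vec\sigma$ the remaining diagram factorizes into contributions from the internal EPR edges, the bulk projector $\ket{\Phi}$, and the external legs carrying the reference state $\ket{\phi}$. The key combinatorial input \eqref{EQ_ID_FLIP_INNER} is that an internal edge $\langle i,j\rangle$ of bond dimension $b_{\langle i,j\rangle}$ produces two loops (weight $b^2$) when $\sigma_i=\sigma_j$ and a single loop (weight $b$) when $\sigma_i\neq\sigma_j$; normalizing by the fully aligned configuration, each anti-aligned edge therefore contributes a factor $1/b_{\langle i,j\rangle}=e^{-2J_{\langle i,j\rangle}}$ with $J=\log\sqrt{b}$, which is exactly the weight assembled by $Z_{\log\sqrt{b}}$.

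Assembling the ratio, the per-vertex prefactors combine as $\prod_k \frac{1/(q_k(q_k+1))}{(1/q_k)^2}=\prod_k\frac{1}{1+1/q_k}$, the internal-edge weights build the normalized partition function $Z_{\log\sqrt{b}}$, and the external-leg contractions against $\ket{\phi}$ together with the dimension factors produce the overall $1/a^n$, which cancels against the explicit $a^n$ in the definition to leave the stated $Z_{\log\sqrt{b}}/a^n$. I would first verify this bookkeeping for product $\ket{\Phi}$ and $\ket{\phi}$, where each flip acts trivially and every configuration contributes its full Ising weight, yielding the claimed expression as an exact equality.

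The main obstacle is the passage from this product-state equality to the general inequality with the stated saturation condition. For entangled $\ket{\Phi}$ (resp.\ $\ket{\phi}$) a flip on a subset of bulk (resp.\ boundary) legs no longer contracts trivially but instead evaluates to a purity-type overlap $\bra{\Phi}^{\otimes 2}F\ket{\Phi}^{\otimes 2}=\tr(\rho^2)\le 1$ across the corresponding cut, and similarly for $\ket{\phi}$. Since the identity terms are unaffected while every flip contribution is multiplied by a factor at most $1$, entanglement can only suppress the anti-aligned Ising weights, so the product-state value upper-bounds the general one and is saturated precisely when $\ket{\Phi}$ and $\ket{\phi}$ are unentangled. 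The delicate points are to confirm that these suppression factors attach only to the flip (anti-aligned) sectors and never to the normalizing aligned sector, and to verify the non-negativity needed so that dropping them yields a genuine upper bound rather than merely an estimate.
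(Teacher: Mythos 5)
Your proposal is correct and follows essentially the same route as the paper's proof: per-vertex second- and fourth-moment Weingarten rules, the identity/flip expansion mapped to an Ising spin sum with internal-edge weights $b^2$ (aligned) versus $b$ (anti-aligned), the per-vertex factors $\prod_k (1+1/q_k)^{-1}$, and purity factors $\tr[\rho_a^2],\tr[\varsigma_A^2]\le 1$ attached to non-uniform spin configurations (equal to $1$ on the uniform ones), which are dropped to obtain the bound saturated by product states. The only slip is in your $a$-bookkeeping prose: the external-leg contractions against $\ket{\phi}$ yield purities $\le 1$ rather than powers of $a$; it is the squared norm in the denominator that contributes $a^{-2n}$, of which one factor of $a^{n}$ cancels the explicit $a^{n}$ in the definition, leaving the stated $Z_{\log\sqrt{b}}/a^{n}$.
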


\noindent
\textit{Proof.} We begin with the numerator of \eqref{EQ_EFF_DIM}. The expectation value of the norm $\braket{\psi}{\psi}$ of the RTN state vector $\ket\psi$ follows from the second-order Weingarten formula \eqref{EQ_WEINGARTEN_STATE_2}. To be instructive, we first compute this expression for a simple example of an RTN consisting of only two tensors. Following the style of Fig.\ \ref{fig:equilibration-intro}(a) but omitting labels except for the random unitaries, we write
\begin{align}
\label{EQ_TWO_TENSOR_EX}
    \ket{\psi} = 
    \begin{gathered}
        \includegraphics[width=0.4\linewidth]{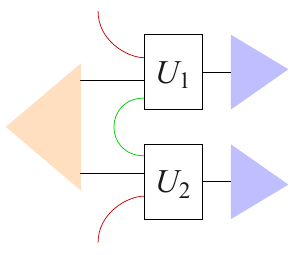}
    \end{gathered}
\end{align}
Here we have drawn physical bonds with dimension $a$ in red, internal bonds with dimension $b$ in green, and bonds connecting normalized states in black. 
Graphically, the integral yielding $\mathbb{E} [ \braket{\psi} ]$ can then be visualized as
\begin{align}
\label{EQ_EXP_NORM_EX}
    \begin{gathered}
        \includegraphics[width=0.7\linewidth]{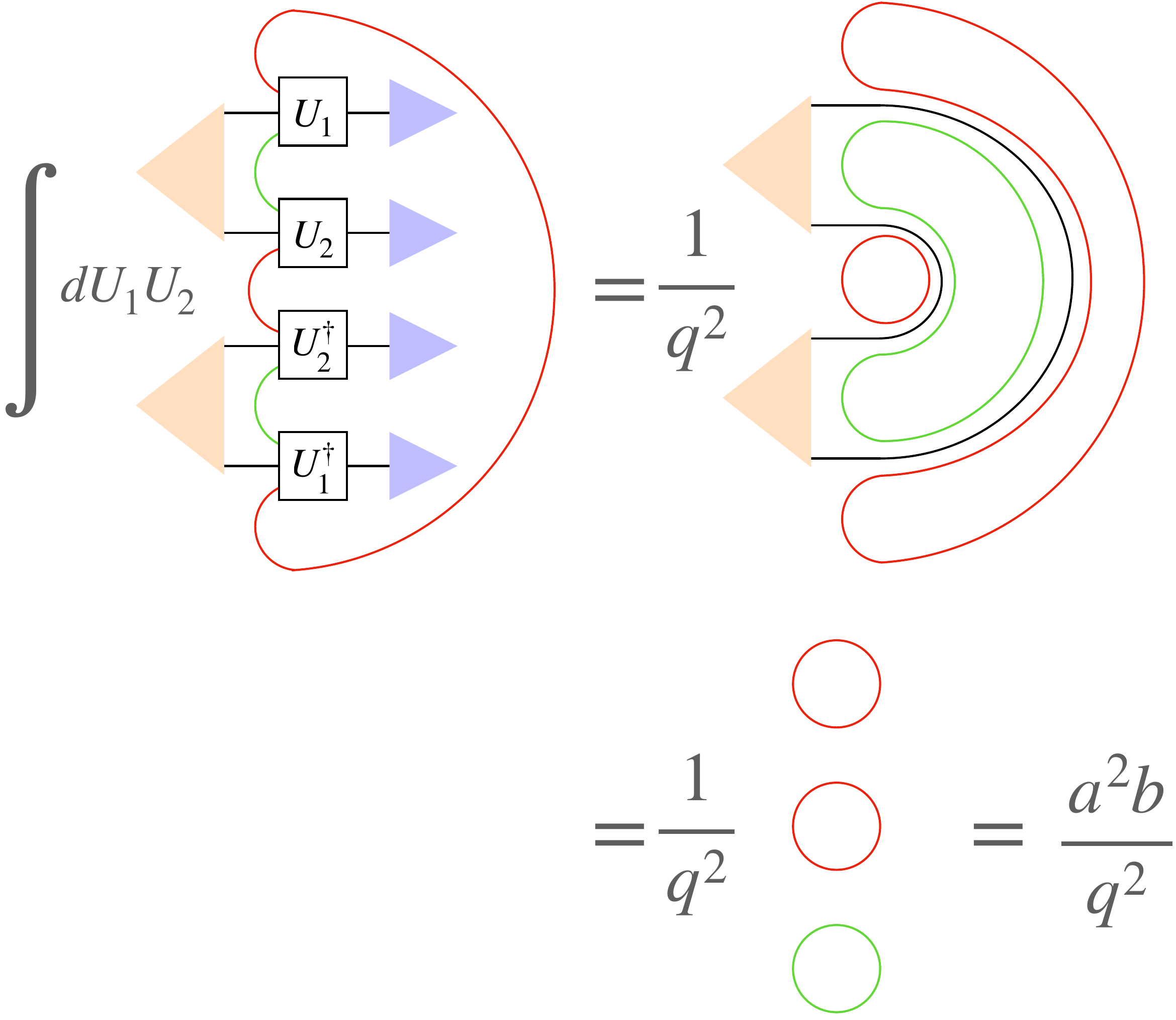}.
    \end{gathered}
\end{align}
The indices of adjoint unitaries are mirrored vertically for ease of visualization.
We denoted with $q$ the dimension of each Haar-random unitary, leading to a total dimension $q^2=(a\, b\, d)^2$ of the two tensors before contraction. In the second step, we found that the logical bulk state (orange triangle) drops out, as it only contributes a normalization factor $\braket{\Phi}{\Phi}=1$. We then end up with a loop (trace) for each internal (green) and external bond (red), contributing a factor of the respective dimension of the bond.
Generalized to an arbitrary RTN, this expression then becomes
\begin{align}
\label{EQ_G_NORM}
    \mathbb{E} [ \braket{\psi} ] = a^{n}  \prod_{\langle i,j \rangle \in E_\text{int}} b_{\langle i,j \rangle}\; \prod_{k=1}^{n_\text{v}} \frac{1}{q_k}  \ .
\end{align}
For simple tensor networks without logical legs ($d_k=1$), constant bond dimensions $a=b_{\langle i,j \rangle} = b$ and tensors with $l$ legs each, this expression simplifies to $b^{n+n_\text{int}-l\,n_\text{v}}=b^{-n_\text{int}}$.

Next we compute the denominator of \eqref{EQ_EFF_DIM}, given by the fourth power of the overlap between the RTN state vector $\ket\psi$ and a normalized reference state vector $\ket\phi$.
Applying the fourth-order Weingarten formula \eqref{EQ_WEINGARTEN_STATE_4} onto our example, we find for $\mathbb{E} [ |\braket{\psi}{\phi}|^4 ]$, the diagram
\begin{widetext}
\begin{align}
    \label{fullcalc}
    \raisebox{-0.5\height}{\includegraphics[width=0.14\linewidth]{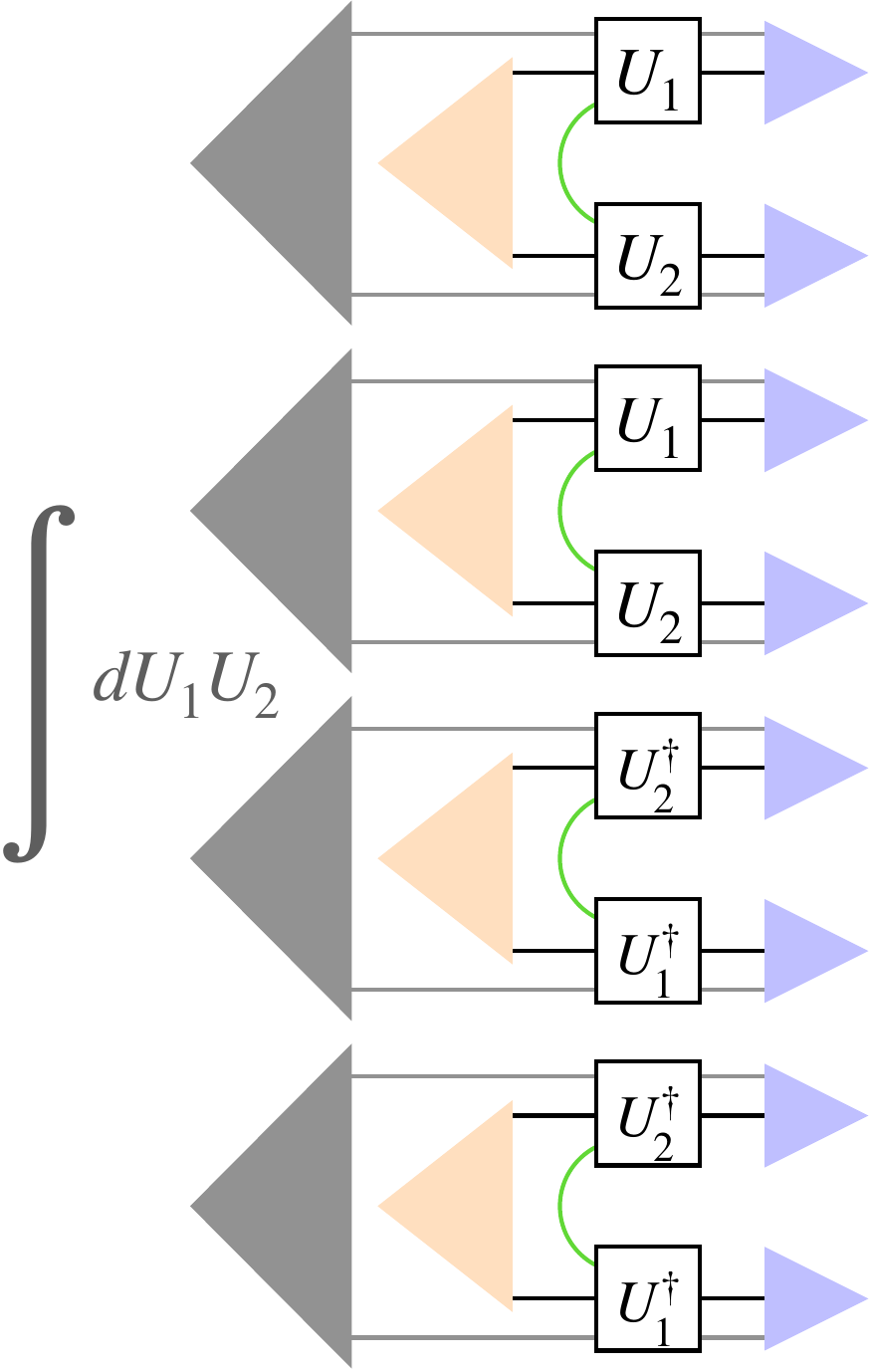}}
    &\mathrel{\vcenter{\hbox{=}}}
    \raisebox{-0.5\height}{\includegraphics[width=0.68\linewidth]{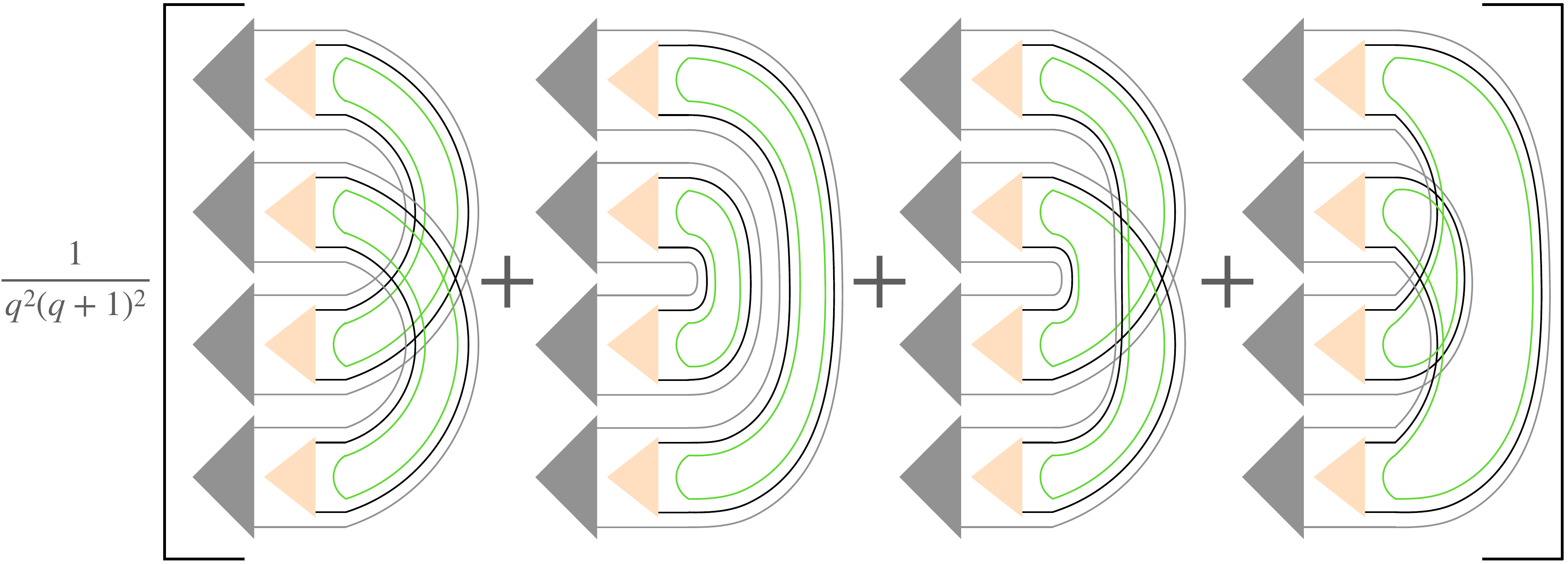}} \nonumber \\
    &\mathrel{\vcenter{\hbox{=}}}
    \raisebox{-0.5\height}
    {\includegraphics[width=0.55\linewidth]{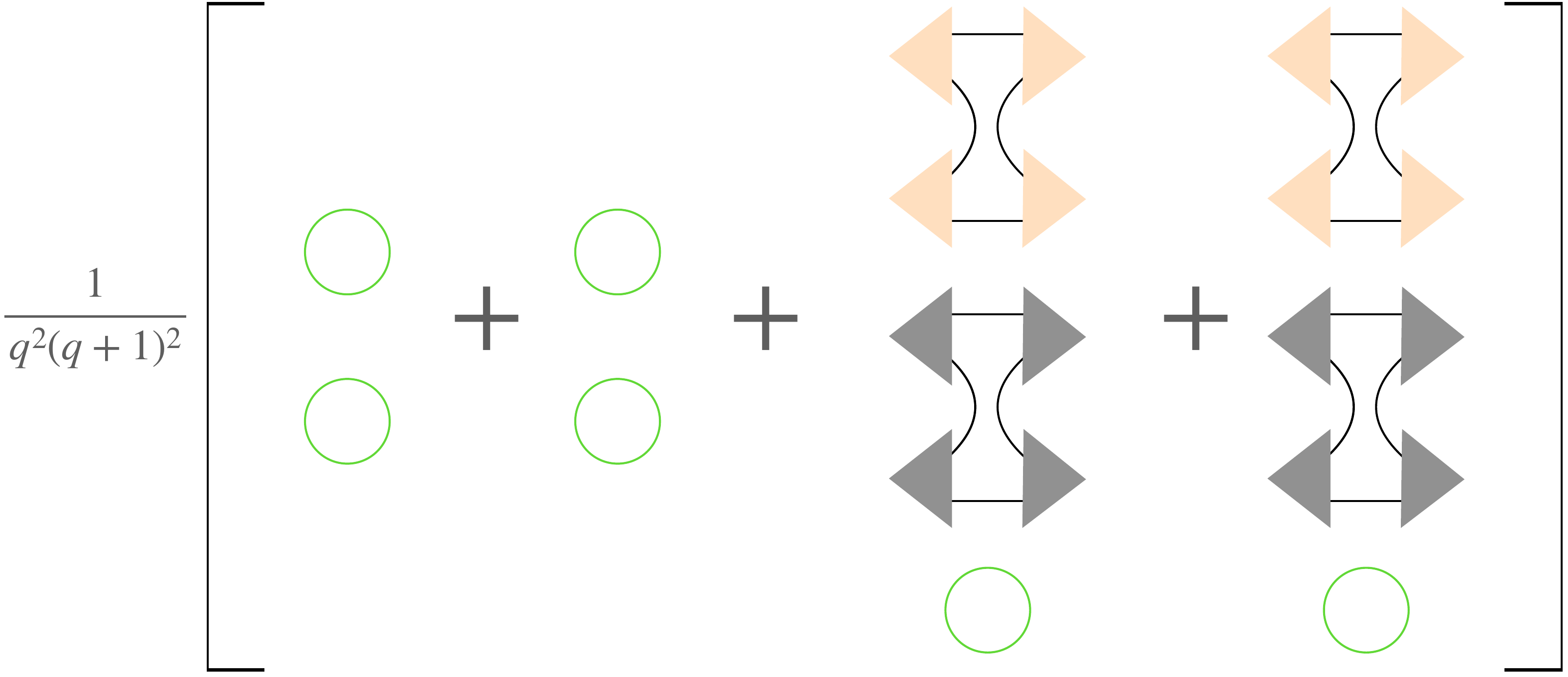}}\\
    &\mathrel{\vcenter{\hbox{$=$}}}
\frac{2(b^2+b\tr_1[\rho_1^2]\tr_1[\varsigma_1^2])}{q^2(q+1)^2} \nonumber \mathrel{\vcenter{\hbox{$\leq$}}}
    \frac{2(b^2+b)}{q^2(q+1)^2} .\nonumber
\end{align}
\end{widetext}
Here we have added a factor $\frac{1}{q(q+1)}$ for each tensor and then found a sum of four terms that, using the identity/flip state formalism, can be written as
\begin{align}
\label{EQ_W4_EX_NUM}
    &\bra{\rho}^{\otimes 2}\bra{\varsigma}^{\otimes 2}\bra{\chi_{1,2}}^{\otimes 2} \left( \ket{1}\ket{1} + \ket{F}\ket{1} + \ket{1}\ket{F} + \ket{F}\ket{F}\right) \nonumber\\
    &=  \braket{1}{1} + \braket{F}{F} + \left( \braket{1}{F} + \braket{F}{I} \right) \tr_1[\rho_1^2] \tr_1[\varsigma_1^2] 
    \nonumber\\
    &\leq 2(b^2 + b)\ ,
\end{align}
where we have transformed the bulk and boundary density matrices $\rho=\ketbra{\Phi}{\Phi}$ and $\varsigma = \ketbra{\phi}{\phi}$ into a state-vector representation $\ket{\rho} = \ket{\Phi}\ket{\Phi^\dagger}$ and $\ket{\varsigma} = \ket{\phi}\ket{\phi^\dagger}$, and $\bra{\chi_{1,2}}$ represents a reduced density matrix $\chi_{1,2}=\ketbra{1\, 2}{1\, 2}$ of an EPR state vector $\ket{1\, 2}$ effecting the contraction between tensor 1 and 2.
In the second step, we have introduced the reduced density matrices as traces of $\rho$ and $\varsigma$ over the second site,  
\begin{align}
    \rho_1 &= \tr_2[\rho] \ , &
    \varsigma_1 &= \tr_2[\varsigma] \ .
\end{align}
These can alternatively be written in terms of the second R\'enyi entropy, e.g.,\ $\tr_1[\rho_1^2] = e^{-2S_2(\rho_1)}$. The R\'enyi entropies are non-negative, which immediately implies that these traces are upper-bounded by $1$, a fact that is used in the last step of \eqref{EQ_W4_EX_NUM}.
This inequality is saturated if both the bulk state vector $\ket{\Phi}$ and boundary state vector $\ket{\phi}$ are products, i.e., no entanglement is present between different sites.
We have committed a slight abuse of notation from the first to second step of \eqref{EQ_W4_EX_NUM}, as in the first each $\ket{1}$ and $\ket{F}$ act on a Hilbert space of dimension $q^4$, whereas in the second, the dimension is $b^4$ (with the $\ket{1,F}$ only acting on internal edges of bond dimension $b$). The inner products were then resolved via \eqref{EQ_ID_FLIP_INNER}.

We now move beyond our example to the general case of this fourth-power overlap, again utilizing the identity/flip formalism.
Avoiding any further abuse of notation, we define for each vertex $k$, with $m_k$ internal and $l_k-m_k$ external legs, the states
\begin{equation}
    \ket{1,F}_k \coloneqq \ket{1,F}_\text{int}^{\otimes m_k} \otimes \ket{1,F}_\text{phys}^{\otimes (l_k-m_k)} \otimes \ket{1,F}_\text{log} \ ,
\end{equation}
written as tensor products over internal, physical, and logical bonds. Contractions will only lead to inner products between internal bonds, which we can resolve via
\begin{align}
    \braket{1}{1}_\text{int} &=\braket{F}{F}_\text{int}= b^2 \ , & \braket{1}{F}_\text{int} &= b \ ,
\end{align}
where $b=b_{\langle j,k \rangle}$ is the bond dimension of each internal bond, possible varying with each edge $\langle j,k \rangle$ adjacent to the $k$-th vertex.
With this definition, we can then write the general 
fourth-power overlap as
\begin{align}
\label{EQ_G_OVERLAP4}
    \mathbb{E} [ |\braket{\psi}{\phi}|^4 ] = \bra{\rho}^{\otimes 2}\bra{\varsigma}^{\otimes 2} \prod_{\langle i,j \rangle \in E_\text{int}} \bra{\chi_{i,j}}^{\otimes 2} \, \prod_{k=1}^{n_\text{v}} \frac{\ket{1}_k + \ket{F}_k}{q_k(q_k+1)} \ .
\end{align}
The first product in this expression contains the EPR pairs with which each tensor contraction over internal edges is performed, and the second a term \eqref{EQ_WEINGARTEN_STATE_4} for each tensor.
Resolving each inner product between identity and flip states via \eqref{EQ_ID_FLIP_INNER} is then equivalent to summing terms in an Ising model on a graph $G$ made from the vertices (tensors) and internal edges (contractions): Replacing $\ket{1}$ and $\ket{F}$ on each vertex with classical spins $\sigma_k = \pm 1$, we sum over all spin configurations and take a product of partition function factors over all edges $\langle i,j\rangle$ ($b^2$ for aligned spins and $b$ for anti-aligned ones), which can equivalently be written as an action for a classical Hamiltonian.
Including the expected normalization \eqref{EQ_G_NORM}, we write
\begin{align}
    \frac{\mathbb{E} [ |\braket{\psi}{\phi}|^4 ]}{(\mathbb{E} [ \braket{\psi} ])^2} 
    = \frac{\bra{\rho}^{\otimes 2}\bra{\varsigma}^{\otimes 2}}{a^{2 n}} &\prod_{\langle i,j \rangle \in E_\text{int}} \frac{\bra{\chi_{i,j}}^{\otimes 2}}{b_{\langle i,j \rangle}^2} \, \prod_{k=1}^{n_\text{v}} \frac{\ket{1} + \ket{F}}{1 +  \frac{1}{q_k}} \nonumber\\
    =\frac{1}{a^{2 n}}\prod_{k=1}^{n_\text{v}}   \frac{1}{1 +  \frac{1}{q_k}} 
    &\sum_{\sigma_k \in \{-1,1\}^{n_\text{v}}} \tr_a[\rho_a^2] \tr_A[\varsigma_A^2]  \nonumber\\
    \cdot&\prod_{\langle i,j \rangle \in E_\text{int}} 
    \begin{cases}
        1 & \text{if } \sigma_i=\sigma_j, \\
        \frac{1}{b} & \text{if } \sigma_i \neq \sigma_j,
    \end{cases}
\end{align}
where we have defined $\rho_a$ and $\varsigma_A$ as the reduced density matrices over the bulk and boundary-adjacent subregions $a=a(\sigma)$ and $A=A(\sigma)$ with spin-up configurations (or by symmetry of the pure-state second Renyi entropy, spin-down). These are again strictly-upper bounded by $1$ (when $\ket{\Phi}$ and $\ket{\phi}$ are product states), leading us the bound
\begin{align}
    \frac{\mathbb{E} [ |\braket{\psi}{\phi}|^4 ]}{(\mathbb{E} [ \braket{\psi} ])^2}  \leq\frac{1}{a^{2 n}}\prod_{k=1}^{n_\text{v}} \frac{1}{1 +  \frac{1}{q_k}}   \underbrace{\sum_{\sigma_k \in \{-1,1\}^{n_\text{v}}} 
    \prod_{\langle i,j \rangle \in E_\text{int}} b^{\frac{\sigma_i \sigma_j-1}{2}} }_{=Z_{\log\sqrt{b}}} \ ,
\end{align}
which, including the additional $a^n$ factor, is equal to \eqref{EQ_EFF_DIM}. \qed

In order to turn the upper bound computed here into an equality, one would have to extend the Ising Hamiltonian with a term proportional to the second Renyi entropies $S_2(\rho_A)$ and $S_2(\varsigma_A)$, 
which is generally not solvable for arbitrary bulk and boundary states.
However, for sufficiently small entanglement in either state we can assume the bound to be somewhat tight.

Lemma \ref{LEM_EFF_DIM} immediately implies that an RTN on a fixed graph $G$ equilibrates for large boundary dimension:

\begin{cor}[RTN equilibration limit]
\label{COR_RTN_EQUI_LIMIT}
    In the limit of large physical dimension $a$, an RTN exhibits perfect equilibration, i.e., 
    \begin{equation}
        \lim_{a \to \infty} \frac{1}{D_\text{eff}} = 0 \ ,        
    \end{equation}
    for any (finite or infinite) choice of bond and logical dimensions $b,d \geq 1$.
\end{cor}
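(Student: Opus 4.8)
The plan is to read the corollary directly off the upper bound already established in Lemma~\ref{LEM_EFF_DIM}, since the entire dependence on the physical dimension $a$ sits in explicit prefactors that can be isolated and estimated one at a time. Starting from
\begin{equation}
    \frac{1}{D_\text{eff}} \leq \frac{Z_{\log\sqrt{b}}}{a^{n}} \prod_{k=1}^{n_\text{v}} \frac{1}{1+\frac{1}{q_k}} \ ,
\end{equation}
I would separate the right-hand side into three pieces — the rescaled Ising partition function $Z_{\log\sqrt{b}}$, the product over tensors, and the explicit $1/a^{n}$ — and argue each in turn.

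First I would observe that $Z_{\log\sqrt{b}}$ carries no dependence on $a$ whatsoever: it is a finite sum over the $2^{n_\text{v}}$ classical spin configurations whose edge weights are fixed entirely by the bond dimensions $b_{\langle i,j\rangle}$ and the graph geometry. With the stated normalization, each edge contributes a factor $1$ when its spins are aligned and $1/b_{\langle i,j\rangle}\le 1$ when they are anti-aligned, so every term is at most $1$ and hence $Z_{\log\sqrt{b}}\le 2^{n_\text{v}}$ uniformly. Next I would note that each factor $1/(1+1/q_k)$ is at most $1$, because every total dimension satisfies $q_k\ge 1$; the full product is therefore bounded above by $1$ and can only drive the bound downward, never up. Combining these, the right-hand side is at most $Z_{\log\sqrt{b}}/a^{n}$, a constant independent of $a$ divided by $a^{n}$ with $n\ge 1$, so letting $a\to\infty$ forces both the bound and $1/D_\text{eff}$ to zero.

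The one point genuinely requiring care — and the step I would flag as the substantive one — is the claim that the argument survives \emph{infinite} bond or logical dimension. For infinite $d$ the factors $1/(1+1/q_k)$ merely approach $1$ from below, leaving the estimate untouched. For infinite $b$ I would take the $b\to\infty$ limit of $Z_{\log\sqrt{b}}$ first: every anti-aligned weight $1/b$ then vanishes, and since $G$ is connected only the two fully aligned configurations survive, giving $Z_{\log\sqrt{b}}\to 2$, which is still finite. Thus in every case the $a$-independent prefactor stays bounded while $1/a^{n}\to 0$, which is precisely the content of the corollary. The proof is short; its only genuine obstacle is verifying that the Ising prefactor cannot diverge as the other dimensions are sent large, so that the decay $1/a^{n}$ is never overwhelmed.
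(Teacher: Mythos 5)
Your proof is correct and follows essentially the same route as the paper's: you bound $Z_{\log\sqrt{b}}\leq 2^{n_\text{v}}$ (its value at $b=1$, since only zero or negative powers of $b$ appear), bound each factor $1/(1+\frac{1}{q_k})$ by $1$, and conclude $1/D_\text{eff}\leq 2^{n_\text{v}}/a^{n}\to 0$. Your additional explicit check that the prefactor remains finite in the limits $b\to\infty$ (where $Z_{\log\sqrt{b}}\to 2$) and $d\to\infty$ is sound but redundant given the uniform bound, and the paper indeed treats those limits separately in the discussion following the corollary rather than inside the proof.
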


\noindent
\textit{Proof.} 
We first note that the partition function $Z_{\log\sqrt{b}}$ only contains zero or negative powers of the internal bond dimension $b$, and can therefore be always upper-bounded by its value for $b=1$, given by $Z_{0} = 2^{n_v}$.
It then follows from \eqref{EQ_EFF_DIM} that
\begin{align}
    \frac{1}{D_\text{eff}} \leq \frac{2^{n_v}}{a^{n}} \ ,
\end{align}
and hence for fixed $n_v$ and $n \geq 1$, $\lim_{a \to \infty} \frac{1}{D_\text{eff}} = 0$. \qed

We can also consider the limit of \eqref{EQ_EFF_DIM} at large bond dimension $b$, where the ``zero-temperature limit'' $\lim_{b\to\infty}Z_{\log\sqrt{b}}=2$ of the Ising partition function implies that 
\begin{equation}
\label{EQ_RTN_EFF_DIM_INF_B}
    \lim_{b\to\infty} \frac{1}{D_\text{eff}} = \frac{2}{a^{n}} \ ,
\end{equation}
if we assume that each vertex $k$ is connected to at least one internal bond, leading to $q_k\to\infty$ and a simplification of the product term in \eqref{EQ_EFF_DIM}. 
Similarly, if the logical (bulk) dimension $d$ on each tensor becomes large, this product terms becomes unity and we find
\begin{equation}
\label{EQ_RTN_EFF_DIM_INF_D}
    \lim_{d\to\infty} \frac{1}{D_\text{eff}} = \frac{Z_{\log\sqrt{b}}}{a^{n}} \ .
\end{equation}
In fact, we find that increasing the logical dimension produces weaker equilibration, up the bound \eqref{EQ_RTN_EFF_DIM_INF_D}:
\begin{cor}[RTN equilibration dependence on logical dimension]
\label{COR_RTN_LOG_DIM}
    The inverse effective dimension of an RTN for unentangled bulk and boundary reference state vectors $\ket{\Phi}$ and $\ket{\phi}$ increases monotonically with the logical dimension $d_k$ on any vertex $k$.
\end{cor}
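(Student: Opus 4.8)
The plan is to work directly with the equality form of Lemma~\ref{LEM_EFF_DIM}, which is available precisely because the corollary restricts to unentangled reference state vectors $\ket{\Phi}$ and $\ket{\phi}$. For such states the bound \eqref{EQ_EFF_DIM} is saturated, so that
\begin{equation}
\frac{1}{D_\text{eff}} = \frac{Z_{\log\sqrt{b}}}{a^{n}} \prod_{k=1}^{n_\text{v}} \frac{1}{1+\frac{1}{q_k}} \ .
\end{equation}
First I would isolate the dependence on the logical dimension $d_k$ of a single chosen vertex $k$. From the definition \eqref{EQ_DIM_TENSOR}, the total dimension $q_k = a^{l_k-m_k}\, d_k \prod_{i=1}^{m_k} b_{k,i}$ is \emph{linear} in $d_k$, with a positive proportionality constant independent of $d_k$. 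Crucially, none of the remaining ingredients carry any $d_k$-dependence: the prefactor $a^n$ is fixed, the partition function $Z_{\log\sqrt{b}}$ depends only on the internal bond dimensions $b_{\langle i,j\rangle}$, and for every other vertex $j \neq k$ the total dimension $q_j$ involves only its own logical dimension $d_j$. Hence the entire $d_k$-dependence of $1/D_\text{eff}$ is concentrated in the single factor $1/(1+1/q_k)$.

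Second, I would establish the monotonicity of that factor. Writing $q_k = C_k\, d_k$ with $C_k > 0$ constant, the factor becomes $C_k d_k/(C_k d_k + 1)$, whose derivative with respect to $d_k$ is $C_k/(C_k d_k + 1)^2 > 0$. Equivalently, $1/(1+1/q) = 1 - 1/(q+1)$ is strictly increasing in $q$, and $q_k$ is strictly increasing in $d_k$. Since every other factor in the product is positive and constant in $d_k$, the full product -- and therefore $1/D_\text{eff}$ -- increases strictly monotonically with $d_k$, which is the claim (and matches the interpretation that larger bulk dimension yields weaker equilibration).

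I do not expect a genuine obstacle here; once the equality form of the lemma is invoked, the argument reduces to a one-variable monotonicity computation. The only point requiring care is the appeal to saturation: the inequality in Lemma~\ref{LEM_EFF_DIM} is in general strict, and the second-R\'enyi-entropy factors $\tr[\rho_a^2]\,\tr[\varsigma_A^2]$ appearing in its proof depend on the geometry and could in principle reintroduce a $d_k$-dependence through the entanglement of $\ket{\Phi}$. Restricting to unentangled $\ket{\Phi}$ and $\ket{\phi}$ sets exactly these factors to unity, leaving the clean equality above and thereby justifying the monotonicity conclusion.
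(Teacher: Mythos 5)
Your proof is correct and follows essentially the same route as the paper: both invoke the saturation of the bound in Lemma~\ref{LEM_EFF_DIM} for product $\ket{\Phi}$ and $\ket{\phi}$, observe that the entire $d_k$-dependence sits in the single factor $1/(1+1/q_k)$ with $q_k$ linear in $d_k$, and conclude by monotonicity. The only cosmetic difference is that the paper phrases the final step as a ratio of two instances, $D_\text{eff}/D_\text{eff}^\prime = (q_k+1)/(q_k+d_k/d_k^\prime)>1$, whereas you differentiate the factor in $d_k$ -- the same one-variable monotonicity fact.
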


\noindent
\textit{Proof.} 
We assume the initial RTN has inverse effective dimension $D_\text{eff}$ and logical dimension $d_k$ on the $k$-th vertex. We then construct a new RTN with inverse effective dimension $D_\text{eff}^\prime$ and logical dimension $d_k^\prime>d_k$ on that vertex. From \eqref{EQ_EFF_DIM}, which becomes an equality for product state vectors $\ket{\Phi}$ and $\ket{\phi}$, we then find
\begin{align}
    \frac{D_\text{eff}}{D_\text{eff}^\prime} = \frac{q_k+1}{q_k+\frac{d_k}{d_k^\prime}} > 1 \ ,
\end{align}
and hence $1/D_\text{eff}^\prime > 1/D_\text{eff}$. \qed

\subsection{Equilibration under RTN deformations}

We now consider how modifications of the RTN geometry change the effective dimension. As the elementary deformation step, we consider \emph{vertex fusion}, where we take two 
vertices (random tensors) $j$ and $k$ connected by an edge $\langle j,k \rangle$ and replace both by a single vertex connected to 
all vertices that $j$ and $k$ have  previously been connected to (excluding 
$\langle j,k \rangle$)
\begin{equation}
    \begin{gathered}
    \includegraphics[width=0.35\textwidth]{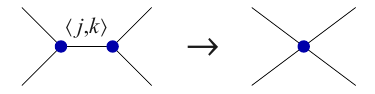}
    \end{gathered}.
\end{equation}
The new tensor is then generated by 
a Haar-random unitary of dimension 
\begin{equation}
q=\frac{q_j q_k}{b_{\langle j,k \rangle}},
\end{equation}producing more non-local randomness than the previous two unitaries of dimension $q_j$ and $q_k$. We can formalize the effect on equilibration as follows. 

\begin{lem}[Effective dimension growth under vertex fusion]
\label{LEM_RTN_EFF_DIM_GROWTH}
    Consider an RTN with $n_v \geq 2$ vertices, $n_\text{e,int} \geq 1$ internal edges, and effective dimension $D_\text{eff}$ for unentangled bulk and boundary reference state vectors $\ket{\Phi}$ and $\ket{\phi}$. Fusing any pair of connected vertices into a single vertex yields a new RTN with $n_v-1$ vertices and $n_\text{e,int}-1$ internal edges with effective dimension $D_\text{eff}^\prime$ satisfying
    \begin{equation}
        D_\text{eff} \to D_\text{eff}^\prime > D_\text{eff} \ .
    \end{equation}
\end{lem}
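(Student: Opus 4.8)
\noindent
\textit{Proof proposal.} The plan is to work directly from the product-state equality version of \eqref{EQ_EFF_DIM}, which applies since the hypothesis fixes unentangled $\ket{\Phi}$ and $\ket{\phi}$. Under fusion of $j$ and $k$ the boundary dimension $a^{n}$ is unchanged, the two per-vertex weights $\tfrac{1}{1+1/q_j}$ and $\tfrac{1}{1+1/q_k}$ collapse into a single weight $\tfrac{1}{1+1/q}$ for the merged tensor of total dimension $q=q_jq_k/b_{\langle j,k\rangle}$, and the Ising partition function changes $Z_{\log\sqrt{b}}\to Z'_{\log\sqrt{b}}$ as the edge $\langle j,k\rangle$ is deleted. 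First I would isolate the resulting ratio
\[
\frac{1/D'_{\text{eff}}}{1/D_{\text{eff}}}=\frac{Z'}{Z}\cdot\frac{(q_j+1)(q_k+1)}{q_jq_k+b}\ ,
\]
so that the claim $D'_{\text{eff}}>D_{\text{eff}}$ is equivalent to the single scalar inequality $Z'(q_j+1)(q_k+1)<Z\,(q_jq_k+b)$. Here $Z'<Z$ works against us while the dimension factor exceeds $1$, so the two effects must be weighed quantitatively.

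The key structural step is to relate $Z'$ to $Z$. I would argue that deleting the edge $\langle j,k\rangle$ and identifying the two Ising spins $\sigma_j,\sigma_k$ is precisely the restriction of the original sum to \emph{aligned} configurations: on any configuration with $\sigma_j=\sigma_k$ the deleted bond contributes the trivial factor $1$, whence $Z'=Z|_{\sigma_j=\sigma_k}$. Splitting the original sum gives $Z=Z|_{\sigma_j=\sigma_k}+Z|_{\sigma_j\neq\sigma_k}$, where the anti-aligned block carries the extra factor $1/b$ from the (now absent) bond. Writing $R_{=}$ and $R_{\neq}$ for the aligned and anti-aligned weights of the reduced graph $\tilde G=G\setminus\langle j,k\rangle$, this is $Z'=R_{=}$ and $Z=R_{=}+R_{\neq}/b$, so the target reduces to the lower bound $R_{\neq}/R_{=}>\tfrac{b(q_j+q_k+1-b)}{q_jq_k+b}=:T$, equivalently to the \emph{upper} bound $\langle\sigma_j\sigma_k\rangle_{\tilde G}=(R_{=}-R_{\neq})/(R_{=}+R_{\neq})<(1-T)/(1+T)$ on the two-point function of the reduced ferromagnetic Ising model.

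As a sanity check I would first treat $n_v=2$ with a single bond: then $\tilde G$ has no edges, $R_{=}=R_{\neq}$, and the inequality collapses cleanly to $(q_j-b)(q_k-b)>0$. This holds strictly whenever each fused tensor carries at least one further (boundary, bulk, or internal) leg, i.e. $q_j,q_k>b$; it exhibits the mechanism and pins down the strictness condition (the bound degenerates to equality for a pendant tensor of dimension exactly $b$). For a general graph I would substitute $R_{=},R_{\neq}$ back, using the ferromagnetic structure of $Z_{\log\sqrt b}$ (aligned bonds are favoured, so $R_{\neq}\leq R_{=}$) together with the size constraints $q_j,q_k\geq b$ inherited from \eqref{EQ_DIM_TENSOR}, and the exact symmetries $Z^{(++)}=Z^{(--)}$, $Z^{(+-)}=Z^{(-+)}$ among the partition functions with $\sigma_j,\sigma_k$ pinned.

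I expect the last step to be the main obstacle. A crude convexity (Jensen) estimate of $R_{\neq}/R_{=}$ in terms of the internal degrees of $j$ and $k$ is too lossy: it discards constant and even polynomial-in-$b$ factors and already fails to certify $R_{\neq}/R_{=}>T$ once both tensors carry several internal legs (e.g. for two vertices each with three internal bonds). The real difficulty is that the environment couplings can drive $\langle\sigma_j\sigma_k\rangle_{\tilde G}$ arbitrarily close to $1$ at large $b$, so one must control this correlation tightly enough to keep it below the dimension-dependent threshold $(1-T)/(1+T)$ rather than merely below $1$. The plan is therefore to replace the generic convexity bound by a correlation inequality tailored to the coupling $J=\log\sqrt b$ and to the pinned-symmetry relations above, so as to recover the genuine margin visible in small cases (for the triangle the exact computation yields a surplus proportional to $(b-1)^3>0$), and then to argue that this margin persists for arbitrary connected environments.
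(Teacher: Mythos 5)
Your setup reproduces the paper's reduction exactly: you form the ratio of inverse effective dimensions from \eqref{EQ_EFF_DIM}, identify $Z'$ with the aligned-spin ($\sigma_j=\sigma_k$) restriction of $Z$, and reduce the lemma to a lower bound on the anti-aligned/aligned weight ratio of the reduced graph --- in the paper's notation $\Delta Z/Z'$, related to yours by $\Delta Z/Z' = R_{\neq}/(b\,R_{=})$. But you stop precisely where the proof's only nontrivial content lies, and you say so yourself: the general-graph lower bound on $R_{\neq}/R_{=}$ is left as a ``plan'' for an unspecified correlation inequality. This is a genuine gap, and the paper closes it in App.~\ref{APP:PROOF1} by a mechanism quite different from the one you are searching for: not a global bound on $\langle\sigma_j\sigma_k\rangle_{\tilde G}$, but a term-by-term comparison. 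One decomposes the environment partition function $Z_{\setminus j,k}$ by the spin weights $(w_j,w_k)$ on the vertices adjacent to $j$ and $k$, pairs configurations using the global spin-flip symmetry, and notes that each environment factor $Z_{\setminus j,k}(w_j,w_k)$ enters both $Z'$ and $\Delta Z$ multiplied only by explicit powers of $1/b$; since the unknown environment factors cancel in each paired ratio, the worst case is identified exactly (it is $w_j=w_k=0$, i.e., your fully aligned environment with $\langle\sigma_j\sigma_k\rangle_{\tilde G}$ near $1$), yielding \eqref{EQ_Z_RELATIONS_GENERAL}, $\Delta Z/Z' \geq (q_j+q_k)/(b_{\langle j,k\rangle}^2+q_jq_k)$. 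This bound is tight --- your triangle computation saturates it, since there $\Delta Z/Z' = 2/(1+b^2)$ --- so the regime you flag as the obstacle is handled automatically rather than needing to be excluded.

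Two further points. First, had you possessed \eqref{EQ_Z_RELATIONS_GENERAL}, your own reduction would finish: substituting it into your threshold $T$ leaves the surplus $(b-1)(q_j-b)(q_k-b)$, which is exactly the strictness condition $q_j,q_k>b$ you extracted from the $n_v=2$ case (and your $(b-1)^3$ triangle margin is the special case $q_j=q_k=b^2$). Second, beware a normalization subtlety you inherited from the main text: you took the fused dimension $q_{j\circ k}=q_jq_k/b$, but the paper's displayed algebra uses the denominator $b_{\langle j,k\rangle}^2+q_jq_k$, consistent with $q_{j\circ k}=q_jq_k/b^2$ (both contracted legs removed). With the $b^2$ normalization the margin is $b_{\langle j,k\rangle}^2-1>0$ for all $b\geq 2$ \emph{independently} of $q_j,q_k$, so the degenerate corner $q_j=b$ that breaks strictness in your version never arises. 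In summary: correct framing and correct small-case computations, but the decisive estimate --- the paper's Appendix \ref{APP:PROOF1} bound --- is missing, and without it the lemma is not proved.
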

\noindent
\textit{Proof.} Assume we are fusing the connected vertices $j$ and $k$. From \eqref{EQ_EFF_DIM}, the ratio between the effective dimensions of both RTNs is given by
\begin{align}
\label{EQ_EFF_DIM_FRAC}
    \frac{D_\text{eff}}{D_\text{eff}^\prime} &= \frac{Z^\prime\, ( 1 + \frac{1}{q_j} ) ( 1 + \frac{1}{q_k} )}{Z\, ( 1 + \frac{1}{q_{j \circ k}} )}
\end{align}
where we write $Z \coloneqq Z_{\log\sqrt{b}}$ and $Z^\prime \coloneqq  Z_{\log\sqrt{b}}^\prime$ as a shorthand for the original and fused Ising partition functions, and $q_{j \circ k} \coloneqq  \frac{q_j q_k}{b_{\langle j,k \rangle}}$ for the dimension of the final vertex, $b_{\langle j,k \rangle}$ being the bond dimension of the edge connecting the two initially fused vertices.
Taking a closer look at the partition functions of each graph, we find that $Z^\prime$ contains exactly those terms of $Z$ with aligned classical spins $\sigma_j = \sigma_k$ (with unit energy along the edge $\langle j,k \rangle$), while $\Delta Z \coloneqq Z - Z^\prime \leq Z^\prime$ contains those with anti-aligned spins (suppressed by a $1/b_{\langle j,k \rangle}$ factor).
Using \eqref{EQ_Z_RELATIONS_GENERAL} as derived in App.~\ref{APP:PROOF1}, we then bound
\begin{align}
     \frac{D_\text{eff}}{D_\text{eff}^\prime} &= \frac{( 1 + q_j ) ( 1 + q_k )}{\left(1 + \frac{\Delta Z}{Z^\prime} \right) \left( b_{\langle j,k \rangle}^2 + q_j q_k \right)}  \nonumber\\
     &\leq \frac{( 1 + q_j ) ( 1 + q_k )}{\left(1 + \frac{q_j + q_k}{b_{\langle j,k \rangle}^2 + q_j\, q_k} \right) \left( b_{\langle j,k \rangle}^2 + q_j q_k  \right)} \nonumber\\
     &= \frac{( 1 + q_j ) ( 1 + q_k )}{ \underbrace{b_{\langle j,k \rangle}^2 - 1}_{\geq 0} + ( 1 + q_j ) ( 1 + q_k )}   
     \leq 1 \ . 
\end{align}
Hence $D_\text{eff} \leq D_\text{eff}^\prime$, where equality can only be reached for $b_{\langle j,k \rangle}=1$, i.e., when fusing two vertices that were previously unconnected by an edge.
\qed

We can now derive a bound from this result by considering the case where all vertices of a any given RTN are successively fused together, until only a single random tensor of high dimension -- typically interpreted as a ``black hole'' in holographic tensor network models -- remains: 
\begin{cor}[Bound on effective dimension]
\label{COR_EFF_DIM_ABS_LOWER_BOUND}
    The inverse effective dimension of any RTN state on $n$ physical sites for unentangled bulk and boundary reference state vectors $\ket{\Phi}$ and $\ket{\phi}$ can be lower-bounded by 
    \begin{equation}
    \label{EQ_EFF_DIM_MIN}
         \frac{1}{D_\text{eff}} \geq \frac{2}{a^{n} + 1} \ ,
    \end{equation}
    which is achieved by a single random tensor with $n$ legs.
\end{cor}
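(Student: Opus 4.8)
The plan is to show that the single random tensor is the global minimizer of $1/D_\text{eff}$ over all RTNs on $n$ physical sites, by collapsing an arbitrary RTN onto it through repeated vertex fusion and invoking the monotonicity established in Lemma \ref{LEM_RTN_EFF_DIM_GROWTH}. First I would exploit the fact that the underlying graph $G$ in Definition \ref{DEF_RTN} is connected: choosing a spanning tree of $G$ and contracting its $n_v-1$ edges one at a time, each contraction is a legitimate vertex fusion (its two endpoints are joined by an internal edge), and after all of them only a single vertex survives. Since fusion never touches external edges, this terminal tensor retains all $n$ physical legs of dimension $a$.

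At each fusion step, Lemma \ref{LEM_RTN_EFF_DIM_GROWTH} applies because the bulk and boundary reference states $\ket\Phi,\ket\phi$ are assumed unentangled, giving $D_\text{eff} \to D_\text{eff}'$ with $D_\text{eff}'\geq D_\text{eff}$, so that $1/D_\text{eff}$ decreases monotonically (strictly along any edge of bond dimension $b>1$). Chaining the $n_v-1$ steps, the starting RTN therefore obeys $1/D_\text{eff} \geq 1/D_\text{eff}^{\mathrm{single}}$, where the right-hand side is the value for the fully fused single tensor.

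It then remains to evaluate $1/D_\text{eff}^{\mathrm{single}}$ directly from the equality case of Lemma \ref{LEM_EFF_DIM}. A single tensor has no internal edges, so its Ising partition function is simply $Z_{\log\sqrt b} = 2$ (a lone spin with no bonds), and its total dimension \eqref{EQ_DIM_TENSOR} is $q = a^n d$, where $d=\prod_k d_k$ is the product of logical dimensions accumulated during fusion. Substituting into \eqref{EQ_EFF_DIM} yields $1/D_\text{eff}^{\mathrm{single}} = \frac{2}{a^n}\,\frac{1}{1+1/(a^n d)} = \frac{2d}{a^n d + 1}$. The final step is the elementary inequality $\frac{2d}{a^n d + 1} \geq \frac{2}{a^n+1}$, which after cross-multiplication reduces to $d \geq 1$ and therefore always holds; equality forces $d=1$, so the bound \eqref{EQ_EFF_DIM_MIN} is saturated exactly by a single $d=1$ tensor with $n$ legs and product reference states.

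The only genuinely delicate point is the bookkeeping of the logical dimension under fusion: because fusing two tensors multiplies their logical dimensions, the endpoint of the reduction is generically a single tensor with $d=\prod_k d_k>1$ rather than the $d=1$ tensor that saturates the bound. I would therefore stress that one does not need the endpoint to be the literal minimizer---the chain of fusions only has to land at or above the claimed value---and that the extra factor of $d$ in $\frac{2d}{a^n d+1}$ merely pushes the value further above $2/(a^n+1)$, consistent with Corollary \ref{COR_RTN_LOG_DIM}. A minor additional check is the degenerate case in which the RTN is already a single tensor (no internal edges to fuse), which is covered by the same direct computation and gives equality precisely when $d=1$.
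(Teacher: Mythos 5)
Your proposal is correct and follows essentially the same route as the paper: iterate the vertex-fusion monotonicity of Lemma \ref{LEM_RTN_EFF_DIM_GROWTH} down to a single $n$-leg tensor and then evaluate \eqref{EQ_EFF_DIM} directly, where the paper writes the single-tensor value as $1/D_\text{eff} = 2/(a^n + \tfrac{1}{d})$ and minimizes over $d$ at $d=1$. Your extra bookkeeping of the accumulated logical dimension $d=\prod_k d_k$ (and the observation that the fusion endpoint need only lie at or above the claimed bound, via $\tfrac{2d}{a^n d+1}\geq \tfrac{2}{a^n+1}$ for $d\geq 1$) makes explicit a step the paper handles implicitly, but it is the same argument.
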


\noindent
\textit{Proof.} 
By Lemma \ref{LEM_RTN_EFF_DIM_GROWTH}, the inverse effective dimension of an RTN can be iteratively reduced by fusing vertices until only a single tensor with $n$ legs remains.
From \eqref{EQ_EFF_DIM}, its effective dimension is given by 
\begin{align}
    \frac{1}{D_\text{eff}} = \frac{2}{a^{n}+\frac{1}{d}} \ .
\end{align}
It becomes minimal for $d=1$ (no bulk leg), where it reaches the bound \eqref{EQ_EFF_DIM_MIN}. \qed
\\
Note that this implies the simple bound $D_\text{eff} < \frac{1}{2} \dim \mathcal{H}_\text{phys}$, where $\mathcal{H}_\text{phys} = a^n$ is the physical Hilbert space dimensions, for any RTN.

We can now make an interesting comparison: For a large physical dimension $a$ or a large number of sites $n$, the inverse effective dimension of a single random tensor approaches the limit \eqref{EQ_RTN_EFF_DIM_INF_B} of large bond dimension $b$ of \emph{any} RTN. In other words, a sufficiently large RTN at finite physical dimension $a$ can be approximated by a single random tensor in the limit of infinite bond dimension $b$; in that limit, each tensor produces so much randomness so that the RTN geometry ceases to matter. An intuitive explanation for this is that as bulk locality is lost, the boundary state vector becomes more delocalized. 

Note that other graph operations, such as deleting edges (without fusing the vertices they connect) and deleting vertices (along with their connected edges) may increase or decrease $D_\text{eff}$ and thus no strict bounds apply.

We will now define upper and lower bounds for the Ising partition function using a recursive approach to evaluate its graph.
\begin{lem}[Upper and lower bounds for the Ising partition function]
    The partition function $Z(G) \equiv
    Z_{\log\sqrt{\beta}}(G)$ on the graph $G=(V,E_{\text{int}})$ of vertices $V$ and internal edges $E_\text{int}$ can be both upper- and lower-bounded via the partition function $Z(G \setminus k)$, where $G\setminus k$ denotes the graph $G$ with the $k$-th vertex and its adjacent edges are removed, as
\begin{align}
\label{EQ_Z_BOUND_LOWER}
   Z(G) &\geq \left( \frac{1}{b^{\lceil m_k/2 \rceil}} + \frac{1}{b^{\lfloor m_k/2 \rfloor}} \right)\, Z(G\setminus k) \ ,  \\
\label{EQ_Z_BOUND_UPPER}
   Z(G) &\leq \left(1 + \frac{1}{b^{m_k}} \right)\, Z(G\setminus k) \ ,
\end{align}
    where $b$ is the dimension of internal bonds (assumed constant) and $m_k$ denotes the number of internal edges adjacent to the $k$-th vertex, i.e., those connecting the vertex to $G\setminus k$.
    For $m_k=0,1$ these bounds coincide to form the exact equality $Z(G)=(1+\frac{1}{b^{m_k}})Z(G \setminus k)$.
\end{lem}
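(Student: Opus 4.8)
The plan is to carry out a single-spin (vertex-deletion) decomposition of $Z(G)$, isolating the summation over the spin $\sigma_k$ sitting on the vertex we remove. First I would fix an arbitrary spin configuration $\tilde\sigma$ on the vertices of $G\setminus k$. Since every internal edge of $G$ either lies entirely in $G\setminus k$ or is one of the $m_k$ edges adjacent to $k$, the Boltzmann weight factorizes and the sum over $\sigma_k$ separates cleanly,
\begin{equation}
Z(G) = \sum_{\tilde\sigma} W(\tilde\sigma)\, g_k(\tilde\sigma), \qquad g_k(\tilde\sigma) = \sum_{\sigma_k=\pm1}\ \prod_{j\in\mathrm{adj}_k} b^{(\sigma_k\sigma_j-1)/2},
\end{equation}
where $W(\tilde\sigma)$ is precisely the Boltzmann weight of $\tilde\sigma$ for the edges not touching $k$, i.e.\ the weight in $G\setminus k$, so that $Z(G\setminus k)=\sum_{\tilde\sigma}W(\tilde\sigma)$. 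Letting $q=q(\tilde\sigma)$ be the number of neighbors of $k$ carrying spin $-1$, the $\sigma_k=+1$ and $\sigma_k=-1$ terms contribute $b^{-q}$ and $b^{-(m_k-q)}$ respectively, so $g_k(\tilde\sigma)=g(q)$ with
\begin{equation}
g(q) = b^{-q} + b^{-(m_k-q)}, \qquad q\in\{0,1,\dots,m_k\}.
\end{equation}

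The heart of the argument is then a uniform bound on $g(q)$ over the admissible integers $q$. Each summand is of the form $e^{\pm q\ln b}$ up to a constant, so $g$ is a convex function of the real variable $q$ on $[0,m_k]$ that is symmetric under $q\mapsto m_k-q$. Convexity forces the maximum over $[0,m_k]$ to occur at the endpoints, $g(0)=g(m_k)=1+b^{-m_k}$, which gives $g(q)\le 1+b^{-m_k}$. For the lower bound, the unconstrained minimum sits at $q=m_k/2$; restricting to integers, the minimizer is $q=\lfloor m_k/2\rfloor$ (equivalently $\lceil m_k/2\rceil$, by the symmetry $m_k-\lfloor m_k/2\rfloor=\lceil m_k/2\rceil$), yielding $g(q)\ge b^{-\lceil m_k/2\rceil}+b^{-\lfloor m_k/2\rfloor}$.

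Because each weight $W(\tilde\sigma)$ is a product of positive factors ($1$ or $1/b$ with $b\ge1$) and hence non-negative, these pointwise bounds on $g(q)$ can be pulled out of the sum over $\tilde\sigma$, producing exactly \eqref{EQ_Z_BOUND_LOWER} and \eqref{EQ_Z_BOUND_UPPER}. I would close by checking the degenerate cases: for $m_k=0$ the spin $\sigma_k$ is entirely free and $g\equiv2$, while for $m_k=1$ one has $q\in\{0,1\}$ and $g\equiv 1+b^{-1}$; in both instances $g(q)$ is constant in $q$ and the two bounds collapse to the single equality $Z(G)=(1+b^{-m_k})Z(G\setminus k)$.

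The only genuinely delicate point is the lower bound, where one must minimize over integer rather than real values of $q$. The real minimum $2b^{-m_k/2}$ is realized only when $m_k$ is even; for odd $m_k$ the integer minimizer is forced off-center to $\lfloor m_k/2\rfloor$, and it is exactly this shift that produces the asymmetric exponents $\lceil m_k/2\rceil$ and $\lfloor m_k/2\rfloor$ appearing in \eqref{EQ_Z_BOUND_LOWER}. Everything else follows routinely from convexity of $g$ together with the non-negativity of the Boltzmann weights.
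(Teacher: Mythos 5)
Your proof is correct and follows essentially the same route as the paper's: both isolate the sum over the spin $\sigma_k$ at the deleted vertex and bound the resulting single-vertex factor (your $g(q)=b^{-q}+b^{-(m_k-q)}$, the paper's $f_{\alpha,k}$) between $b^{-\lceil m_k/2\rceil}+b^{-\lfloor m_k/2\rfloor}$ and $1+b^{-m_k}$, then pull the bound through the non-negative weights of $Z(G\setminus k)$. The only difference is cosmetic: the paper first groups configurations into $\mathbb{Z}_2$ spin-flip dual classes before bounding, whereas your pointwise bound on $g(q)$ for each fixed configuration $\tilde\sigma$ makes that grouping unnecessary, and your convexity/symmetry argument for the integer extrema of $g$ is a clean justification of the key inequality.
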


\noindent
\textit{Proof.} 
We first notice that the terms in $Z(G\setminus k)$ have a $\mathbb{Z}_2$ symmetry, as flipping all spins $\sigma_i \to -\sigma_i$ (for $i\neq k$) leaves the factors for each edge ($1$ or $1/b$ for aligned/anti-aligned spins) invariant. 
We can thus write 
\begin{equation}
    Z(G\setminus k)=\sum_{\alpha=1}^{2^{m_k-1}} Z_\alpha (G\setminus k) \ ,
\end{equation}
where each term in the sum includes contributions where the $m_k$ vertices adjacent to the $k$-th vertex (assuming no doubling of edges) are in one of the $2^{m_k}$ possible spin states or its spin-flip dual.
The full partition function is then given by
\begin{equation}
\label{EQ_Z_DECOMP_K}
    Z(G) = \sum_{\alpha=1}^{2^{m_k-1}} f_{\alpha,k} \, Z_\alpha (G\setminus k) \ ,
\end{equation}
where $f_{\alpha,k}$ includes additional factors arising from edge weights for anti-aligned spins in each of the two terms. Assuming $b>1$, we can thus bound
\begin{equation}
    \frac{1}{b^{\lceil m_k/2 \rceil}} + \frac{1}{b^{\lfloor m_k/2 \rfloor}} \leq f_{\alpha,k} \leq 1 + \frac{1}{b^{m_k}} \ ,
\end{equation}
which when inserted into \eqref{EQ_Z_DECOMP_K} yields the lower and upper bounds \eqref{EQ_Z_BOUND_LOWER} and \eqref{EQ_Z_BOUND_UPPER}.
 \qed
\\

As a simple example, consider a $L \times L$ square lattice: Removing vertices row-by-row, we find that for each of the first $L-1$ rows involved, we need to remove $L-1$ vertices with $m=2$ legs and one vertex with $m=1$ legs (note that removing the $k$-th vertex also involves all edges still connected to it). In the final row, we remove $L-1$ vertices with $m=1$ legs and one with $m=0$. This leads to the bounds
\begin{align}
\label{EQ_Z_44_BOUND_LOWER}
    Z^{L \times L} &\geq 2 \left( 1 + \frac{1}{b} \right)^{2(L-1)} \left( \frac{2}{b} \right)^{(L-1)^2} \ ,\\
\label{EQ_Z_44_BOUND_UPPER}
    Z^{L \times L} &\leq 2 \left( 1 + \frac{1}{b} \right)^{2(L-1)} \left( 1 + \frac{1}{b^2} \right)^{(L-1)^2} \ .
\end{align}
Together with the absolute lower bound \eqref{EQ_EFF_DIM_MIN} for general geometries, we will now be able to derive that specific tensor network geometries induce a strictly separate hierarchy of equilibration in terms of their effective dimensions.

\subsection{RTN equilibration hierarchy}

We now consider the form of the inverse effective dimension \eqref{EQ_EFF_DIM} for specific RTN geometries with $1$-dimensional boundaries.
Consider a \emph{matrix product state} (MPS), also called \emph{tensor-train state} (TTS), a one-dimensional chain of $n=n_\text{ext}$ tensors with one physical leg per tensor. An MPS can be defined with open or closed/periodic boundary conditions (the latter shown in Fig.\ \ref{fig:ising-tn-geo_eff-dim}(a)), leading to different partition functions.
To avoid confusion with the definition of a \emph{random matrix product state} (RMPS) in Ref.\ \cite{Haferkamp_2021}, where the MPS tensors are chosen as Haar-random unitaries $U:\mathcal{H}_{b} \otimes \mathcal{H}_{d} \to \mathcal{H}_{a} \otimes \mathcal{H}_{b}$ from 
one bond leg and the bulk leg to the boundary leg and the second bond leg, we use a different name for our RTNs on an MPS/TTS geometry.
\begin{dfn}[Random tensor train]
    A \emph{random tensor train} (RTT) is defined as a random tensor network (following Def.~\ref{DEF_RTN}) on an open or closed chain of tensors, each with Hilbert space dimension $a b^2 d$ or $a b d$ (for endpoints of an open chain), i.e., with one physical leg, one bulk leg, and one or two internal legs. 
\end{dfn}
\noindent
Note that at large bond dimension, random tensors approximate \emph{perfect tensors} and thus become approximately unitary for any bipartition into equal-dimensional sets. In this limit, the RTT and RMPS definitions coincide.
For the open case of an RTT with $n_\text{e,int} = n-1$ edges, each possible edge configuration (aligned or anti-aligned) can be fulfilled by two classical spin states, leading to
\begin{align}
    Z_{\log \sqrt{b}}^\text{RTT, open} = \sum_{k=0}^{n-1} \begin{pmatrix}
        n-1 \\
        k
    \end{pmatrix} \frac{2}{b^k}
    = 2 \left( 1 + \frac{1}{b}\right)^{n-1}
    \ ,
\end{align}
where $b$ is the dimension of internal bonds. Periodic boundary conditions imply $n_\text{e,int} = n$ edges but with only an even number of anti-aligned edges allowed. This leads to
\begin{align}
    Z_{\log \sqrt{b}}^\text{RTT, closed} = \sum_{k=0}^{\lfloor n/2 \rfloor} \begin{pmatrix}
        n \\
        2k
    \end{pmatrix}
    \frac{2}{b^{2k}}
    = \left( 1+\frac{1}{b} \right)^n + \left( 1 - \frac{1}{b} \right)^n \ .
\end{align}
Inserting both result into \eqref{EQ_EFF_DIM} then yields
\begin{align}
\label{EQ_EFF_DIM_MPS_OPEN}
    \frac{1}{D_\text{eff}^\text{RTT, open}} &= \frac{2 \left( 1 + \frac{1}{b}\right)^{n-1}}{a^{n}\,  \left( 1 + \frac{1}{a\, d\, b} \right)^2\left( 1 + \frac{1}{a\, d\, b^2} \right)^{n-2} 
    } \  , \\
\label{EQ_EFF_DIM_MPS_CLOSED}
    \frac{1}{D_\text{eff}^\text{RTT, closed}} &= \frac{\left( 1 + \frac{1}{b}\right)^{n}+\left( 1 - \frac{1}{b}\right)^{n}}{a^{n}\, \left( 1 + \frac{1}{a\, d\, b^2} \right)^n
    } \  .
\end{align}
Here $a$ and $d$ are the dimension of physical (boundary) sites and bulk sites, respectively. Most MPS constructions do not include bulk degrees of freedom, and hence set $d=1$. Again, these equalities turn into upper bounds on $1/D_\text{eff}$ for entangled bulk and reference states.
We show an example for $n=5$ sites, $b=a=2$ (spins), $d=1$ (no bulk sites), and closed boundary conditions in Fig.\ \ref{fig:equilibration-intro}(b). There 
we compare $\Delta A_\psi^\infty=1/\sqrt{D_\text{eff}^\text{RTT, closed}} \approx 0.188$ with the 
result for a single RTN sample time-evolved under an Ising Hamiltonian $H_I=\sum_k(X_k X_{{k+1}}+Z_k)$ and a normalized observable $A=X_1$, leading to fluctuations in $\langle A \rangle$ with standard deviation $\sigma_{\langle A \rangle} \approx 0.175$, 
close to the analytical upper bound $\Delta A_\psi^\infty$. Note that the bound is within $10\%$ of being saturated even though the energy eigenstates of $H_I$ are not product states.

Above, we showed in Cor.\ \ref{COR_RTN_EQUI_LIMIT} that every RTN with fixed geometry equilibrates in a \emph{continuum limit} of large physical dimension $a$. With the definition of an RTT above, we are now in a position to show asymptotic equilibration of a class of tensor networks in a \emph{scaling limit} where both the number of boundary sites $n$ and the number of tensors $n_v$ diverges:

\begin{lem}[RTT equilibration]
    In the scaling limit of infinitely many sites $n$, an RTT state equilibrates perfectly, i.e.,
    \begin{equation}
        \lim_{n \to \infty} \frac{1}{D_\text{eff}^\text{RTT}} = 0 \ ,
    \end{equation}
    for both open and closed boundary conditions and any (finite or infinite) choice of bond and logical dimensions $b,d \geq 1$ and a non-trivial physical 
    dimension $a \geq 2$.
\end{lem}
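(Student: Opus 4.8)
The plan is to work directly from the two closed-form expressions \eqref{EQ_EFF_DIM_MPS_OPEN} and \eqref{EQ_EFF_DIM_MPS_CLOSED}, since the lemma only asks about the $n\to\infty$ asymptotics of explicitly known functions. The key observation is that each expression is, up to an $n$-independent prefactor, a sum of $n$-th powers of fixed ratios. For the closed case I would write it exactly as
\begin{equation}
\frac{1}{D_\text{eff}^\text{RTT, closed}} = r_+^{\,n} + r_-^{\,n}, \qquad r_\pm \coloneqq \frac{1 \pm \tfrac{1}{b}}{a\left(1 + \tfrac{1}{adb^2}\right)} \ ,
\end{equation}
so that the whole statement collapses to showing that the dominant base $r_+$ is strictly below $1$. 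Since $b\geq 1$ gives $0 \leq r_- \leq r_+$, controlling $r_+$ controls both geometric terms simultaneously.

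For the open case I would first discard the denominator factor $(1+\tfrac{1}{adb})^2 \geq 1$, which only enlarges the right-hand side, and then re-index the remaining powers to produce the \emph{same} ratio $r_+$:
\begin{equation}
\frac{1}{D_\text{eff}^\text{RTT, open}} \leq \frac{2\,(1+\tfrac1b)^{\,n-1}}{a^{n}\,(1+\tfrac{1}{adb^2})^{\,n-2}} = C\, r_+^{\,n}, \qquad C \coloneqq \frac{2\left(1+\tfrac{1}{adb^2}\right)^2}{1+\tfrac1b} \ ,
\end{equation}
where $C$ is manifestly independent of $n$. Thus both boundary conditions reduce to the single inequality $r_+ < 1$, after which geometric decay forces the limit to vanish.

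The one genuinely nontrivial step is proving $r_+ < 1$. Clearing denominators, this is equivalent to the polynomial inequality $(b+1)\,d\,b < a\,d\,b^2 + 1$, which I would verify by the direct estimate
\begin{equation}
a d b^2 + 1 - (b+1)db = d b^2 (a-1) - db + 1 \geq db^2 - db + 1 = db(b-1) + 1 \geq 1 > 0 \ ,
\end{equation}
where the first inequality uses the hypothesis $a \geq 2$ and the last uses $b \geq 1$. This makes transparent why the assumption of a nontrivial physical dimension $a\geq 2$ is exactly what is needed: it is precisely the ingredient that turns the marginal comparison $1+\tfrac1b \leq 2 \leq a$ into a strict contraction once the denominator factor $1+\tfrac{1}{adb^2}>1$ is included.

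I expect the main difficulty to be bookkeeping rather than conceptual: matching exponents carefully in the open case so that the leftover prefactors are genuinely $n$-independent, and identifying the tightest regime. Since $r_+$ is monotonically increasing in $d$ with $r_+ \to \tfrac1a\bigl(1+\tfrac1b\bigr) \leq \tfrac{2}{a}$ as $d\to\infty$ and $r_+ \to \tfrac1a$ as $b\to\infty$, strict decay holds for every finite choice of $b,d$ with $a\geq 2$; the stated infinite bond- or logical-dimension cases then follow by passing to the corresponding limits of these estimates (using \eqref{EQ_RTN_EFF_DIM_INF_B} where the product term in the base simplifies), with the slowest decay occurring in the small-$a$, small-$b$, large-$d$ corner where $r_+$ approaches unity from below.
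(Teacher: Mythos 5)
Your proof is correct, and it takes a genuinely different (and in fact sharper) route than the paper's. The paper's own proof is a one-liner: it upper-bounds the numerators of \eqref{EQ_EFF_DIM_MPS_OPEN} and \eqref{EQ_EFF_DIM_MPS_CLOSED} by their $b=1$ value $2^n$, lower-bounds the denominators by $a^n$ (discarding all the $(1+\tfrac{1}{q_k})$ factors), and concludes $1/D_\text{eff}^\text{RTT} \leq 2^n/a^n$, claiming this vanishes ``for any $a>1$''. That bound is uniform in $b$ and $d$ and maximally brief, but it is actually insufficient at the boundary case $a=2$ admitted by the lemma, since there $(2/a)^n \equiv 1$; the paper's conclusion really only follows from its displayed bound when $a>2$. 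Your decomposition repairs exactly this: by retaining the factor $\bigl(1+\tfrac{1}{adb^2}\bigr)^n$ and reducing both boundary conditions to the single strict contraction $r_+<1$, verified through the elementary estimate $adb^2+1-(b+1)db = db^2(a-1)-db+1 \geq db(b-1)+1 \geq 1 > 0$ for $a\geq 2$, you obtain genuine geometric decay with an explicit rate $r_+$ for all finite $b,d\geq 1$, including the minimal case $a=2$ where the paper's argument degenerates. Your exponent bookkeeping in the open case (absorbing $(1+\tfrac{1}{adb})^2\geq 1$ and the re-indexing constant $C$) is also correct.

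One caveat concerns your final sentence on the infinite-dimension cases, and here your own monotonicity remark is the key: $r_+ \to \tfrac{1}{a}\bigl(1+\tfrac{1}{b}\bigr)$ as $d\to\infty$, which equals $1$ precisely at $a=2$, $b=1$. In that corner the saturated expression \eqref{EQ_RTN_EFF_DIM_INF_D} gives $1/D_\text{eff} = 2^n/a^n = 1$ identically for the (then disconnected) chain, so the lemma's claim for infinite logical dimension genuinely fails at $(a,b,d)=(2,1,\infty)$ and cannot be rescued by passing to the limit of your estimates; for $d=\infty$ your argument goes through only with $b\geq 2$ (where $r_+ \leq \tfrac{3}{4}$) or $a\geq 3$. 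The $b\to\infty$ case is unproblematic, as \eqref{EQ_RTN_EFF_DIM_INF_B} gives $2/a^n \to 0$ directly. You should therefore state the $d=\infty$ case with that restriction and flag the degenerate corner explicitly --- a defect of the lemma as literally stated, which the paper's cruder proof silently glosses over and which your sharper analysis is the one to expose.
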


\noindent
\textit{Proof. }
The numerator of both \eqref{EQ_EFF_DIM_MPS_OPEN} and \eqref{EQ_EFF_DIM_MPS_CLOSED} can be upper-bounded by $2^n$ (its value for $b=1$), so that for any values of $a,b,d$ we can bound the inverse effective dimension as
\begin{equation}
    \frac{1}{D_\text{eff}^\text{RTT}} \leq \frac{2^n}{a^{n}} \ .
\end{equation}
It follows that $\lim_{n \to \infty} {1}/{D_\text{eff}^\text{RTT}} = 0$ for any $a>1$, which is fulfilled by our assumption $a \geq 2$. \qed

Next, we seek to establish the inverse effective dimension of an RTT as the upper bound of another class of RTNs, those on regular hyperbolic tilings.

\begin{dfn}
    [Regular tilings] A regular two-dimensional tiling, characterized by the Schl\"afli symbol $\{p,q\}$, is a uniform tiling with regular $p$-gons, $q$ of which meet at each vertex. The five configurations of integer $p$ and $q$ for which $\frac{1}{p}+\frac{1}{q} > \frac{1}{2}$ correspond to the surfaces of the five platonic solids (regular polyhedra), while the three cases fulfilling $= \frac{1}{2}$ and the infinitely many cases fulfilling $<\frac{1}{2}$ are known as \emph{flat} and \emph{hyperbolic} planar tilings, respectively.
\end{dfn}

An example of a hyperbolic $\{5,4\}$ tiling (or rather, a small subregion thereof) is given in Fig.~\ref{fig:ising-tn-geo_eff-dim}(c), and a hyperbolic $\{4,5\}$ tiling is shown in Fig~\ref{fig:rt-cuts}(c).

\begin{lem}[Hyperbolic vs.\ RTT equilibration]
\label{LEM_EFF_DIM_HYP_VS_RTT}
    The boundary state of an RTN on a regular $\{p,q\}$ hyperbolic tiling with $p>3$, and a sufficiently large number $n$ of physical boundary sites has an inverse effective dimension that is smaller than that of an RTT with the same boundary and internal bond dimensions, assuming unentangled bulk and boundary reference state vectors $\ket{\Phi}$ and $\ket{\phi}$ in both cases.
\end{lem}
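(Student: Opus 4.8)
The plan is to apply Lemma~\ref{LEM_EFF_DIM} to both networks and reduce the claim to a single comparison. Since the hyperbolic RTN and the RTT are taken with the same physical dimension $a$, the same internal bond dimension $b$, and the same number $n$ of boundary sites, and since the unentangled references $\ket{\Phi},\ket{\phi}$ saturate \eqref{EQ_EFF_DIM}, the common factor $1/a^{n}$ cancels and it suffices to show
\begin{equation}
W^\text{hyp} < W^\text{RTT}, \qquad W \coloneqq Z_{\log\sqrt{b}}\prod_{k=1}^{n_\text{v}}\Big(1+\tfrac{1}{q_k}\Big)^{-1},
\end{equation}
for all sufficiently large $n$. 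For the RTT the factor $W^\text{RTT}$ is known in closed form from \eqref{EQ_EFF_DIM_MPS_OPEN}--\eqref{EQ_EFF_DIM_MPS_CLOSED}, with per-site free energy $\tfrac1n\log W^\text{RTT}\to\log(1+\tfrac1b)-\log(1+\tfrac{1}{adb^2})=\Theta(1/b)$, dominated by the one-dimensional Ising term $\log(1+\tfrac1b)$.

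Next I would fix the combinatorics of the $\{p,q\}$ tiling. With $n_b$ boundary tiles, $n_\text{v}$ total tiles and $n_\text{int}$ internal edges, the handshake identity $p\,n_\text{v}=2n_\text{int}+n$ together with the commensurate exponential growth of bulk and boundary in hyperbolic space fixes $n_\text{v},n_b,n_\text{int}$ as fixed multiples of $n$. Bulk tiles are vertices of coordination $p\ge4$ and dimension $q_k=d\,b^{p}$, so their product factors $(1+b^{-p})^{-1}$ deviate from unity only at order $b^{-p}$ and are negligible. The decisive feature is that, for $p>3$, each boundary tile couples into the rigidly ordered bulk in addition to its two ring neighbours, and carries its remaining edges as physical legs. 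This has two consequences in our favour: the number $n_b$ of boundary tiles grows like $n$ but with a $\{p,q\}$-dependent prefactor that is strictly below one for $p\ge5$, and flipping a boundary domain is no longer free of bulk cost.

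The heart of the proof is to bound the hyperbolic Ising free energy $\tfrac1n\log Z^\text{hyp}$ strictly below the RTT value $\log(1+\tfrac1b)$. The underlying mechanism is holographic: in the one-dimensional chain a flipped arc of \emph{any} length costs only two domain walls, so the entropy of long domains lifts the free energy to $\Theta(1/b)$; on the tiling, flipping a boundary arc forces a flip of the bulk wedge it subtends, at a cost set by the wedge's \emph{minimal cut}, which in hyperbolic geometry grows with the arc length (the same RT scaling discussed earlier). Extended domains are thereby suppressed, the free energy is controlled by the smallest single-tile excitations at order $b^{-2}$ or smaller, so that $\tfrac1n\log Z^\text{hyp}=O(1/b^{2})$. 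Combining this with the sign-definite product term, whose contribution is likewise at most $O(1/b^{2})$, yields $\tfrac1n\log W^\text{hyp}=O(1/b^2)$, parametrically below the $\Theta(1/b)$ rate of the RTT, and hence $W^\text{hyp}<W^\text{RTT}$.

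The hard part will be making this free-energy bound rigorous and uniform in $a,b,d$. The per-vertex recursive bounds \eqref{EQ_Z_BOUND_LOWER}--\eqref{EQ_Z_BOUND_UPPER}, applied by peeling the tiling as in the $L\times L$ example \eqref{EQ_Z_44_BOUND_LOWER}--\eqref{EQ_Z_44_BOUND_UPPER}, do not by themselves suffice: any peeling order eventually strands the boundary tiles as an effectively one-dimensional ring, which saturates the upper bound at $O(1/b)$ and therefore loses precisely the bulk-induced pinning we need. The decisive step is thus a finer estimate that retains the bulk-boundary coupling -- either by integrating out the bulk spins to generate effective long-range ferromagnetic couplings on the boundary and invoking correlation (GKS/Griffiths) inequalities, or by a direct minimal-cut expansion of low-temperature domains. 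The restriction $p>3$ is essential throughout: it guarantees bulk coordination $p\ge4$ (bulk excitations cost at least $b^{-4}$) and the extra boundary-to-bulk links that produce the pinning, both of which degenerate for the triangular case $p=3$; the tightest instance is $p=4$, where the leg-counting margin vanishes and the pinning estimate must carry the inequality on its own.
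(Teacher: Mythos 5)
Your reduction to a per-boundary-site free-energy comparison, and your identification of the RTT rate $\tfrac1n\log W^{\text{RTT}}\to\log(1+\tfrac1b)-\log(1+\tfrac{1}{adb^2})$, match the paper. But the heart of your argument---the claim $\tfrac1n\log Z^{\{p,q\}}=O(1/b^{2})$ from bulk pinning of boundary domains---is asserted rather than proven: you explicitly defer the decisive estimate to ``GKS inequalities or a direct minimal-cut expansion'' without carrying either out, so as written the proposal does not establish the lemma. Moreover, even granting a rigorous Peierls-type bound of that form, an $O(1/b^{2})$-versus-$\Theta(1/b)$ comparison with untracked constants would yield the separation only for sufficiently large $b$; the lemma is wanted at fixed bond dimension, and the paper's proof delivers an explicit threshold ($b\gtrsim 1.968$, hence all integer $b\geq 2$).

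More importantly, your diagnosis that the recursive bounds \eqref{EQ_Z_BOUND_LOWER}--\eqref{EQ_Z_BOUND_UPPER} ``do not by themselves suffice'' because peeling strands the boundary as a one-dimensional ring saturating at $(1+\tfrac1b)$ per site is mistaken---making exactly that peeling work is the paper's proof. Peeling in the layer order dictated by vertex inflation (App.~\ref{APP:PROOF2}), a $\gamma$-type vertex has only \emph{one} remaining edge at its removal (factor $1+\tfrac1b$) while a $\beta$-type has two (factor $1+\tfrac1{b^2}$); combining the quasiperiodic $\beta/\gamma$ frequencies with the code rate $r<1$ of \eqref{EQ_RATE_HOLO_PQ} (each outer tensor carries several physical legs, so $n_v<n$ precisely when $p>3$) gives a per-boundary-site base of at most $(1+\tfrac1b)^{1/2}(1+\tfrac1{b^2})^{(\sqrt3-1)/2}$, the worst case being $p=4$, $q=5$. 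This lies strictly below the RTT base $(b+1)/(b+\tfrac{1}{2b})$ of \eqref{EQ_DEFF_RTT} for $b\geq 2$, so a constant-factor reduction of the exponent on the $(1+\tfrac1b)$ term---supplied by the peeling itself, since only the $\gamma$-removals carry that factor and there are fewer tensors than boundary sites---already closes the argument; no parametric $O(1/b^{2})$ bound is needed. (The $\{p,3\}$ case is even simpler: every removal in a layer retains at least two edges, giving $(1+\tfrac1{b^2})$ per vertex.) Your minimal-cut intuition is qualitatively sound as an explanation of why the true hyperbolic free energy is small, but the step you flagged as the hard open part is exactly the one the paper's elementary bookkeeping renders unnecessary.
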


\noindent
\textit{Proof. }
First consider the scaling of the inverse effective dimension of an RTT with $n$. Both for open and closed boundary conditions (\eqref{EQ_EFF_DIM_MPS_OPEN} and \eqref{EQ_EFF_DIM_MPS_CLOSED}), we find the scaling
\begin{align}
\label{EQ_DEFF_RTT}
    \frac{1}{D_\text{eff}^\text{RTT}} &\sim \frac{1}{a^{n}}\left( \frac{b+1}{b + \frac{1}{a b d}} \right)^n \geq  \frac{1}{a^{n}}\left( \frac{b+1}{b + \frac{1}{2 b}} \right)^n \ , 
\end{align}
where we have used an approximation valid for $n \gg b$ in the first step and provided a lower bound by setting $a=2,d=1$ (no bulk legs, minimal physical dimension) 
within the denominator.
We will now consider RTNs on $\{p,q\}$ tilings (with $p>3$) and show that their inverse effective dimension always is upper-bounded by \eqref{EQ_DEFF_RTT}. 
We begin with the simpler $q=3$ case, described by a dual triangular bulk lattice. Via the recursive upper bound \eqref{EQ_Z_BOUND_UPPER}, we can bound the system's partition function by iteratively removing vertices (and its connected edges) from the graph, each removal adding another factor $1+\frac{1}{b^m}$, $m$ being the number of edges attached to the vertex at that step. For a $\{p,3\}$ tiling to be hyperbolic we require $p\geq 7$, which leads the following geometrical insight: The vertices of the outermost layer are still connected to at least $\lfloor \frac{p}{2} \rceil = 3$ other vertices, and as we start to ``strip away'' this layer, each removed vertex still has two remaining connected edges. This logic only breaks down for the last two vertices of the entire graph (where one and zero edges remain, respectively), thus leading to an upper bound 
\begin{equation}
    Z^{\{p,3\}} \leq 2 \left( 1 + \frac{1}{b} \right) \left( 1 + \frac{1}{b^2} \right)^{n_v-2} \ ,
\end{equation}
Inserted into the definition \eqref{EQ_EFF_DIM} of the inverse effective dimension, we thus find the upper bound 
\begin{align}
    \frac{1}{D_\text{eff}^{\{p,3\}}} \lesssim \frac{1}{a^{n}} \left( \frac{1+\frac{1}{b^2}}{1+\frac{1}{b^p d}} \right)^{n_v} \leq \frac{( 1+\frac{1}{b^2} )^{n_v}}{a^{n}} 
\end{align}
on its scaling with $n$,
where in the last step we have taken the limit of infinite bulk logical dimension $d \to\infty$, leading to the strongest possible scaling with $n_v$.
To relate this to the RTT scaling \eqref{EQ_DEFF_RTT}, we need to relate the number of vertices $n_v$ to the number of physical boundary sites $n$. In the language of quantum codes, $k=n_v$ determines the number of logical qubits, with each qubit (or qudit of dimension $d$) hosted by a single vertex. The ratio between $k$ and $n$ then defines the \emph{rate} associated with the bulk-to-boundary code.
For any $\{p,q\}$ tiling, the rate is given by
\begin{equation}
\label{EQ_RATE_HOLO_PQ}
    r \coloneqq \frac{k}{n} = \frac{1}{\sqrt{(p-2)(p-\frac{2q}{q-2})}} \ ,
\end{equation}
in the asymptotic limit of many layers \cite{Jahn:2025manylogical}. If $p>3$, adding a layer of tensors to a smaller patch of a hyperbolic $\{p,q\}$ tiling always adds fewer logical than physical sites, leading to a rate $r<1$, allowing for an \emph{isometric} code with smaller bulk than boundary dimension. 
It then follows that the scaling of the inverse effective dimension of the $\{p,3\}$ RTN with $n$ can be bounded by
\begin{align}
    \frac{1}{D_\text{eff}^\text{\{p,3\}}} \lesssim \frac{( 1+\frac{1}{b^2} )^{n}}{a^{n}} \leq \frac{1}{a^{n}}\left( \frac{b+1}{b + \frac{1}{2 b}} \right)^n \ ,
\end{align}
where the second inequality holds for any $b \geq 2$, the smallest nontrivial bond dimension. It thus follows that 
\begin{equation}
\frac{1}{D_\text{eff}^\text{\{p,3\}}} \lesssim \frac{1}{D_\text{eff}^\text{RTT}}.
\end{equation}
The case of $\{p,q\}$ tilings with $p,q>3$ follows from a similar geometric argument. As we show in the Appendix \ref{APP:PROOF2}, the partition function can be upper-bounded as
\begin{align}
    Z^{\{p,q\}} &\leq  \left( \left( 1+\frac{1}{b} \right)^{\frac{1}{2}} \left( 1+\frac{1}{b^2} \right)^{\frac{\sqrt{3}-1}{2}} \right)^n \ ,
\end{align}
at sufficiently large $n$.
To prove
\begin{align}
    \frac{1}{D_\text{eff}^{\{p,q\}}} \leq \frac{Z^{\{p,q\}}}{a^{n}} \lesssim \frac{1}{a^{n}}\left( \frac{b+1}{b + \frac{1}{2 b}} \right)^n \ ,
\end{align}
where the first step again involves taking the limit $d\to\infty$ to generate an upper bound, we then merely need to confirm the inequality
\begin{align}
    \left( 1+\frac{1}{b} \right)^{\frac{1}{2}} \left( 1+\frac{1}{b^2} \right)^{\frac{\sqrt{3}-1}{2}} \leq \frac{b+1}{b + \frac{1}{2 b}} \ .
\end{align}
Using numerical methods, one finds that it is satisfied for $b \geq 1.968$, with a strict inequality for larger values of $b$. This confirms that ${1}/{D_\text{eff}^{\{p,q\}}} \lesssim {1}/{D_\text{eff}^\text{RTT}}$ for any RTN on a hyperbolic $\{p,q\}$ tiling with $p>3$, assuming that the tiling can be decomposed into layers of vertices. 
\qed

Note that this proof strategy fails for the $p=3$ case as it leads to rates $r>1$, leading to a much faster scaling of the partition function with $n$, potentially producing a larger inverse effective dimension that a corresponding RTT with the same bond dimension. For the same reason, \emph{flat} graphs (e.g., $p=q=4$, as in \eqref{EQ_Z_44_BOUND_LOWER} and \eqref{EQ_Z_44_BOUND_UPPER}) fail to produce a similar bound, as $n_v \sim n^2$.
%\aj{Explain relationship with critical phases and hyperbolic tilings.}

Having upper-bounded the inverse effective dimension of hyperbolic RTNs by the result for RTTs, we can now also prove that the former equilibrate:
\begin{cor}[Hyperbolic RTN equilibration]
\label{COR_HYP_RTN_EQUI}
    In the scaling limit of infinitely many boundary sites $n$, an RTN state on a regular hyperbolic $\{p,q\}$ tiling with $p>3$ equilibrates perfectly, i.e.,
    \begin{equation}
        \lim_{n \to \infty} \frac{1}{D_\text{eff}^{\{p,q\}}} = 0 \ ,
    \end{equation}
    for any (finite or infinite) choice of bond and logical dimensions $b,d \geq 1$ and a non-trivial physical dimension $a \geq 2$.
\end{cor}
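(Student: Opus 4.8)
The plan is to obtain the corollary as an immediate consequence of the comparison in Lemma~\ref{LEM_EFF_DIM_HYP_VS_RTT} together with the RTT equilibration result, packaged as a squeeze argument. The first step is to record the trivial lower bound: by its definition \eqref{EQ_EFF_DIM_DEF}, the inverse effective dimension is a sum of fourth powers and hence satisfies $1/D_\text{eff}^{\{p,q\}} \geq 0$. This furnishes the lower end of the sandwich and makes the limit statement well-posed.

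For the upper end I would split on the internal bond dimension. When $b \geq 2$, Lemma~\ref{LEM_EFF_DIM_HYP_VS_RTT} already supplies $1/D_\text{eff}^{\{p,q\}} \lesssim 1/D_\text{eff}^\text{RTT}$ at sufficiently large $n$ for equal boundary and internal bond dimensions, which is exactly the regime relevant to the limit. Since that comparison was established in the worst case $d \to \infty$, which by Cor.~\ref{COR_RTN_LOG_DIM} maximizes the inverse effective dimension, the same bound continues to hold for every finite logical dimension $d \geq 1$. I would then invoke the RTT equilibration lemma to obtain $\lim_{n\to\infty} 1/D_\text{eff}^\text{RTT} = 0$ for $a \geq 2$ and any $b,d \geq 1$, and conclude $\lim_{n\to\infty} 1/D_\text{eff}^{\{p,q\}} = 0$ by squeezing between this vanishing upper bound and the nonnegative lower bound.

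The remaining case is $b = 1$, which the comparison lemma does not cover, since its decisive inequality requires $b \gtrsim 2$. Here I would argue directly from the master bound \eqref{EQ_EFF_DIM}: at $b=1$ the Ising partition function collapses to $Z_{\log\sqrt{1}} = 2^{n_v}$, as aligned and anti-aligned edges then weigh equally, and since the product term never exceeds unity we obtain $1/D_\text{eff}^{\{p,q\}} \leq 2^{n_v}/a^{n}$. Writing $n_v = r\, n$ with the holographic rate $r$ of \eqref{EQ_RATE_HOLO_PQ}, which for $p>3$ obeys $r < 1$ asymptotically, this becomes $1/D_\text{eff}^{\{p,q\}} \leq (2^{r}/a)^{n}$; because $r<1$ forces $2^{r} < 2 \leq a$ whenever $a \geq 2$, the base is strictly below one and the bound vanishes as $n \to \infty$.

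I do not anticipate a genuine obstacle, as the heavy lifting already lives in Lemma~\ref{LEM_EFF_DIM_HYP_VS_RTT}. The only delicate point is the bookkeeping at $b=1$, where the RTT comparison degenerates and one must instead exploit the sub-unit rate $r<1$ of hyperbolic tilings to close the gap. This is reassuring rather than worrying, since the very geometric feature responsible for $r<1$ is precisely the hyperbolicity condition $p>3$ appearing in the hypothesis, which rules out the flat case with $n_v \sim n^2$ for which the argument would fail.
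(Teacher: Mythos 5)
Your proposal is correct, and in its main branch it coincides with the paper's proof: the paper concludes in one line by combining Lemma \ref{LEM_EFF_DIM_HYP_VS_RTT} with the RTT equilibration lemma, $\lim_{n\to\infty} 1/D_\text{eff}^{\{p,q\}} \leq \lim_{n\to\infty} 1/D_\text{eff}^\text{RTT} = 0$ (the nonnegativity of $1/D_\text{eff}$, which you state explicitly as the lower half of your squeeze, is left implicit there). Where you genuinely depart from the paper is the case split at $b=1$, and this is a substantive refinement rather than bookkeeping: the proof of Lemma \ref{LEM_EFF_DIM_HYP_VS_RTT} hinges on the inequality $\left(1+\frac{1}{b}\right)^{1/2}\left(1+\frac{1}{b^2}\right)^{(\sqrt{3}-1)/2} \leq \frac{b+1}{b+\frac{1}{2b}}$ (respectively $1+\frac{1}{b^2}\leq \frac{b+1}{b+\frac{1}{2b}}$ in the $q=3$ branch), which the paper itself verifies only for $b \geq 1.968$ (resp.\ $b \geq 2$) and which fails at $b=1$, whereas the corollary is asserted for all $b\geq 1$; the paper's proof invokes the lemma without flagging this. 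Your direct argument for $b=1$ — $Z_{\log\sqrt{b}}\big|_{b=1} = 2^{n_v}$, the product term in \eqref{EQ_EFF_DIM} bounded by unity, and $n_v \sim r\,n$ with $r<1$ from \eqref{EQ_RATE_HOLO_PQ} for $p>3$, giving $1/D_\text{eff}^{\{p,q\}} \leq (2^{r}/a)^n \to 0$ for $a\geq 2$ — correctly closes that gap, and correctly identifies that the flat case $n_v\sim n^2$ is excluded by the same hypothesis.

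One further observation you missed: since every anti-aligned edge contributes a factor $1/b \leq 1$, the bound $Z_{\log\sqrt{b}} \leq Z_0 = 2^{n_v}$ holds for \emph{all} $b\geq 1$, not merely $b=1$ (this is exactly the monotonicity used in the proof of Cor.~\ref{COR_RTN_EQUI_LIMIT}). Hence your ``remaining case'' branch alone already proves the full corollary uniformly in $b$ and $d$, with no appeal to Lemma \ref{LEM_EFF_DIM_HYP_VS_RTT} at all — a more elementary and in fact more complete route than the paper's. What the lemma buys instead is the quantitatively stronger hierarchy statement $1/D_\text{eff}^{\{p,q\}} \lesssim 1/D_\text{eff}^\text{RTT}$, which the mere vanishing of the limit does not require.
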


\noindent
\textit{Proof. }
Using Lemma \ref{LEM_EFF_DIM_HYP_VS_RTT}, we can upper-bound the inverse effective dimension of a $\{p,q\}$ RTN with that of an RTT for the number of boundary sites $n$ as well as the same choice of physical, bond, and logical dimensions $a,b,d$, as long as $n$ is sufficiently large. This implies that we can write
\begin{equation}
    \lim_{n \to \infty} \frac{1}{D_\text{eff}^{\{p,q\}}} < \lim_{n \to \infty} \frac{1}{D_\text{eff}^\text{RTT}} = 0 \ .   \qed
\end{equation}
Via Cor.\ \ref{COR_EFF_DIM_ABS_LOWER_BOUND}, the same logic can be immediately extended to the case of a single Haar-random tensor, whose states then also achieve perfect equilibration as $n \to\infty$. One may alternatively apply Cor.\ \ref{COR_RTN_EQUI_LIMIT} to reach this conclusion, as the limit of infinitely large physical dimension and infinitely many sites coincide for the case of a single tensor.

Combining Corollary \ref{COR_EFF_DIM_ABS_LOWER_BOUND} and Lemma \ref{LEM_EFF_DIM_HYP_VS_RTT} now allows us to impose a \emph{hierarchy of equilibration} for RTNs on different geometries: The strongest equilibration is achieved by a single $n$-leg random tensor, or in the language of holographic tensor networks, a ``maximal-size black hole''. 
Furthermore, the equilibration of any RTN state is increased through fusion of adjacent tensors into larger tensors, i.e., ``growing the black hole''.
RTNs on regular hyperbolic geometries, expected to describe typical states in a holographic CFT, lead to weaker equilibration than a single random tensor, but to stronger equilibration than a RTT, i.e., a random tensor network without ``bulk tensors''.

Though we have only proven that this last separation is strict at large $n$, an explicit calculation of $1/D_\text{eff}$ for small tensor networks (for which $Z_{\log\sqrt{b}}$ can still be quickly evaluated in computer-algebra systems) confirms this separation even for small systems: As shown in Fig.\ \ref{fig:ising-tn-geo_eff-dim} for systems of $n=20$ physical sites, the inverse effective dimension $1/D_\text{eff}$ (plotted with an additional factor $a^{n}$ common to all geometries) strictly follows the hierarchy laid out above: The inverse effective dimension for the RTT in subfigure (a) is orders of magnitude above any regular geometry for small bond dimensions $b \leq 10$, while the black hole tensor in subfigure (e) provides a strict lower bound to all other geometries. 
We also observe that there is no strict separation between flat and hyperbolic RTNs (subfigures (b) and (c)), with the latter exhibiting stronger equilibration only at very small bond dimension.

\begin{figure}[t]
    \centering
    \includegraphics[width=1\linewidth]{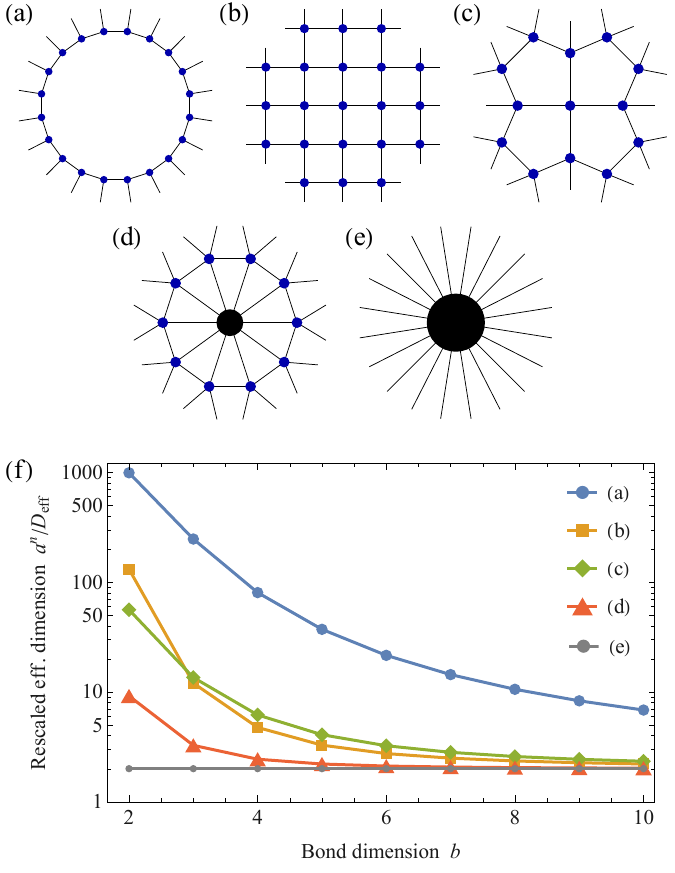}
    \caption{Examples of various random tensor network states with $n=20$ boundary sites and one projected bulk leg per blue tensor. 
    (a) Matrix-product/tensor-train state of a closed chain of tensors.
    (b) Circular cut of a square lattice.
    (c) Hyperbolic lattice (four pentagons of $\{5,4\}$ tiling)
    (d) High-dimensional ``black hole'' tensor at the center of a hyperbolic lattice.
    (e) Single random tensor (``pure black hole'').
    (f) Inverse effective dimension $1/D_\text{eff}$ for the geometries (a)-(e), rescaled by a factor $a^{n}$, where $a$ is the physical dimension per boundary site (set equal to the bond dimension $b$) and $n$ is the number of boundary sites.
    }
    \label{fig:ising-tn-geo_eff-dim}
\end{figure}

\begin{figure}[t]
    \centering
    \includegraphics[width=0.95\linewidth]{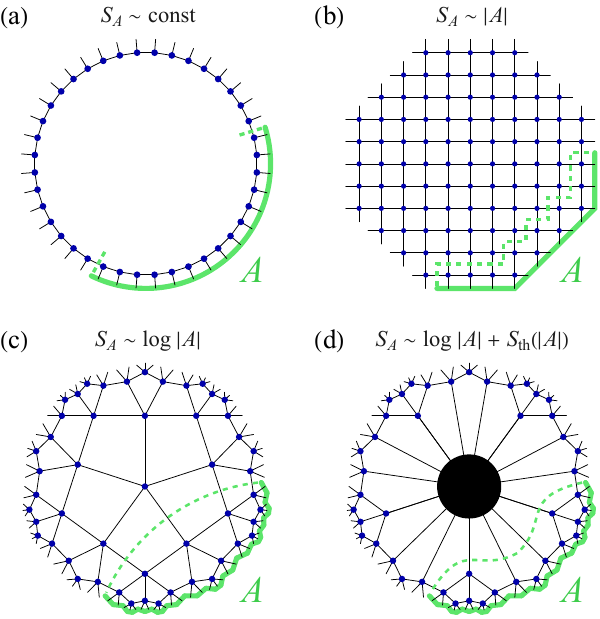}
    \caption{The entanglement entropy $S_A$ of a tensor network's boundary subregion $A$ (green) is upper-bounded by the length of a minimal ``Ryu-Takayanagi'' cut $\gamma_A$ (dashed green), with a scaling $S_A \sim |\gamma_A|$ at large bond dimension \cite{Hayden_2016}.
    (a) For an MPS/TTS geometry, $|\gamma_A|$ does not scale with $|A|$ (1D area law).
    (b) For a flat bulk geometry, $|\gamma_A|$ for regions $A$ with $|A|\ll n$ follows a volume law.
    (c) For a hyperbolic geometry, $|\gamma_A|$ is logarithmically smaller than $A$ (1D area law with logarithmic corrections).
    (d) Fusing central tensors of a hyperbolic geometry into a single random (black hole) tensor creates an additonal ``thermal'' (or ``horizon'') contribution $S_\text{th}(|A|)$ for sufficiently large $A$. 
    }
    \label{fig:rt-cuts}
\end{figure}

The implications of choosing different RTN geometries on the physical states can be well expressed in terms of the scaling of entanglement entropies $S_A$ for subregions $A$ of the boundary sites \cite{arealaw,Vasseur_2019}.
For any tensor network of constant bond dimension $b$, it is well-known to be upper-bounded as 
\begin{equation}
\label{EQ_RT_TN}
    S_A \leq |\gamma_A| \log b \ ,
\end{equation}
where $|\gamma_A|$ is length of a minimal cut $\gamma_A$ through the tensor network (the number of edges through which $\gamma_A$ cuts) with matching endpoints $\partial A = \partial \gamma_A$. This notation is reminiscent of the Ryu-Takayanagi formula \cite{Ryu_2006,Ryu_2006long} with a replacement $\log b \to 1/4 G$, $G$ being the gravitational constant of the holographically dual bulk spacetime, and the inequality replaced by an equality up to $O(G^0)$ corrections. Such a similarity was first noted in Ref.\ \cite{Swingle:2009bg} for the MERA tensor network and sharpened for RTNs in Ref.\ \cite{Hayden_2016}: At large $b$, the inequality \eqref{EQ_RT_TN} then saturates, with $\gamma_A$ taking the form of a domain wall in an equivalent Ising spin description of the bulk sites.
The form of $\gamma_A$ for different TN geometries is shown in Fig.\ \ref{fig:rt-cuts}. We can see that RTT/MPS geometries (subfigure (a)) are only suitable to describe 1D quantum states up to a \emph{strict area law}, meaning that a 1D region $A$ with a 0D boundary $\partial A$ (the endpoints) can only have entanglement $S_A \sim \text{const}$ that does not scale with $A$ \cite{arealaw}. In order to approximate quantum states with large entanglement for large regions well, such a TN would therefore be an inefficient ansatz requiring a bond dimension scaling with the total system size.
Fig.\ \ref{fig:rt-cuts}(b) shows a TN geometry with a flat 2D bulk, allowing up to \emph{volume-law} entanglement $S_A \sim |A|$. Note that not all tensor networks on such flat geometries describe volume-law entangled states: For example, using free-fermion \emph{matchgate} tensors typically leads to gapped phases with $S_A \sim \text{const}$ and two-point correlations decaying exponentially with distance \cite{Jahn:2017tls}.
A third choice of a TN bulk geometry is given by hyperbolic regular tilings (subfigure (c)), which allow for an area law with logarithmic corrections $S_A \sim \log |A|$. Such hyperbolic bulk geometries are suitable to describe ground states of critical models, both because of the correct entanglement scaling \cite{Calabrese:2004eu} and because the graph distance $\delta$ between two points with boundary distance $d$ will scale as $\delta \sim \log d$, producing a decay of boundary two-point correlations that is polynomial in $d$ if bulk correlations decay exponentially in $\delta$ \cite{Hayden_2016,Jahn:2017tls}.
Finally, adding a ``black hole'' to a hyperbolic TN geometry, i.e., fusing together some of its tensors into one larger tensor, adds additional contributions $S_\text{th}(|A|)$ to the entanglement entropy. These ``thermal'' contributions become significant when $|A|$ becomes comparable to the system size and $\gamma_A$ is deformed by the black hole, partially wrapping around its horizon (which is considered a thermal object in gravity and holography), analogous to actual black hole insertions in AdS/CFT \cite{Ryu_2006}.

\subsection{Circular ensemble dependence of equilibration and delocalization}
In general, one would like to derive statistical mechanics for states which preserve some symmetry; in the case of RTN ensembles, such symmetries may be expressed in terms of the ensemble from which each local tensor is chosen. 
One way of doing this is by constructing random tensor networks not just from the \emph{circular unitary ensemble} (CUE), which is often simply referred to as ``Haar-random'', but the other circular ensembles as well. The \emph{circular orthogonal ensemble} (COE) in particular has gained recent interest in quantum gravity \cite{harlow2025quantummechanicsobserversgravity, Akers:2025ahe}. A detailed exposition on Dyson's circular ensembles can be found in Ref.\ \cite{forrester2010loggases}. For the COE we have 
\begin{align}
\label{weingartenorth2n4}
    & \int dO O_{i_1,j_1}O_{i_2,j_2} = \frac{1}{d}\delta_{i_1,i_2}\delta_{j_1,j_2}, \\
    & \int dO O_{i_1,j_1}O_{i_2,j_2}O_{i_3,j_3}O_{i_4,j_4} =  \nonumber \\
    & A\delta_{i_1,i_2}\delta_{i_3,i_4}\delta_{j_1,j_2}\delta_{j_3,i_4} + A\delta_{i_1,i_3}\delta_{i_2,i_4}\delta_{j_1,j_3}\delta_{j_2,j_4} \nonumber \\
    & + A \delta_{i_1,i_4}\delta_{i_2,i_3}\delta_{j_1,j_4}\delta_{j_2,j_3} + B \delta_{i_1,i_3}\delta_{i_2,i_4}\delta_{j_1,j_2}\delta_{j_3,j_4} \nonumber \\
    & + B \delta_{i_1,i_4} \delta_{i_2,i_3}\delta_{j_1,j_3}\delta_{j_2,j_4} + B \delta_{i_1,i_2} \delta_{i_3,i_4}\delta_{j_1,j_4}\delta_{j_2,j_3} \nonumber \\
    & + B \delta_{i_1,i_4} \delta_{i_2,i_3}\delta_{j_1,j_2}\delta_{j_3,j_4} + B \delta_{i_1,i_2} \delta_{i_3,i_4}\delta_{j_1,j_3}\delta_{j_2,j_4} \nonumber \\
    & + B \delta_{i_1,i_3} \delta_{i_2,i_4}\delta_{j_1,j_4}\delta_{j_2,j_3}, \nonumber
\end{align}
where $A \coloneqq \frac{d+1}{d(d+2)(d-1)}$ and $B\coloneqq -\frac{1}{d(d+2)(d-1)}$. With post-selection, we have 
\begin{align}
    &\int dO O_{i_1,j_1}O_{i_2,j_2}O_{i_3,j_3}O_{i_4,j_4} \ket{0}_{j_1 j_2 j_3 j_4} = \\
    & = (A + 2B) (\delta_{i_1 ,i_2}\delta_{i_3, i_4} + \delta_{i_1, i_3}\delta_{i_2, i_4} + \delta_{i_1, i_4}\delta_{i_2, i_3} ) \nonumber \\ 
    & = \frac{1}{d(d+2)} (\delta_{i_1 ,i_2}\delta_{i_3, i_4} + \delta_{i_1, i_3}\delta_{i_2, i_4} + \delta_{i_1, i_4}\delta_{i_2, i_3} ) .\nonumber
\end{align}
Contrasting this with the CUE Haar-random unitary case, for which we simplify (\ref{fullcalc}), to get
\begin{align}
    & \int dU U_{i_1,j_1}U_{i_2,j_2}U_{i_3,j_3}U_{i_4,j_4} \ket{0}_{j_1 j_2 j_3 j_4} \\
    \nonumber
    &= \left(1- \frac{1}{d} \right) \left(\frac{1}{d^2-1} \right)(\delta_{i_1, i_3}\delta_{i_2, i_4} + \delta_{i_1, i_4}\delta_{i_2, i_3}) \\
    \nonumber
    &=  \frac{1}{d(d+1)}(\delta_{i_1, i_3}\delta_{i_2, i_4} + 
    \delta_{i_1, i_4}\delta_{i_2, i_3})
    .
    \nonumber
\end{align}
By applying this to a simple tensor network of the type considered in figure \ref{fig:equilibration-intro} and evaluated for the CUE in (\ref{fullcalc}), we find $\frac{2b^2+2b}{q^2(q+1)^2}$ and $\frac{3b^2+6b}{q^2(q+2)^2}$ for the COE (note again that these are upper bounds assuming product bulk and boundary states). The values are comparable and we expect this to generalize (though we do not prove it) for the other CUE based tensor network geometries we considered in this work. 
For the COE case, one would then need the partition function onto a 3-spin classical model. 

How will equilibration and localization change in the case of the symplectic ensemble? We can make use of the following theorem: 
\begin{thm}[\cite{west2024randomensemblessymplecticunitary}] The circular symplectic ensemble and the circular unitary ensemble on states are indistinguishable i.e. 
    \begin{equation}
        \mathbb{E}_{Sp}[U^{\otimes t}\ket{\psi_0}\bra{\psi_0}^{\otimes t}(U^\dagger)^{\otimes t}] = \mathbb{E}_{U}[U^{\otimes t}\ket{\psi_0}\bra{\psi_0}^{\otimes t}(U^\dagger)^{\otimes t}]
    \end{equation}
    meaning that symplectic random states form unitary state t-designs for all t.
\end{thm}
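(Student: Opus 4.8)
The plan is to show that both ensemble averages collapse onto the \emph{same} operator, namely the normalized projector onto the totally symmetric subspace, so that the two $t$-designs coincide as operators. First I would recall the standard CUE computation: since $\ket{\psi_0}^{\otimes t}$ lies in the symmetric subspace $\mathrm{Sym}^t(\mathbb{C}^d)$, and this subspace carries an irreducible representation of $U(d)$, the Haar average must by Schur's lemma be proportional to the projector $\Pi_{\mathrm{sym}}$ onto that subspace; fixing the constant by the trace gives $\mathbb{E}_{U}[(U\ket{\psi_0}\bra{\psi_0}U^\dagger)^{\otimes t}] = \Pi_{\mathrm{sym}}/\binom{d+t-1}{t}$. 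This is the target expression I want to reproduce for the symplectic ensemble.

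Second, I would specialize to $d=2n$ and take the averaging measure to be Haar measure on the compact symplectic group $Sp(n) = USp(2n) \subset U(2n)$. Left-invariance of the Haar measure implies that the averaged operator $M \coloneqq \mathbb{E}_{Sp}[(U\ket{\psi_0}\bra{\psi_0}U^\dagger)^{\otimes t}]$ satisfies $V^{\otimes t} M = M V^{\otimes t}$ for every $V \in Sp(n)$. Since $\ket{\psi_0}^{\otimes t}$ lies in $\mathrm{Sym}^t(\mathbb{C}^{2n})$ and $U^{\otimes t}$ preserves this subspace, $M$ is supported entirely on $\mathrm{Sym}^t$. The decisive representation-theoretic input is that $\mathrm{Sym}^t(\mathbb{C}^{2n})$ remains \emph{irreducible} under $Sp(n)$: in the Young-diagram labeling of symplectic irreducibles the single-row partition $(t)$ is always admissible (it has at most $n$ rows), and the invariant symplectic form lives in the antisymmetric part $\Lambda^2$ and hence contracts no symmetric tensor. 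Schur's lemma on this irreducible invariant subspace then forces $M = c\,\Pi_{\mathrm{sym}}$, and matching the trace yields exactly the CUE value $\Pi_{\mathrm{sym}}/\binom{2n+t-1}{t}$, proving the asserted equality.

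The main obstacle, and the step I would treat most carefully, is establishing the irreducibility of $\mathrm{Sym}^t$ under the symplectic group and contrasting it with the orthogonal case to explain why the analogous statement fails for the COE. For the orthogonal group the invariant bilinear form is \emph{symmetric}, so it induces a nonzero contraction $\mathrm{Sym}^t \to \mathrm{Sym}^{t-2}$ that splits off traceless subrepresentations; this is precisely why orthogonal random states do not reproduce the full symmetric projector and fail to be unitary $t$-designs for $t \geq 2$, consistent with the three-spin (rather than two-spin) Ising structure noted above for the COE. A secondary subtlety is whether one works with Haar measure on the group $Sp(n)$ or with Dyson's self-dual-unitary coset realization of the CSE; since the design property depends only on the induced distribution of the state $U\ket{\psi_0}$, I would verify that this induced state distribution agrees in the two realizations, after which the Schur's-lemma argument applies verbatim.
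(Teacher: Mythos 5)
Your proposal is correct, but note that the paper itself offers no proof of this theorem --- it is imported verbatim from Ref.~\cite{west2024randomensemblessymplecticunitary}, and your argument (the averaged operator is supported on $\mathrm{Sym}^t(\mathbb{C}^{2n})$, commutes with $V^{\otimes t}$ for all $V \in Sp(n)$ by Haar invariance, and $\mathrm{Sym}^t$ stays irreducible under the symplectic group because the invariant form is antisymmetric and so yields no contraction $\mathrm{Sym}^t \to \mathrm{Sym}^{t-2}$, whence Schur's lemma and trace normalization force $M = \Pi_{\mathrm{sym}}/\binom{2n+t-1}{t}$) is essentially the proof given in that reference. Your two flagged subtleties are also well placed: the symmetric-versus-antisymmetric form distinction is exactly why the analogous statement fails for orthogonal states (consistent with the three-pairing COE structure computed in the paper), and the theorem as used here does indeed concern Haar measure on the compact symplectic group rather than Dyson's self-dual CSE coset, so the group-Haar reading under which your Schur argument applies is the intended one.
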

A corollary of this theorem is the following.

\begin{cor}[Symplectic unitaries are generic for random tensor networks]
    Random tensor networks generated by using symplectic unitaries will have an equilibration value equal to random tensor networks generated by random unitaries. 
\end{cor}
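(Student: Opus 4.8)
\noindent
\textit{Proof proposal.}
The plan is to observe that the entire chain of reasoning producing the equilibration value $\Delta A_\psi^\infty$ passes through only \emph{low moments} of each individual random tensor, and that the stated $t$-design property makes those moments coincide for the circular symplectic and circular unitary ensembles. First I would recall the dependence structure: by Lemma~\ref{fluclem} the equilibration value is controlled entirely by the inverse effective dimension, and by Lemma~\ref{LEM_EFF_DIM} the latter is a fixed function $1/D_\text{eff} = a^{n}\,\mathbb{E}[|\braket{\psi}{\phi}|^4]/(\mathbb{E}[\braket{\psi}{\psi}])^2$ of just two ensemble averages, with no further reference to the Hamiltonian or observable beyond the generic non-degenerate-gap assumption. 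It therefore suffices to show that $\mathbb{E}[\braket{\psi}{\psi}]$ and $\mathbb{E}[|\braket{\psi}{\phi}|^4]$ are unchanged when each Haar-random tensor $\ket{\psi_i}=U_i\ket{0}$ is replaced by a symplectic one.

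Next I would identify precisely which moments enter. The expected norm $\mathbb{E}[\braket{\psi}{\psi}]$ is built from the rule \eqref{EQ_WEINGARTEN_STATE_2}, i.e.\ from the first state moment $\mathbb{E}[U_i\ketbra{0}{0}U_i^\dagger]$ of each vertex (one $U_i$ and one $U_i^\dagger$), while the expected fourth-power overlap $\mathbb{E}[|\braket{\psi}{\phi}|^4]$ is built from \eqref{EQ_WEINGARTEN_STATE_4}, i.e.\ from the second state moment $\mathbb{E}[(U_i\ketbra{0}{0}U_i^\dagger)^{\otimes 2}]$ (two copies of $U_i$ and two of $U_i^\dagger$). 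Both are state $t$-moments with $t\le 2$. Invoking the preceding theorem, the symplectic ensemble forms a state $t$-design for \emph{all} $t$, so in particular its $t=1$ and $t=2$ moments of $U\ket{0}$ agree exactly with those of the CUE; hence every single-vertex moment tensor used in the computation is identical for the two ensembles.

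To lift this vertex-level equality to the full network, I would use independence across vertices: since each $U_i$ is drawn independently, both $\mathbb{E}[\braket{\psi}{\psi}]$ and $\mathbb{E}[|\braket{\psi}{\phi}|^4]$ factor as a contraction of the per-vertex moment tensors against the fixed EPR projectors $\bra{\chi_{i,j}}$ and the product reference vectors $\ket{\Phi},\ket{\phi}$ — precisely the contraction structure already displayed in \eqref{EQ_G_NORM} and \eqref{EQ_G_OVERLAP4}. Replacing each factor by its (equal) symplectic counterpart leaves the contraction, and therefore $1/D_\text{eff}$ and the equilibration value, invariant, which is the claim.

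The main obstacle is to make the factorization step airtight rather than merely intuitive: one must confirm that each tensor is used at moment order at most two and that the on-average normalization (dividing $\mathbb{E}[|\braket{\psi}{\phi}|^4]$ by $(\mathbb{E}[\braket{\psi}{\psi}])^2$ rather than normalizing sample-wise) does not secretly couple vertices or raise the required design order. A minor caveat to flag is that symplectic unitaries require even total dimension, so the argument is stated for RTNs with even $q_k$; because the cited $t$-design holds for all $t$, the same reasoning also extends verbatim to the multi-point observable generalization, even if that case invokes higher single-tensor moments.
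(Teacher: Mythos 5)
Your proposal is correct and follows essentially the same route as the paper, which states the corollary as an immediate consequence of the West et al.\ theorem that symplectic random states form state $t$-designs for all $t$: since every ensemble average entering $1/D_\text{eff}$ (and its multi-point extension) is a contraction of independent per-vertex state moments, moment equality at all orders makes the two ensembles' equilibration values identical. Your explicit bookkeeping --- identifying that only the $t=1,2$ state moments enter via \eqref{EQ_WEINGARTEN_STATE_2} and \eqref{EQ_WEINGARTEN_STATE_4}, checking that the on-average normalization does not raise the design order, and flagging the even-dimension requirement for symplectic unitaries --- simply spells out what the paper leaves implicit, and the last caveat is a worthwhile addition the paper omits.
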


\subsection{Equilibration of n-point functions}
We can extend the relationship between late-time observable fluctuations and $D_\text{eff}$ from Lemma \ref{fluclem} to also account for multi-point observables. 
We consider observables $B=A_1 A_2 \dots A_n$ where each individual observable is a simple operator in norm, i.e., $\norm{A_i}=O(1)$. Then $\norm{B} \leq \prod_i \norm{A_i}$ at all times, as Heisenberg evolution is norm-preserving. The temporal fluctuations are then given by 
\begin{equation}
    (\Delta B_\psi^\infty)^2 = \lim_{\tau \xrightarrow{} \infty} \frac{1}{\tau} \int_0^\tau (\bra{\psi}B(t)\ket{\psi} - B_\psi^\infty)^2
\end{equation}
then similar to Lemma \ref{fluclem} we take a state vector $\ket{\psi}=c_j \ket{j}$ and 
\begin{align}
\label{EQ_EFF_DIM_N_POINT}
    (\Delta B_\psi^\infty)^2 &\leq \norm{B}^2 \sum_j |c_j|^4 \\
    \nonumber
    &= O \left(\frac{\prod_l \norm{A_l}^2}{D_{\text{eff}}} \right) \\
     \nonumber
    &= O \left(\frac{1}{D_{\text{eff}}} \right)
     \nonumber.
\end{align}
This implies that $1/D_\text{eff}$ is still a good measure for the fluctuations of $n$-point observables as long as these are of sufficiently low ``complexity'', i.e., much more localized than the state vector. For more complex observables, the bound \eqref{EQ_EFF_DIM_N_POINT} will become increasingly loose, i.e., fluctuations of observables can be much smaller than $1/D_\text{eff}$. Another way of expressing the same ides is that more complex operators can increase the dimensionality of the state space in which asymptotic time evolution occurs.

\section{Discussion}
\label{SEC_DISCUSSION}

\subsubsection*{Many-body physics}
\label{SUBSEC_DISC_MANY_BODY}

Considering our results for the dependence of equilibration in RTN states on different geometries, we observe a striking match between the hierarchy of equilibration for RTNs and the hierarchy of entanglement scaling for subregions: Though the inverse effective dimension $1/D_\text{eff}$ is a global quantity of RTNs, it appears to scale with the typical entanglement of subregions:

\begin{con}[Growing equilibration with subregion entanglement entropy]
\label{conject1}
    The strength of equilibration of simple operators applied to a typical state in a quantum many-body phase increases with the scaling of subregion entanglement entropy in that phase.
\end{con}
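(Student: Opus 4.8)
The plan is to promote the purely geometric hierarchy already established for RTN states into a statement about generic phases by identifying the Ising partition function $Z_{\log\sqrt b}$ that controls $1/D_\text{eff}$ in Lemma~\ref{LEM_EFF_DIM} with a sum over subregion entanglement entropies. First I would rewrite this partition function as a sum over bulk bipartitions: assigning spins $\sigma_k=\pm 1$ to each tensor partitions the vertex set into an ``up'' set $S$ and its complement, and each configuration contributes exactly $b^{-|\partial S|}$, where $|\partial S|$ is the number of internal edges severed by the domain wall separating $S$ from the rest. This count is precisely the length of a bulk cut $\gamma$ homologous to the boundary region $A(S)$ carried by the open legs of $S$, so that the equality case of \eqref{EQ_EFF_DIM} together with the RT-type relation \eqref{EQ_RT_TN} lets me write, in the flat-spectrum large-$b$ regime where the bound saturates,
\begin{equation}
  \frac{1}{D_\text{eff}} \;\sim\; \frac{1}{a^{n}}\sum_{S\subseteq V} b^{-|\partial S|} \;\sim\; \frac{1}{a^{n}}\sum_{S} e^{-S_2(A(S))}\ ,
\end{equation}
a ``partition function over entanglement entropies'' in which each bulk domain wall is weighted by the second R\'enyi entropy of the boundary region it encloses.

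The second step would be to extract the dominant behavior of this sum. The two trivial (all-aligned) configurations give the leading constant $2$, and the remaining terms are exponentially suppressed by the entanglement cost of the cheapest nontrivial domain walls. In a phase with \emph{stronger} subregion entanglement scaling, the minimal nontrivial cut carries a larger $S_2$, so these subleading terms are more strongly suppressed, $Z_{\log\sqrt b}$ is pushed closer to its zero-temperature value $2$, and $1/D_\text{eff}$ is correspondingly smaller --- that is, equilibration is stronger. Carrying this out for the three archetypes reproduces the proven hierarchy: the strict area law of an RTT admits cheap $O(1)$-length cuts and hence a large $Z$; the logarithmically corrected area law of a hyperbolic tiling raises the minimal cut cost and suppresses $Z$ (Lemma~\ref{LEM_EFF_DIM_HYP_VS_RTT}); and the volume-law ``black hole'' tensor has no internal edges to cut cheaply, saturating the absolute bound of Corollary~\ref{COR_EFF_DIM_ABS_LOWER_BOUND}.

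The final and hardest step is to lift this from RTN states to \emph{typical} states of a genuine many-body phase. Here I would argue that an RTN on a given geometry furnishes a representative ensemble for the corresponding phase --- MPS/RTT geometries for strict area-law (gapped) phases, hyperbolic tilings for critical states, and highly connected or fused tensors for thermal/volume-law states --- so that the $1/D_\text{eff}$ computed above estimates the equilibration strength of the phase itself. \textbf{The main obstacle is precisely this last identification:} the clean correspondence between $1/D_\text{eff}$ and the entanglement partition function holds only in the flat-spectrum large-$b$ limit and for product bulk and boundary reference states, whereas a generic state in a phase has a non-flat entanglement spectrum and correlations that turn \eqref{EQ_EFF_DIM} into a strict inequality. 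Controlling the resulting $S_2$-dependent corrections to the cut weights --- and establishing that it is the \emph{scaling} of entanglement, rather than the full microscopic state, that monotonically governs equilibration --- is the essential gap that keeps this a conjecture rather than a theorem.
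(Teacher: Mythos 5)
You were asked to prove a statement that the paper itself presents only as a conjecture: the paper contains no proof of Conjecture~\ref{conject1}, and its sole support is the proven RTN hierarchy (Lemma~\ref{LEM_EFF_DIM_HYP_VS_RTT}, Corollary~\ref{COR_EFF_DIM_ABS_LOWER_BOUND}, Lemma~\ref{LEM_RTN_EFF_DIM_GROWTH}) together with the heuristic matching of $1/D_\text{eff}$ against the entanglement scaling of the corresponding geometries summarized in Fig.~\ref{fig:rt-cuts}. Your proposal reconstructs essentially that same motivation --- the domain-wall rewriting of $Z_{\log\sqrt{b}}$ from Lemma~\ref{LEM_EFF_DIM}, the RT-type weighting via \eqref{EQ_RT_TN}, and the three archetypes reproducing the proven hierarchy --- and, crucially, you correctly identify the very obstruction that the authors implicitly concede by labelling the statement a conjecture: there is no controlled way to pass from the saturated, product-reference-state RTN computation to a typical state of a genuine many-body phase with a non-flat entanglement spectrum, nor to show that it is the entanglement \emph{scaling} alone that governs $D_\text{eff}$ monotonically. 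One technical caution on your intermediate step: the per-term substitution $b^{-|\partial S|}\sim e^{-S_2(A(S))}$ is only valid for spin configurations whose domain wall is the \emph{minimal} cut homologous to $A(S)$; a generic configuration has $|\partial S| > |\gamma_{A(S)}|$, so your ``partition function over entanglement entropies'' is a large-$b$ statement about the dominant terms rather than an identity, which is adequate for the heuristic but should not be presented as an equality. With that caveat, your assessment that the final identification is what keeps this a conjecture rather than a theorem is precisely the status of the claim in the paper, so your proposal neither falls short of nor goes beyond what the paper establishes.
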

This conjecture arises from our observation that an RTN geometry that leads to less subregion entanglement appears to decrease the effective dimension and hence increase late-time fluctuations of observables.
Intriguingly, the effective dimension explored by a time-evolved local operator on a single \emph{sample} of an RTN thus appears to scale with the dimension of the RTN \emph{ensemble} as a whole: For example, an RTT ensemble is restricted to samples from an area-law-entangled subspace of physical states, whereas a single random tensor produces an ensemble in the much larger space of volume-law states (essentially the full Hilbert space). And our calculations show that the effective dimension of states from the latter ensemble is provably larger than the former.
This apparent relationship between the effective dimension and an ``entropic'' counting of degrees of freedoms for subregions (i.e., $e^{S_A}$) would also explain the switch in hierarchy between flat and hyperbolic RTNs observed for small systems in Fig.\ \ref{fig:ising-tn-geo_eff-dim}(b-c): While at large bond dimension the geometric argument above implies a larger (volume-law) entanglement entropy for the flat RTNs, at small bond dimension these RTNs may be closer to describing a gapped phase with a strict area law, i.e., with smaller entanglement than the hyperbolic RTNs, as observed in generic (non-random) matchgate tensor networks \cite{Jahn:2017tls}. A link between our equilibration results and different scaling laws for entanglement entropies may arise from further studies of wavefunction delocalization in such phases: For example, volume-law-entangled phases generally exhibit stronger delocalization than area-law ones.

Our results can also be considered within the context of \emph{eigenstate thermalization hypothesis} (ETH) \cite{Srednicki_1994}, which is expected to describe certain operators in a holographic CFT \cite{lashkari2016eigenstatethermalizationhypothesisconformal, saad2019latetimecorrelationfunctions, Pollack_2020, Bao_2019, jafferis2023jt, jafferis2023matrix, Sonner_2017, Nayak_2019, Dymarsky_2018, Dymarsky:2016ntg, Basu_2017, Lashkari:2017hwq, Faulkner:2017hll, Brehm:2018ipf, Romero-Bermudez:2018dim}. ETH is a mathematical ansatz for the matrix elements of simple operators $\mathcal{O}$ in the eigenstate basis of a chaotic Hamiltonian and is given by 
\begin{equation}
    \bra{E_m}\mathcal{O}\ket{E_n} = \overline{O}(E)\delta_{m,n} + e^{-S(\overline{E})/2}f_O(E,\omega)R_{m,n} \ ,
\end{equation}
where $\overline{O}(E)$ the microcanical expectation value $f_O(E,\omega)$ a smooth spectral function of the averaged energy $E = (E_m + E_n)/2$ and the energy difference $\omega = E_m -E_n$ satisfying $f_O(E,-\omega)=f_{O}(E,\omega)$ for real Hamiltonians $R_{m,n}$ a random variable with zero mean, unit variance satisfying $R_{m,n}=R_{n,m}$. The thermodynamic entropy (also called the microcanonical entropy) $S(E)$ is defined as the logarithm of the coarse-grained density of states, i.e., 
the number of eigenstates of energy $E$ is given by $e^{S(E)}$. For a system that satisfies ETH, the fluctuations around the infinite-time average are suppressed as $e^{-S(E)}$. This can be seen from part of Lemma \ref{fluclem}, where the fluctuations $\Delta A_{\psi}^{\infty}$ are given by
\begin{align}
    \sum_{j \neq k} |c_j|^2 |c_k|^2 |A_{j,k}|^2 \leq \max |A_{j,k}|^2 \propto e^{-S(E)}.
\end{align}
Our contribution has been to prove equilibration and bound the fluctuations precisely in a model without assuming the ETH. If the random tensor network states are to be interpreted in an ETH setting, then since $S(E) \propto \ln(\rho_D)$ where $\rho_D$ is density of states, then it follows that ${\mathrm{ln}}(\Delta A_\psi^\infty) \propto -\ln(\rho_D)$. We can thus interpret our computations not only as fluctuations around a thermal value but also as computing the density of states for a system satisfying ETH. This essentially implies that systems with smaller fluctuations around the thermal equilibrium value will have a higher density of states. We are for this reason led to make Conjecture \ref{conject2}.
\begin{con}[Role of chaotic systems with higher microcanonical entropy]
\label{conject2}
    Chaotic systems with higher microcanonical entropy will have smaller fluctuations around the thermal equilibrium value.
\end{con}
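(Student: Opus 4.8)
The plan is to establish Conjecture~\ref{conject2} by connecting the temporal fluctuation bound of Lemma~\ref{fluclem} to the microcanonical entropy through two complementary routes, both of which isolate $e^{-S(\overline{E})}$ as the dominant scale controlling $(\Delta A_\psi^\infty)^2$. The cleanest self-contained route proceeds through the effective dimension. By Lemma~\ref{fluclem}, $(\Delta A_\psi^\infty)^2 = O(1/D_\text{eff})$, so it suffices to argue that for a typical state supported in a microcanonical energy window, $D_\text{eff}$ scales with the number of participating eigenstates. First I would invoke the typicality arguments of Sec.~\ref{SEC_BACKGROUND}: a Haar-typical state drawn from the subspace spanned by the $e^{S(\overline{E})}$ eigenstates in a window around $\overline{E}$ has near-uniform populations $|c_j|^2 \approx e^{-S(\overline{E})}$, so that the inverse participation ratio obeys $\sum_j |c_j|^4 \approx e^{-S(\overline{E})}$ and hence, by the definition \eqref{EQ_EFF_DIM_DEF}, $D_\text{eff} \approx e^{S(\overline{E})}$. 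Substituting back yields $(\Delta A_\psi^\infty)^2 = O(e^{-S(\overline{E})})$, so that fluctuations decrease monotonically with $S(\overline{E})$.

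The second route makes the chaotic (ETH) structure explicit and is the one the surrounding discussion gestures toward. Starting from the exact intermediate identity obtained in the proof of Lemma~\ref{fluclem},
\begin{equation}
    (\Delta A_\psi^\infty)^2 = \sum_{j \neq k} |c_j|^2 |c_k|^2 |A_{j,k}|^2 ,
\end{equation}
I would substitute the off-diagonal ETH ansatz $A_{j,k} = e^{-S(\overline{E})/2} f_A(\overline{E}, \omega) R_{j,k}$ for $j \neq k$, factor out the entropy suppression, and bound the residual sum. Using that $f_A$ is a bounded smooth spectral function, that the $R_{j,k}$ have zero mean and unit variance, and that the populations are normalized with $\sum_j |c_j|^2 = 1$, the double sum over $j \neq k$ contributes an $O(1)$ factor in the entropy, leaving $(\Delta A_\psi^\infty)^2 = \Theta(e^{-S(\overline{E})})$. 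This reproduces the relation $\ln(\Delta A_\psi^\infty) \propto -S(\overline{E}) \propto -\ln \rho_D$ quoted in the discussion and confirms the monotone dependence on $S(\overline{E})$ within a fixed operator class.

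The hard part is the \emph{cross-system} nature of the statement: the conjecture compares fluctuations between \emph{different} chaotic systems, each with its own level populations, its own spectral function $f_A$, and its own random matrix $R_{j,k}$. To turn the scaling arguments above into a genuine comparison one must control the prefactors multiplying $e^{-S(\overline{E})}$ uniformly, i.e., argue that the residual sum (equivalently, the inverse-participation-ratio prefactor) does not itself scale with $S(\overline{E})$ in a way that overwhelms the exponential. I would address this by restricting to a normalized class of local observables and to states whose populations are concentrated in a window of fixed width in units of the local bandwidth, so that the only $S$-dependent factor is the overall $e^{-S(\overline{E})}$; a rigorous treatment would additionally require a concentration estimate for the random part of the ETH ansatz, so that the $O(1)$ bound on the residual sum holds with high probability rather than only in expectation. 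A further caveat I would state rather than attempt to remove is that the ETH route is conditional on the ansatz itself, so the strongest unconditional statement available is the typicality-plus-$D_\text{eff}$ argument of the first route.
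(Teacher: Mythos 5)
Your second route coincides with the paper's own justification: the paper supports Conjecture~\ref{conject2} precisely by taking the intermediate identity from the proof of Lemma~\ref{fluclem}, $(\Delta A_\psi^\infty)^2=\sum_{j\neq k}|c_j|^2|c_k|^2|A_{j,k}|^2$, bounding it by $\max_{j\neq k}|A_{j,k}|^2$ via $\sum_j |c_j|^2=1$, and invoking the off-diagonal ETH ansatz to conclude $(\Delta A_\psi^\infty)^2\propto e^{-S(E)}$, whence monotone decrease with $S(E)$ ``immediately follows.'' Note that the statement is a conjecture in the paper, and the paper's argument is explicitly conditional on ETH, exactly as you flag; neither you nor the paper proves the cross-system comparison unconditionally, and your explicit identification of the uncontrolled prefactors (system-dependent $f_A$, populations, and the random part $R_{j,k}$) is in fact more careful than the paper's one-line assertion --- it is the honest reason the claim remains a conjecture rather than a lemma.

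Your first route, typicality plus $D_\text{eff}\approx e^{S(\overline{E})}$ for Haar-typical states in a microcanonical window combined with Lemma~\ref{fluclem}, is a genuine addition not used by the paper in support of this conjecture, though it is consistent with the background material (Theorem~\ref{measureconcentration} and the effective-dimension discussion in Appendix~\ref{APP:DETEQUIB}). What it buys is an ETH-free statement, but at the price of restricting to typical states of a fixed system rather than the eigenstate structure of the Hamiltonian, so it supports the same scaling without touching the chaotic-system content of the conjecture. One caution on your ETH route: the two-sided claim $(\Delta B_\psi^\infty)^2=\Theta(e^{-S(\overline{E})})$ overreaches what either argument delivers --- the lower bound would require anticoncentration of the $R_{j,k}$ and a lower bound on $|f_A|$ in the populated frequency window, whereas the paper (and the monotonicity heuristic) only needs the upper bound; your own remark that a concentration estimate is missing partially acknowledges this, but the $\Theta$ should be weakened to $O$ absent such an estimate.
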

The conjecture would be true for systems which satisfy ETH by the following argument: systems with higher density of states will have a higher thermodynamic entropy $S(E)$, since the fluctuations for systems satisfying ETH are proportional to $e^{-S(E)}$, it immediately follows that the fluctuations will be smaller for systems with higher microcanonical entropy.

One can contrast our results for random tensor networks to stabilizer states, in particular one may wish to contrast the value of the quantity $|c_i|^4$, with the HaPPY code \cite{Pastawski:2015qua}. Of course, typicality arguments cannot be made for stabilizer states like they can for random states, the quantity, however, will correspond to some measure of delocalization. For qubits, the stabilizer state vector representation \cite{Dehaene_2003, gross2007lulcconjecturediagonallocal} will be given by 
\begin{equation}
    \ket{K,q,b} := \frac{1}{|K|}\sum_{x \in K}i^{b x}\omega^{q(x)}\ket{x} 
\end{equation}
where $K\subset \mathbb{F}_2^n$, $q$ is a quadratic form on $K$ and $b \in \mathbb{F}_2^n$ and $n$ is the number of qubits. Stabilizer states will have wavefunctions that will be more localized than random states. 

We have shown how the most elementary symmetries associated to the different circular ensembles can be incorporated in our framework. However, one may press on and ask instead about the equilibration and localization properties of topological phases of matter. The action of symmetries on a quantum-mechanical system leads to a decomposition of the Hilbert space in terms of superselection sectors. In this sense, one can think of common representatives of phases of matter which respect a certain symmetry whereby one considers states where the randomization is taken within a superselection sector \cite{Piroli_2020}. Such considerations will also be important for defining typical instances of topological phases of matter. Such systems would be interesting in their own right as instances of a correspondence between boundary global symmetry and bulk gauge symmetry in line with expectations from the holographic duality. Such models may shed light on confinement/deconfinement transitions in gauge theories and allow for a testable model to relate this to the physics of localization-delocalization transitions. Some models of this type are described in \cite{qi2022emergentbulkgaugefield, dong2023holographictensornetworksbulk, Akers_2024a, Akers_2024b}

Along similar grounds, it makes sense to apply our setup to dual-unitary circuits \cite{bertini2025exactlysolvablemanybodydynamics}. The equilibration quantity is also the inverse participation ratio. Localization-delocalization transitions can be probed by finding a suitable trajectory in the space of dual-unitaries. We plan to pursue this in future work \cite{gemma}.  
\subsubsection*{Holography and conformal field theory}
\label{SUBSEC_DISC_HOLO}

The RTNs generally considered as models of AdS/CFT are those with hyperbolic bulk geometry (see, e.g., Fig.~\ref{fig:ising-tn-geo_eff-dim}(c) and Fig.\ \ref{fig:rt-cuts}(c)). AdS/CFT exhibits a semi-classical limit at large $N$, where $N$ counts local boundary degrees of freedom (the rank of the gauge group) while also determining the bulk's gravitational constant $G \sim {1}/{N^2}$ (using $\hbar=c=1$). In RTN models of holography, $N$ is typically identified with the boundary physical dimension $a$, which is assumed to be equal to the bond dimension $b$ \cite{Hayden_2016}. Such an identification is motivated by the saturation of the tensor-network Ryu-Takayanagi formula \eqref{EQ_RT_TN} in this limit, matching what occurs in the large $N$ limit of holography: Quantum fluctuations in the bulk and hence back-reaction on the geometry are strongly suppressed, leading to a geometry in which the area of the RT surface $\gamma_A$ becomes approximately classical or ``scalar'', independent of the bulk state \cite{Faulkner:2013ana,Engelhardt:2014gca}.
While it was already known that fluctuations around operator expectation values of the RTN ensemble disappear in the limit of large bond dimension \cite{Hayden_2016}, our Cor.\ \ref{COR_RTN_EQUI_LIMIT} now implies that asymptotic fluctuations of \emph{time-dependent} observables disappear as well. This should not be surprising, as in this limit bulk gravitational fluctuations also disappear while the notion of a boundary Hamiltonian becomes ill-defined.
At finite but large $N$, however, we expect the time evolution of local boundary operators to correspond to bulk dynamics with small (perturbative) quantum-gravitational effects.
If RTNs are a good model of holography in this regime, we would expect $D_\text{eff}$ to provide a measure of the size of the physical Hilbert space in which such dynamics can occur. 
For saturating the upper bound \eqref{EQ_EFF_DIM} for $1/D_\text{eff}$, we require that the bulk state vector $\ket{\Phi}$ and the boundary reference state vector $\ket{\phi}$ are close to product states, i.e., have small entanglement. These are reasonable assumptions in the holographic context: A highly entangled bulk state would corresponds to a strong deformation on the geometry (and possibly even topology) far from the (finite-$N$) vacuum, while a boundary Hamiltonian for a $1+1$-dimensional holographic CFT should be spatially local and hence have eigenstates with no more that area-law entanglement.
The inverse of the bound \eqref{EQ_EFF_DIM} thus determines the minimal $D_\text{eff}$ associated with an RTN state.
Assuming then that this quantity counts degrees of freedom of holographic bulk dynamics, we find this counting to be consistent with gravitational effects: Replacing a bulk region by a black hole, i.e., fusing a set of tensors with some nontrivial geometry into a single random tensor, increases $D_\text{eff}$. This is consistent with a degree-of-freedom counting in the presence of gravity, where the maximum entropy of a spatial region with boundary area $A$ is that of a black hole with $S_\text{BH}={A}/{(4G)}$ \cite{Bekenstein:1973ur,Hawking:1974sw} and horizon area $A$. Concretely, the effective dimension of such a black hole tensor takes the form \eqref{EQ_EFF_DIM_MIN}, which for a large ``horizon'' dimension $\dim H_\text{hor} = a^n$ takes the simple form $D_\text{eff} \approx \frac{1}{2} \dim H_\text{hor}$.
In this picture, the lowest-energy ensemble we can prepare with an RTN preserves the maximum number of discrete hyperbolic symmetries, i.e., has the geometry of a uniform, regular hyperbolic tiling. Such an ensemble should therefore describe finite-$N$ fluctuations with sufficiently low energy (i.e., small back-reaction) to approximately preserve the RTN geometry. In semiclassical AdS spacetime, field excitations generally lead to the formation of black holes \cite{Bizon:2011gg}, so $D_\text{eff}$ could be interpreted as measuring the horizon area of an initial configuration time-evolved to asymptotically late times. RTNs whose bulk geometry is deformed by a fusion of tensors would describe higher-energy initial ensembles that already contain black holes, evolving towards asymptotic configurations with larger horizons and higher $D_\text{eff}$. An interesting future direction would be to explore how positive- and negative-curvature RTN deformations, as well as contributions from highly entangled bulk states, affect $D_\text{eff}$ and whether these results are consistent with a more careful degree-of-freedom counting in AdS/CFT.
An interesting aspect of our small-system calculations in Fig.~\ref{fig:ising-tn-geo_eff-dim}(b-c) is that there exists a bond dimension value for which the value of $D_\text{eff}$ coincides for flat and hyperbolic geometries. This may be a signature of an emergent Weyl invariance when viewing the RTNs as a path integral preparing an ensemble of boundary states, as previously observed in the context of holographic path-integral complexity \cite{Caputa:2017urj,Caputa:2017yrh}.

Our results display the versatility of tensor network models in describing aspects of holographic CFTs. In particular, these CFTs are expected to satisfy a version of the ETH \cite{chen2024holographicrenyientropy2d}. The notion of ETH is associated with the thermodynamic limit. The standard thermodynamic limit is one reached by taking the limit of infinite system size $L \xrightarrow{} \infty$, however, there is an ``internal" thermodynamic limit for CFTs which is obtained in the large central charge limit $c\to\infty$. This is a necessary condition for the theory to have a weakly coupled gravity dual through AdS/CFT in which the phenomenon of thermalization is related to black hole formation and evaporation. The two thermodynamic limits can be taken simultaneously for $L \gg \beta$ which is dual to high-temperature black holes. 
RTNs similarly allow for two types of limits: One in which the total size of the TN geometry diverges, and another in which the physical and bond dimensions $a$ and $b$ diverge, with the latter describing an infinite number of local degrees of freedom similar to the $c\to\infty$ limit. As we have shown, RTNs with a hyperbolic geometry generally lead to a divergent $D_\text{eff}$ and hence perfect equilibration in both limits. 

Tensor networks have been shown to display a flat entanglement spectrum which corresponds to fixed area states in gravity. The situation can be remedied by introducing link states. The extension of our results to random tensor networks with link states will be an important step towards displaying equilibration for a larger class of states which behave like CFT states. The non-EPR contractions in this case will be upper bounded by the EPR contractions due to the Schmidt decomposition of non-maximally entangled states. 

The large $N$ limit of RTNs and equilibration therein is another subtle issue. Local operator algebras of boundary of infinite tensor networks have been studied using inductive limit algebras \cite{chemissany2025infinitetensornetworkscomplementary}. The work has been restricted to perfect tensors, however. RTNs in the infinite bond dimension limit have entanglement properties very similar to perfect tensors. Extensions of the inductive limits to RTNs will prove useful for characterizing the operator algebras of phases of matter and holographic systems in probabilistic/ensemble settings. 

It is worth mentioning that gravity behaves rather similarly to hydrodynamics \cite{banks2025hydrodynamicapproachquantumgravity,hubeny2011fluidgravitycorrespondence}. In this work, we derived statistical mechanics in a sense. Deriving hydrodynamics \cite{Banks_2019, wang2025eigenstatethermalizationhypothesiscorrelations, Doyon_2022} of a specific type may very well be crucial for generating an effective gravitational description of holographic tensor networks. Some steps have been made in the direction of describing operator spreading of random circuits in terms of hydrodynamics \cite{Nahum_2017,Nahum_2018, Rakovszky_2018}, however, a full picture is still lacking. The hydrodynamic description of many-body systems is one where a systems evolution is described in terms of emergent macroscopic degrees of freedom. The validity of such an approximation is contingent on properties of macroscopic observables like density and velocity. Holography can then be understood as a hydrodynamic description of boundary entanglement where the entropies are its macroscopic phase space \cite{Bao_2015}.

There are other intriguing directions into which our work could be extended. Here we only considered RTNs with boundary theories in one spatial dimension, but in principle one can consider boundary and bulk systems in any dimension. Some of our results, in particular Lemma \ref{LEM_RTN_EFF_DIM_GROWTH} for vertex fusion, already apply to RTNs on arbitrary graphs. It seems plausible that such higher-dimensional RTNs can shed light on the equilibration behavior of critical holographic theories in higher dimensions.

Another possible direction concerns the connection between RTN ensembles and typicality in gravity.
We saw that at large bond dimension, many classes of RTNs show asymptotic equilibration, meaning that their boundary theories can be described by simple quantum mechanics for typical states, consistent with standard holography. However, bulk gravity can in some settings (such as $1+1$D Jackiw-Teitelboim (JT) gravity) be dual to a boundary \emph{ensemble theory} \cite{saad2019semiclassicalrampsykgravity}. It would be interesting to explore settings of non-equilibrating RTNs (such as those with flat-space bulk geometry) whose boundary ensemble cannot be described by a typical state. These might describe settings of holography where the boundary theory has a fundamental ensemble nature.
%\textcolor{blue}{Note about typicality for gravity. Applicability of ensembles is an additional - there should be a measure concentration phenomena so that a single member of the ensemble should already be close to the gravitational dual. The random tensor network is an example of this.}

In our study of equilibration we also restricted ourselves to observables that are of low complexity, with quantities such as the effective dimension $D_\text{eff}$ being in principle accessible to a low-complexity observer. This could have a bearing on recent studies of complexity constraints on observers in holographic bulk spacetimes, in particular those whose complexity is constrained by a \emph{non-isometric} encoding \cite{akers2022blackholeinteriornonisometric,harlow2025quantummechanicsobserversgravity, Akers:2025ahe}. In such settings, $D_\text{eff}$ may be related to the dimension of the fundamental Hilbert space onto which the non-isometric code projects.

Finally, we would like to point out an interesting relation that is known in random matrix theory. The eigenvalue distribution of CUE ensembles is the same as the partition function obtained from Chern-Simons theories as a result of the Coulomb-gas formalism \cite{forrester2010loggases}. The Coulomb-gas formalism has previously been exploited in conformal minimal models and studies of gravity \cite{Pi_tek_2022}. We discuss part of the connection in \ref{APP:coulombgas}. It would be interesting to investigate whether these results can have potential bearings on random tensor networks and random circuit models of black holes \cite{akers2022blackholeinteriornonisometric,harlow2025quantummechanicsobserversgravity,Akers:2025ahe}. \\

\section{Acknowledgments}
We would like to warmly thank Chris Akers, Alexander Altland, Dmitry Bagrets, Christian Bertoni, Lennart Bittel, Nele Callebaut, ChunJun Cao, Sebastian Diehl, Antonio Anna Mele, Silvia Pappalardi, Jason Pollack, Leo Shaposhnik, Spyros Sotiriadis, Tadashi Takayanagi, 
Maksimiliam Usoltcev, Konstantin Weisenberger, 
Zhuo-Yu Xian for very inspiring discussions. This work has been supported by the Einstein Research Unit on Quantum Devices, Berlin Quantum,
the DFG (CRC 183 and FOR 2724), the Clusters of Excellence MATH+ and ML4Q, the BMFTR (MuniQC-Atoms), and the 
European Research Council (DebuQC).

\appendix

\section{Ensembles and typicality in statistical mechanics}
\label{APP:DETEQUIB}

Here we provide a deeper background into ensembles and typicality, largely following the definitions and notations of Ref.~\cite{Gogolin_2016}.
In quantum statistical mechanics, one defines the \emph{micro-canonical ensemble} and \emph{state} with respect to an energy interval $[E, E+ \Delta]$. The micro-canonical state to any subset $R \subseteq \mathbb{R}$ of the real numbers of a system with Hilbert space $\mathcal{H}$ and Hamiltonian $H \in \mathcal{O}(\mathcal{H})$ with spectral decomposition $H= \sum_{k=1}^{d'}E_k \Pi_k$ is defined as 
\begin{equation}
\label{microcanonical}
    \sqcap[H](R) := \frac{\sum_{k:E_k \in R} \Pi_k}{Z_{\text{mc}}[H](R)} \in \mathcal{S}(\mathcal{H}),
\end{equation}
where $Z_{\text{mc}}[H]$ is the micro-canonical partition function defined as 
\begin{equation}
    Z_{\text{mc}}[H](R) := \Tr(\sum_{k:E_k \in R}\Pi_k).
\end{equation}
Having defined the micro-canonical ensemble, we move onto typicality which replaces the postulate that systems are described by an ensemble with the theorem that for almost every 
pure state in a large Hilbert space region, 
local and coarse observables already look as if the state had been drawn from that ensemble. The subject of typicality has a long history in statistical mechanics and 
quantum thermodynamics \cite{schrodinger1927energieaustausch, neumann2010proof,lloyd1988phd,lloyd2013purestatequantumstatistical,gemmer2009quantum,goldstein2006canonical,Gogolin_2016,BertoniThermal, Eisert_2015,1112.5295}.

These arguments can be formulated in the context of drawing state vectors uniformly from a high dimensional subspace and having reduced states on small subsystems appear similar to the reduced state of the micro-canonical state corresponding to that subspace. Drawing a state vector uniformly at random from a subspace intuitively means that any state from the subspace is as probable as any other. This can be made mathematically precise by the notion of left/right invariant measures. Haar's theorem implies that for any finite $d$ there is a unique left and right invariance, countably additive, normalized measure on the unitary group $U(d)$ called the Haar measure on $U(d)$ which we denote here by $\mu_{\text{Haar}}[U(d)]$. Left and right invariance simply means that for any unitary $U \in U(d)$ and any Borel set $\mathcal{B} \subseteq U(d)$
\begin{equation}
    \mu_{\text{Haar}}[U(d)](\mathcal{B}) = \mu_{\text{Haar}}[U(d)](U \mathcal{B}) = \mu_{\text{Haar}}[U(d)](\mathcal{B}U),
\end{equation}
where $U \mathcal{B}$ and $\mathcal{B}U$ are the left and right translations of $\mathcal{B}$. In this sense, the Haar measure $\mu_{\text{Haar}}[U(d)]$ is the uniform measure on $U(d)$. 
The Haar measure on the group of unitaries that map a restricted subspace $\mathcal{H}_R \subseteq \mathcal{H}$ of dimension $d_R$ into itself induces in a natural way, a uniform measure $\mu_{\text{Haar}}[\mathcal{H}_R]$ on state vectors $\ket{\psi} \in \mathcal{H}_R$. We call state vectors and pure quantum state vectors $\ket{\psi}\bra{\psi}$ drawn according to this measure, Haar random and write $\ket{\psi} \sim \mu_{\text{Haar}}[\mathcal{H}_R]$. 
A practical way to obtain state vectors distributed according to this measure is to fix a basis $\{\ket{j}\}_{j=1}^{d_R}$ for the subspace $\mathcal{H}_R$ and then draw the real and imaginary part of $d_R$ complex numbers $\{c_j\}_{j=1}^{d_R}$ from normal distributions of mean zero and variance one. The state vector 
\begin{equation}
    \ket{\psi} = \frac{\sum_{j=1}^{d_R}c_j \ket{j}}{(\sum_{j=1}^{d_R} |c_j|^2)^{1/2}}
\end{equation}
is then distributed according to $\mu_{\text{Haar}}[\mathcal{H}_R]$, 
i.e., $\ket{\psi} \sim \mu_{\text{Haar}}[\mathcal{H}_R]$ \cite{Zyczkowski_2001}. We denote the probability that an assertion $\mathbb{A}(\ket{\psi})$ about a state vector as being true if $\ket{\psi} \sim \mu_{\text{Haar}}[\mathcal{H}_R]$ by $\mathbb{P}_{\ket{\psi} \sim \mu_{\text{Haar}}[\mathcal{H}_R]}(\mathbb{A}(\ket{\psi}))$. Typicality is a consequence of measure concentration and can be made more quantitative by deviation bounds, in particular Levy's lemma as:

\begin{thm}
\label{measureconcentration}
    [Measure concentration for quantum state vectors \cite{Gogolin_2016}] Let $R \subset \mathbb{R}$ and let $\mathcal{H}_R \subseteq \mathcal{H}$ be a subspace of the Hilbert space $\mathcal{H}$ of a system with a Hamiltonian $H \in \mathcal{O}(\mathcal{H})$ which is spanned by the eigenstates of $H$ to energies in $R$ and let $d_R := \dim(\mathcal{H}_R)$. Then for every $\epsilon>0$ it holds that: \\
    1. For any operator $A \in \mathcal{B}(\mathcal{H})$
    \begin{align}
        \mathbb{P}_{\ket{\psi} \sim \mu_{\text{Haar}}[\mathcal{H}_R]}(|\langle A \rangle_{\bra{\psi}\ket{\psi}} - \langle A \rangle_{\sqcap[H](R)}| \geq \epsilon) \quad\nonumber\\
        \leq 2e^{\frac{-C d_R \epsilon^2}{\norm{A}_\infty^2}} \ ,
    \end{align}
    2. For any set $\mathcal{M}$ of positive operator-valued measurements (POVMs)
    \begin{align}
        \mathbb{P}_{\ket{\psi} \sim \mu_{\text{Haar}}[\mathcal{H}_R]}(\mathcal{D}_{\mathcal{M}}(\ket{\psi}\bra{\psi}, \sqcap[H](R)) \geq \epsilon) \quad\nonumber\\
        \leq 2h(\mathcal{M})^2 e^{\frac{-C d_R \epsilon^2}{h(\mathcal{M})^2}} \ ,
    \end{align}
    where $C=\frac{1}{36\pi^3}$ and 
    \begin{equation}
        h(\mathcal{M}) := \min(|\cup \mathcal{M}|,\dim(\mathcal{H}_{\text{supp}(\mathcal{M})})) \ ,
    \end{equation}
    where $\mathcal{D}(\rho,\sigma)$ is the trace distance.
\end{thm}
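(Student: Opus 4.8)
\textit{Proof proposal.} The plan is to derive both statements from Levy's lemma, the concentration-of-measure inequality for Lipschitz functions on high-dimensional spheres: if $f:S^{m-1}\to\mathbb{R}$ is $L$-Lipschitz and $x$ is drawn uniformly, then $\mathbb{P}(|f(x)-\mathbb{E}[f]|\geq\epsilon)\leq 2\exp(-C' m\epsilon^2/L^2)$ with a universal constant $C'$. First I would identify the Haar measure $\mu_{\text{Haar}}[\mathcal{H}_R]$ with the uniform measure on the real unit sphere $S^{2d_R-1}\subset\mathbb{C}^{d_R}\cong\mathbb{R}^{2d_R}$ obtained by fixing the eigenbasis of $H$ restricted to $\mathcal{H}_R$, exactly as in the Gaussian construction in the main text. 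The precise constant $C=1/(36\pi^3)$ is then inherited from the standard form of Levy's lemma on this sphere.

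For part 1, I would first record that the Haar average reproduces the microcanonical state: by left-invariance (Schur's lemma), $\mathbb{E}_{\psi}[\ketbra{\psi}{\psi}]=P_R/d_R=\sqcap[H](R)$, where $P_R$ projects onto $\mathcal{H}_R$, and hence $\mathbb{E}_\psi[\langle A\rangle_\psi]=\tr(\sqcap[H](R)\,A)=\langle A\rangle_{\sqcap[H](R)}$. For Hermitian $A$ the function $f(\ket\psi)=\langle\psi|A|\psi\rangle$ is real-valued, and the elementary estimate
\begin{equation}
|\langle\psi|A|\psi\rangle-\langle\phi|A|\phi\rangle|\leq 2\,\norm{A}_\infty\,\norm{\ket\psi-\ket\phi}
\end{equation}
shows it is $2\norm{A}_\infty$-Lipschitz on the sphere. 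Applying Levy's lemma with $\mathbb{E}[f]=\langle A\rangle_{\sqcap[H](R)}$ then yields the stated bound, the factor of $2$ in the Lipschitz constant being absorbed into $C$. A general $A\in\mathcal{B}(\mathcal{H})$ is handled by splitting into its Hermitian and anti-Hermitian parts, each of norm at most $\norm{A}_\infty$.

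For part 2, the target function is $g(\ket\psi)=\mathcal{D}_{\mathcal{M}}(\ketbra{\psi}{\psi},\sqcap[H](R))=\tfrac12\sup_{\{M_i\}\in\mathcal{M}}\sum_i|\tr(M_i(\ketbra{\psi}{\psi}-\sqcap[H](R)))|$. I would (i) show $g$ is Lipschitz with a constant controlled by $h(\mathcal{M})$, using that for a single POVM the quantity $\sum_i|\tr(M_i\,\Delta)|$ is governed by the number of outcomes and by the dimension of the support of the measurement operators, and (ii) bound the expectation $\mathbb{E}[g]$ by applying the part-1 estimate to the individual effect operators $M_i$ together with a union/covering bound over $\mathcal{M}$, so that the relevant count is exactly $h(\mathcal{M})=\min(|\cup\mathcal{M}|,\dim\mathcal{H}_{\text{supp}(\mathcal{M})})$. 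Feeding these two ingredients into Levy's lemma produces the exponential $\exp(-Cd_R\epsilon^2/h(\mathcal{M})^2)$ with the prefactor $2h(\mathcal{M})^2$.

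The main obstacle is precisely steps (i) and (ii) of part 2: the restricted distinguishability is a supremum over a potentially large family of measurements, so both its Lipschitz constant and its mean must be controlled uniformly. The delicate point is to extract the tight $h(\mathcal{M})$-scaling---recognizing when the limiting resource is the total number of measurement outcomes versus the support dimension---rather than a crude bound that would spoil the exponent. The part-1 concentration estimate for single effects is the natural building block, but assembling it into a uniform statement over $\mathcal{M}$ without losing the correct power of $h(\mathcal{M})$ is the technical heart of the argument.
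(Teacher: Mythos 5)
You should first note that the paper itself contains no proof of this theorem: it is quoted verbatim in Appendix~\ref{APP:DETEQUIB} as background material, cited from Ref.~\cite{Gogolin_2016}, so the only meaningful comparison is with the standard proof in that reference, which, as you correctly guess, rests on Levy's lemma. Your part~1 is essentially that standard argument and is sound: unitary invariance gives $\mathbb{E}_{\psi}[\ketbra{\psi}{\psi}]=P_R/d_R=\sqcap[H](R)$, the function $f(\ket{\psi})=\bra{\psi}A\ket{\psi}$ is $2\norm{A}_\infty$-Lipschitz on the real sphere $S^{2d_R-1}$, and Levy's lemma then delivers the bound, with the factor $4$ from the squared Lipschitz constant absorbed into $C=1/(36\pi^3)$. (A minor caveat: your Hermitian/anti-Hermitian splitting for general $A$ degrades the constants relative to the stated $C$, but this is cosmetic at the level of the theorem's content.)

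The genuine gap is in part~2, which you flag as ``the technical heart'' but do not close, and where your proposed mechanism would in fact fail to reproduce the stated inequality. Applying Levy's lemma once to the supremum $g(\ket{\psi})=\mathcal{D}_{\mathcal{M}}(\ketbra{\psi}{\psi},\sqcap[H](R))$ yields a prefactor $2$, never $2h(\mathcal{M})^2$; moreover $g$ is Lipschitz with an $O(1)$ constant \emph{independent} of $h(\mathcal{M})$, since the restricted distinguishability is contractive under the trace distance and $\norm{\ketbra{\psi}{\psi}-\ketbra{\phi}{\phi}}_1\leq 2\norm{\ket{\psi}-\ket{\phi}}$, so the $h(\mathcal{M})$-dependence in the exponent cannot enter through the Lipschitz constant either, and concentration about the mean would leave a residual additive $\mathbb{E}[g]$ in the deviation threshold rather than the clean bound stated. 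The actual mechanism is a union bound, not a single sup-function concentration: one selects a finite family of at most $h(\mathcal{M})^2$ operators of unit norm --- the distinct effects in $\cup\mathcal{M}$, or a Hermitian operator basis of $\mathcal{B}(\mathcal{H}_{\text{supp}(\mathcal{M})})$ (of size $\dim(\mathcal{H}_{\text{supp}(\mathcal{M})})^2$), whichever is smaller, which is precisely where the minimum defining $h(\mathcal{M})$ originates --- observes that $\mathcal{D}_{\mathcal{M}}\geq\epsilon$ forces at least one single-operator expectation to deviate by at least $\epsilon/h(\mathcal{M})$, and applies the part-1 concentration to each event at that reduced threshold. This produces exactly the prefactor $2h(\mathcal{M})^2$ and the exponent $-Cd_R\epsilon^2/h(\mathcal{M})^2$. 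Until you carry out this reduction (your unproven steps (i) and (ii)), part~2 remains a plan rather than a proof, and as formulated the plan points in a direction that cannot yield the quoted bound.
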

Physically, 
the theorem says that for a quantum many-body system, a generic pure 
state restricted to a narrow 
energy shell reproduces, with overwhelming probability, 
the micro-canonical expectation values of all observables.

A physically relevant case is when the reduced state of a random state on some small subsystem from the subspace corresponding to some energy interval is indistinguishable (with high probability) from 
the reduction of the corresponding micro-canonical state. 

To be precise, we consider the case when $\text{supp}(\mathcal{M})$ is contained in some small subsystem $S \supseteq \text{supp}(\mathcal{M})$ and $R=[E, E+ \Delta]$ is some energy interval. Then the theorem yields a probabilistic bound on the distance $\mathcal{D}(\ket{\psi}\bra{\psi}^S, \sqcap^S[H](E, E+ \Delta))$. If $\ket{\psi} \sim \mu_{\text{Haar}}[\mathcal{H}_R]$ and the dimension $d_R$ of the micro-canonical subspace $\mathcal{H}_R$ to the energies in the interval $[E, E+ \Delta]$ fulfills $d_R \gg  d_S$, then $\mathcal{D}(\ket{\psi}\bra{\psi}^S, \sqcap^S[H](E, E+ \Delta))$ is small with very high probability. 

The same situation will hold in the more general setting that one only has access only to a sufficiently small number of measurements, which in total have a sufficiently small number of different outcomes. If the total number of different outcomes $|\cup \mathcal{M}|$ is much smaller than the dimension of the subspace corresponding to the energy interval 
$[E,E+\Delta]$, a random state from this subspace is with high probability indistinguishable from the micro-canonical state. 

For a family of Hamiltonians for locally interacting quantum systems with increasing system size, if $\Delta$ is kept fixed and $E$ is chosen such that $R=[E,E+\Delta]$ is not too close to the boundaries of the spectrum of the Hamiltonian, then $d_R$ will typically grow exponentially with the system size $|\mathcal{V}|$. For a locally interacting system with a macroscopic number of constituents one would need to be able to distinguish an exponentially large number of measurement outcomes to have a realistic change of distinguishing a random state from a micro-canonical state. 

Typicality arguments have sometimes been called “unphysical”. The concept of typicality is complementary to other approaches towards the foundations of statistical mechanics and thermodynamics, such as ergodicity, the principle of maximum entropy, or postulating ensembles. However, especially with respect to the latter, typicality has at last one important advantage: It does not postulate that a certain ensemble yields a reasonable description of a certain physical situation, typicality shows, in a mathematically very well-defined way, when and why details do not matter. If most states anyway exhibit the same or very similar properties, then this does provide a heuristic, but convincing, argument in favor of the applicability of ensembles. It is hence an argument supporting a description of large systems with ensembles.

In order to be self-contained, we review aspects of pure state equilibration 
(for a review, see Ref.\ \cite{Gogolin_2016}).
The dynamics of finite dimensional quantum systems is manifestly recurrent and invariant under time reversal. This is in contrast to the thermodynamic behaviour observed in nature and the crux of the difficulty in deriving statistical mechanics and thermodynamics from quantum mechanics. 
This apparent contradiction 
can, however, be resolved to a large extent by considering the unitary time evolution of pure states as certain time dependent properties do indeed dynamically equilibrate. 

There are two notions of equilibration compatible with recurrence and time reversal invariance for finite dimensional quantum systems (equilibration on average and equilibration during intervals). Both of these notions capture the intuition that equilibration means that a quantity, after having been initialized at a non-equilibrium value, evolves towards some value and then stays close to it for an extended amount of time. What is referred to as equilibration is weaker than what one usually associated with the evolution towards thermal equilibrium. 

We can refer abstractly to time dependent properties of quantum systems by functions $f: \mathbb{R} \xrightarrow{} M$ that map time to some metric space $M$, for example $\mathbb{R}$ or $\mathcal{S}(\mathcal{H})$. The metric quantifies the values of the function at various times and how close a system is to its time average or ``equilibrium value''. 

We will be interested in properties that include the time evolution of one point functions. For this purpose, the notion of subsystem equilibration will be useful. In this case, the property is the time evolution of the state of the subsystem and the metric is the trace distance. We will focus on equilibration on average.  

\begin{dfn}
    [Equilibration on average] A time-dependent property equilibrates on average if its value is for most times during the evolution close to some equilibrium value. 
\end{dfn}

Equilibration on average, especially for expectation values of observables as well as for reduced states of small subsystems of large quantum systems is a provably generic feature. This implies that the equilibrium property will, for most times during the evolution, be close to the time average. This allows for a reasonable definition of an equilibrium state which is, for example, the time averaged state. An unfortunate short-coming is that on its own, equilibration on average does not have any implications on the time scales for which the equilibrium value is reached, after a system has started in an out of equilibrium situation. While these time scales can be bounded, they have a limited physical relevance. 

Proofs on equilibration require a few main ingredients. The first is the large dimension of the Hilbert space of most many body systems. The dimension of the Hilbert space of composite systems grows exponentially with the number of constituents. The actual object of concern is the number of significantly occupied energy levels. For each $k \in [d']$ the occupation of the $k-$th energy level is defined as $p_k:= \Tr(\Pi_k \rho(0))$ where $d':= |\text{spec}(H)| \leq d = \dim(\mathcal{H})$ is the number of distinct such levels. We can use $\max_kp_k$, the occupation of the most occupied level, to quantify the number of significantly occupied energy levels. For this purpose, we have the effective dimension, denoted as $D_{\text{eff}}$, which may be defined as 
\begin{equation}
    D_{\text{eff}} := \frac{1}{\sum_{k=1}^{d'}p_k^2} \geq \frac{1}{\max_k p_k}.
\end{equation}
If the initial state is taken to be an energy eigenstate, the resulting effective dimension is one, while if it results from a uniform coherent superposition of $\Tilde{d}$ energy eigenstates to different energies is $\Tilde{d}$. This justifies the interpretation of $d^{\text{eff}}(\omega)$ as a measure of significantly occupied states. It is a reciprocal to a quantity that is known as the inverse participation ration and is the time average of the Loschmidt echo. Using the effective dimension instead of the occupation of the most occupied level can lead to tighter bounds - it cannot, however, be efficiently computed given a state and the Hamiltonian. 
There are many ways to argue why it is acceptable to restrict oneself to initial states that populate a large number of energy levels when trying to prove the emergence of thermodynamic behavior from unitary dynamics in closed systems. The first one is that initial states which occupy a small subspace of Hilbert spaces of larger systems will behave essentially like small quantum systems and are not expected to behave thermodynamically but instead genuinely show quantum behavior. We will take the perspective of a measure concentration which guarantees that uniformly random pure states drawn from large subspaces of a Hilbert space will have, with extremely high probability, an effective dimension with respect to any fixed, sufficiently non-degenerate Hamiltonian that is comparable to the dimension of that subspace. If one is willing to assume that such states are physically natural initial states, this can justify the assumption of large effective dimension. 
For the purposes of equilibration, it will be sufficient that $\max_{k}'p_k$, the second largest energy level occupation is small. In a realistic physical situation that a system is cooled close to its ground state $\max_{k}'p_k$ can be orders of magnitude smaller than the inverse effective dimension. The physical intuition is that expectation values of observables of a system that are initialized in an energy eigenstates are already in equilibrium. What prevents equilibration on average are not macroscopic populations in a single energy level, but rather initial states which are coherent superpositions of a small number of energy eigenstates which will exhibit oscillations. 
A second ingredient to equilibration proofs is that of non-degenerate energy gaps which is called non-resonance. A Hamiltonian $H$ has non-degenerate energy gaps if for every $k,l,m,n \in [d']$
\begin{equation}
    E_k-E_l=E_m-E_n \longrightarrow (k=l \land m=n) \lor (k=m \land l=n),
\end{equation}
which is to say that every energy gap $E_k-E_l$ appears exactly once in the Hamiltonian spectrum. Older proofs exclude all Hamiltonians with degeneracies - the motivation for this condition is that the on average equilibration of the reduced state $\rho^S(t)$ of a small subsystem $S$ of a bipartite system with $\mathcal{V}=S \Dot{\cup} B$. If the Hamiltonian of the composite system is of the form 
\begin{equation}
    H = H_S + H_B,
\end{equation}
i.e., $S$ and $B$ are not coupled, then $\rho^S(t)$ will evolve unitarily and not equilibrate. Thus a condition is needed which excludes such  non-interacting Hamiltonians - the non-degeneracy of energy gaps is an elegant yet simple way to do this. More recent literature relaxes the conditions of non-degenerate energy gaps. One can restrict the maximal number of energy gaps in any energy interval of $\epsilon$ to 
\begin{equation}
\label{energylevel}
    N(\epsilon) := \sup_{E \in \mathbb{R}}|\{ (k,l) \in [d']^2: k \neq l \land E_k - E_l \in [E,E+\epsilon]\}|.
\end{equation}
Note that $N(0)$ is the number of degenerate energy gaps and a Hamiltonian $H$ satisfies the non-degenerate energy gap conditions iff $N(0)=1$. The definition allows one to prove an equilibration theorem that will work if a system has a small number of degenerate energy gaps. It has the advantage that it allows one to make statements about equilibration time. Equilibration on average can be guaranteed to happen on a time scale $T$ that is large enough such that $T \epsilon \gg 1$ where $\epsilon$ must be chosen small enough so that $N(\epsilon)$ is small compared to the number of significantly populated energy levels. 
\begin{thm}[Equilibration on average \cite{Gogolin_2016}]
\label{equibonav}
    Given a system with Hilbert space $\mathcal{H}$ and Hamiltonian $H \in \mathcal{O}(\mathcal{H})$ with spectral decomposition $H = \sum_{k=1}^{d'} E_k \Pi_k$. For $\rho(0) \in \mathcal{S}(\mathcal{H})$ the initial state of the system, let $\omega = \$_H(\rho(0))$ be the dephased state and define energy level occupations $p_k:=\Tr(\Pi_k \rho_0)$. Then for every $\epsilon, T >0$ it holds that: \\
    1. For any operator $A \in \mathcal{B}(\mathcal{H})$
    \begin{equation}
        \overline{(\langle A \rangle_{\rho(t)} - \langle A \rangle_\omega)^2}^T \leq \norm{A}_{\infty}^2 N(\epsilon)f(\epsilon T)g((p_k)_{k=1}^{d'}) \ .
    \end{equation}
    2. For every set $\mathcal{M}$ of POVMs
    \begin{equation}
        \overline{\mathcal{D}_{\mathcal{M}}(\rho(t), \omega)}^T \leq h(\mathcal{M}) (N(\epsilon)f(\epsilon T)g((p_k)_{k=1}^{d'}))^{1/2} \ ,
    \end{equation}
    where $N(\epsilon)$ is defined in \ref{energylevel}, $f(\epsilon T):= 1+8 \log_2(d')/(\epsilon T)$,
    \begin{align}
        & g((p_k)_{k=1}^{d'})) := \min(\sum_{k=1}^{d'}p_k^2, 3 
        \max_{k}' p_k) \ , \\  
        & h(\mathcal{M}):= \min( |\cup \mathcal{M}/4|, \dim(\mathcal{H}_{\text{supp}(\mathcal{M})})/2 ) \ ,
    \end{align}
    with $\max_{k}' p_k$ the second largest 
    element in $(p_k)_{k=1}^{d'}$, $\cup \mathcal{M}$ the set of all distinct POVM elements in $\mathcal{M}$, and $\text{supp}(\mathcal{M}):= \bigcup_{M \in \cup \mathcal{M}}\text{supp}(\mathcal{M})$, and
    and we define the finite time average $\overline{f}:= \frac{1}{T}\int_0^T f(t) dt$ and the infinite time average $\overline{f}:= \lim_{T \xrightarrow{} \infty} \overline{f}^T$.
\end{thm}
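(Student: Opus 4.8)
The plan is to establish both inequalities by passing to the eigenbasis of $H$, expressing the finite-time-averaged fluctuation as a double sum over energy gaps, and then splitting the control into a purely spectral factor $N(\epsilon)f(\epsilon T)$ and a state-dependent factor $g((p_k))$. First I would write $H=\sum_k E_k\Pi_k$ and observe that the dephasing map $\$_H$ annihilates all coherences between distinct energy sectors, so that $\omega=\sum_k\Pi_k\rho(0)\Pi_k$ and, for any $A$,
\begin{equation}
\langle A\rangle_{\rho(t)}-\langle A\rangle_\omega=\sum_{k\neq l}\tr[A\,\Pi_k\rho(0)\Pi_l]\,e^{\ii(E_k-E_l)t}.
\end{equation}
Squaring (the left side is real) and applying $\overline{(\cdot)}^T=\frac1T\int_0^T(\cdot)\,\mathrm{d}t$ produces a double sum over pairs of gaps $G=E_k-E_l$ and $G'=E_m-E_n$ weighted by $\overline{e^{\ii(G-G')t}}^T$, whose modulus I bound by $\min(1,2/(|G-G'|T))$.

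The combinatorial heart of the argument, and the step I expect to be the main obstacle, is to convert this weighted double sum into the factor $N(\epsilon)f(\epsilon T)$ times a single, infinite-time-type sum. Following the finite-time technique underlying the review result, I would coarse-grain the set of gaps into bins of width $\epsilon$: contributions with $|G-G'|<\epsilon$ are counted using the non-resonance bound $N(\epsilon)$ defined in \eqref{energylevel}, which caps how many gap-pairs can fall in any window of width $\epsilon$ and yields the leading ``$1$'' in $f$; contributions with $|G-G'|\ge\epsilon$ are suppressed by the $2/(|G-G'|T)$ tail, and summing these tails over the $O(\log_2 d')$ dyadic separation scales produces the correction $8\log_2(d')/(\epsilon T)$. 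Getting the non-resonance cap to interlock cleanly with the dyadic tail summation, with the stated constants, is where essentially all the technical difficulty lies.

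It then remains to bound the residual sum $\sum_{k\neq l}|\tr[A\Pi_k\rho(0)\Pi_l]|^2$ by $\|A\|_\infty^2\,g((p_k))$. For the $\sum_kp_k^2$ branch I would use positivity of $\rho(0)$ to write $|\tr[A\Pi_k\rho(0)\Pi_l]|^2\le\|A\|_\infty^2\,|\rho_{kl}|^2\le\|A\|_\infty^2\,p_kp_l$, and then close the estimate with the elementary inequality $p_kp_l\le\tfrac12(p_k^2+p_l^2)$ together with the row-sum bound $\sum_l|A_{kl}|^2=(AA^\dagger)_{kk}\le\|A\|_\infty^2$; this reproduces the mixed-state generalization of Lemma \ref{fluclem}. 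The alternative branch $3\max_k' p_k$ follows from a refinement in which the single largest population is separated off as a stationary contribution, leaving the fluctuations controlled by the second-largest occupation. Taking the minimum of the two estimates yields $g((p_k))$ and completes the first inequality.

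For the second inequality I would reduce distinguishability under the POVM set $\mathcal{M}$ to the observable bound just proved. Each outcome operator $M\in[0,I]$ defines a bounded observable with $\|2M-I\|_\infty\le1$, so part one controls the time-averaged square of each outcome deviation; pulling the time average through the square root by Jensen's inequality turns the squared bound into its square root, and the counting of effectively distinguishable outcomes supplies the prefactor $h(\mathcal{M})$, giving $h(\mathcal{M})\,(N(\epsilon)f(\epsilon T)g((p_k)))^{1/2}$. The spectral factor is shared with part one and needs no rework, so beyond the finite-time coarse-graining the remaining steps are routine.
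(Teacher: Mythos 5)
You cannot actually be checked against an in-paper proof here: the paper does not prove Theorem~\ref{equibonav}. It is imported verbatim, constants and all, from the review \cite{Gogolin_2016} (which in turn collects the finite-time equilibration results of Short--Farrelly and Reimann), and the appendix only discusses its physical content after stating it. Measured against the literature proof that the citation points to, your architecture is the standard and correct one: dephasing kills the energy coherences, the deviation becomes a sum over distinct gaps, time-averaging produces the gap-pair matrix $M_{\alpha\beta}=\overline{e^{\ii(G_\alpha-G_\beta)t}}^T$, and the spectral factor $N(\epsilon)f(\epsilon T)$ arises from bounding $\norm{M}$ by partitioning gaps into windows of width $\epsilon$ (at most $N(\epsilon)$ gaps per window, unit bounds for nearby windows, $2/(|G_\alpha-G_\beta|T)$ tails summed over window separations yielding the $\log_2 d'$), with the state-dependent factor split off as $\sum_\alpha |v_\alpha|^2$, $v_\alpha=\tr[A\rho_\alpha]$.

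However, your bound on the residual sum is broken as written. Once you estimate $|\tr[A\Pi_k\rho(0)\Pi_l]|^2\le\norm{A}_\infty^2\,p_kp_l$ you have discarded the matrix structure of $A$, and all that then follows is $\sum_{k\neq l}p_kp_l\le(\sum_kp_k)^2=1$ --- an $O(1)$ bound, not $\sum_kp_k^2$; conversely, the row-sum bound $\sum_l|A_{kl}|^2=(AA^\dagger)_{kk}\le\norm{A}_\infty^2$ is no longer available once $A$ has been replaced by its norm, so the two estimates you propose to combine are mutually exclusive. The correct route keeps $|A_{kl}|^2$ in play: in the pure non-degenerate case $|v_{kl}|^2=p_kp_l|A_{kl}|^2$, and Cauchy--Schwarz plus the pinching inequality closes the bound exactly as in the paper's Lemma~\ref{fluclem}; for mixed $\rho(0)$ and degenerate blocks the operator analogue gives $\sum_\alpha|v_\alpha|^2\le\norm{A}_\infty^2\tr[\omega^2]\le\norm{A}_\infty^2\sum_kp_k^2$. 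Two smaller gaps: the $3\max_k'p_k$ branch is not a cosmetic ``refinement'' --- it requires the separate argument in which the most-occupied level is split off and the cross terms are controlled by the second-largest population, and it does not follow from your $\sum_kp_k^2$ chain; and in part~2 you only sketch the $|\cup\mathcal{M}|/4$ branch of $h(\mathcal{M})$, whereas the $\dim(\mathcal{H}_{\text{supp}(\mathcal{M})})/2$ branch needs the distinct argument of expanding the subsystem deviation $\rho^S(t)-\omega^S$ in an operator basis of the measured subsystem rather than counting outcomes. With those repairs your proposal reproduces the proof the paper cites.
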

Let us elaborate upon the physical content of this theorem. The quantity $g((p_k)_{k=1}^{d'})$ is small, except if the intial state assigns large populations to a few but greater than one energy levels. For initial states with a reasonable energy uncertainty and large enough systems, it will be on the order of the reciprocal to the total number of distinct energy level $O(1/d')$. The quantity $h(\mathcal{M})$ measures the experimental capability of distinguishing quantum states and can be assumed to be smaller than $d'$. When all measurements in $\mathcal{M}$ have a support inside a small subsystem $S \subset \mathcal{V}$ it is bounded by $d_S/2$/ The theorem also implies an upper bound on $\overline{\mathcal{D}_{\mathcal{M}}(\rho(t), \omega)}^T$ and hence proves subsystem equilibration on average. For a fixed $H$ and $\epsilon >0$ we have $\lim_{T \xrightarrow{} \infty}f(\epsilon T)=1$, hence the theorem proves that for a wide class of reasonable initial states, equilibrium on average of all sufficiently small subsystems and apparent equilibration on average of the state of the full system under realistic restrictions on the number of different measurements can be preformed. 
The timescales on which equilibration can be reached is also something analyzable. The product $N(\epsilon)f(\epsilon T)$ which is lower bounded by one, will be typically close to one if $T$ is comparable to $d'^2$, i.e., to the total number of energy gaps, and will otherwise be on the order of $\Omega(d'^2/T)$ for smaller $T$. So even under the reasonable assumption that $g((p_k)_{k=1}^{d'})$  is of the order or $O(1/d')$, equilibration of a subsystem $S$ can only be guaranteed after a time $T$ which is roughly on the order of $\Omega(d_s^2 d')$. Both $d'$ and $d_S$ typically grow exponentially with the size of the composite system and the subsystem $S$ respectively. Times of order $\Omega(d_s^2 d')$ are unphysical for most systems of moderate size. 
\vspace{10pt}

\section{Fluctuations of the RTN norm and large deviation bounds}
\label{APP:FLUCTUATIONS}

Here we show that relative fluctuations of the RTN state norm $\braket{\psi}{\psi}/\mathbb{E}[\braket{\psi}{\psi}]$ disappear at large dimension, justifying the on-average normalization we employed in the \eqref{EQ_EFF_DIM}.
By applying Markov's inequality, we find that
\begin{align}
    \mathbbm{P} \left[\left( \frac{\braket{\psi}{\psi}}{\mathbb{E}[\braket{\psi}{\psi}]} - 1 \right)^2 \geq \varepsilon\right] 
    &\leq \frac{1}{\varepsilon}\, \mathbb{E}\left[ \left( \frac{\braket{\psi}{\psi}}{\mathbb{E}[\braket{\psi}{\psi}]} - 1 \right)^2 \right] \nonumber\\
    &= \frac{1}{\varepsilon}\, \left(\frac{\mathbb{E}[(\braket{\psi}{\psi})^2]}{(\mathbb{E}[\braket{\psi}{\psi}])^2} - 1 \right) \ .
\end{align}
We have already computed $\mathbb{E}[\braket{\psi}{\psi}]$ in Eq.\ \eqref{EQ_G_NORM}; assuming the physical dimension $a$ and bond dimension $b$ to be constant, we can write it concisely as  
\begin{align}
    \mathbb{E} [ \braket{\psi} ] = \frac{a^n\, b^{n_\text{int}}}{\prod_{k=1}^{n_\text{v}} q_k} \ .
\end{align}
We can then determine $\mathbb{E}[(\braket{\psi}{\psi})^2]$ via the fourth-order Weingarten formula \eqref{EQ_WEINGARTEN_STATE_4}. Again, we start with the illustrative example \eqref{EQ_TWO_TENSOR_EX}, for which we compute
\begin{widetext}
\begin{align}
    \label{normalization}
    \raisebox{-0.5\height}{\includegraphics[width=0.12\linewidth]{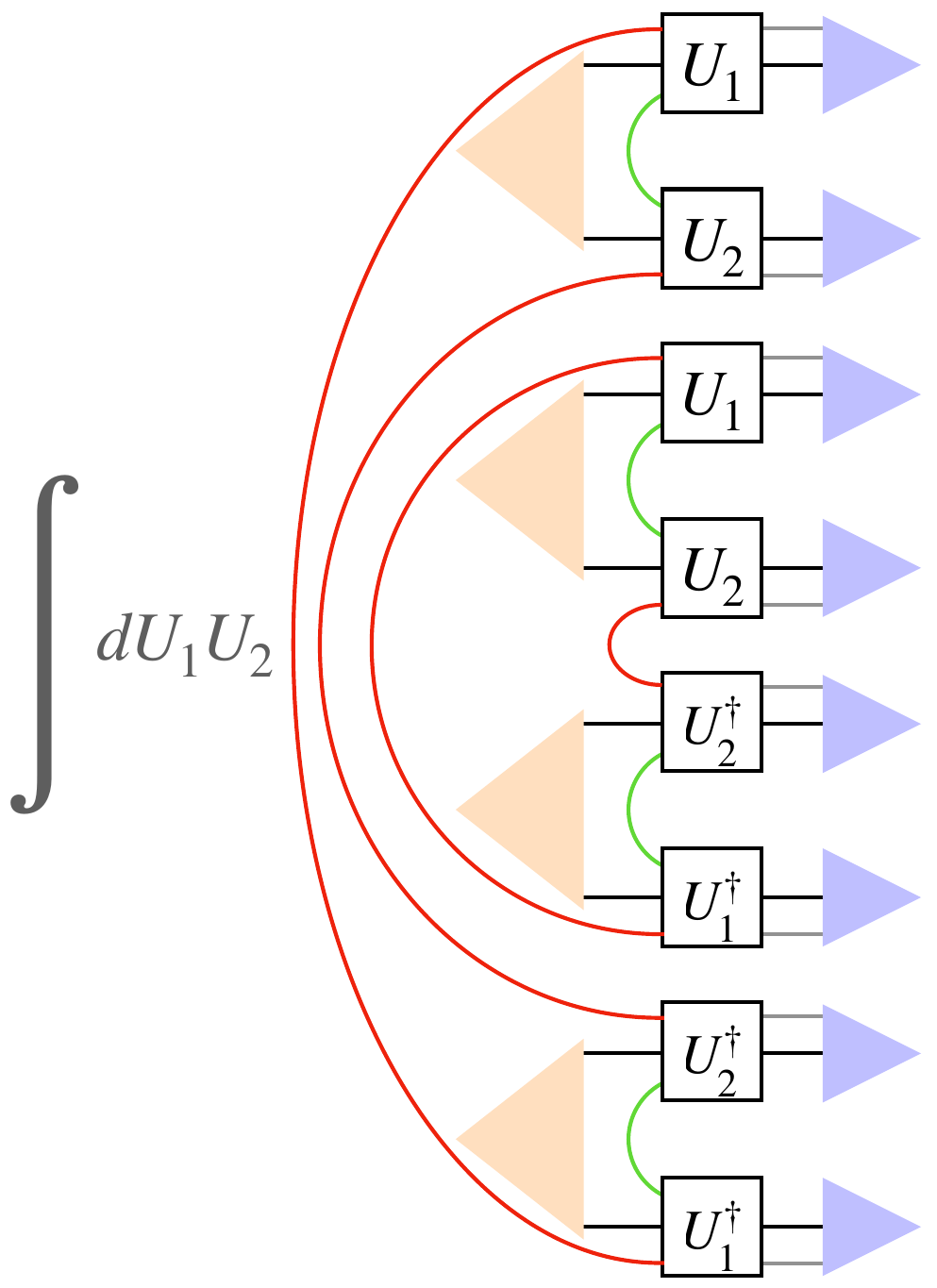}}
    &\mathrel{\vcenter{\hbox{$=$}}}
    \raisebox{-0.5\height}{\includegraphics[width=0.5\linewidth]{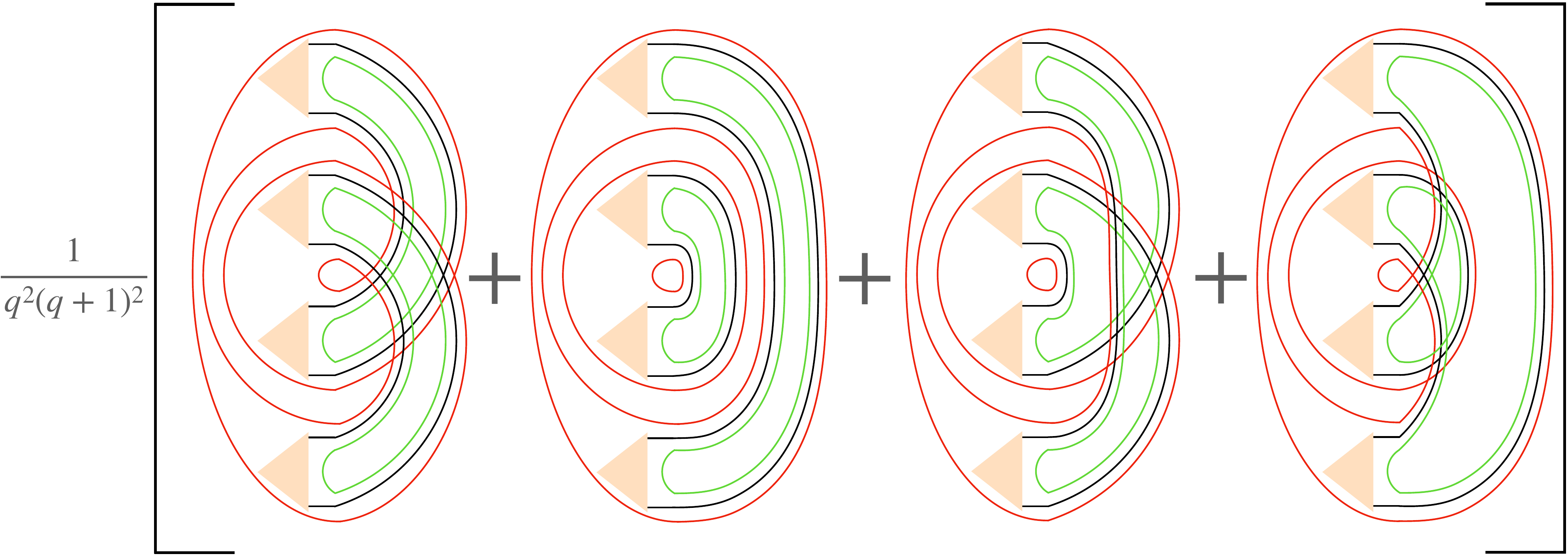}} \nonumber \\
    &\mathrel{\vcenter{\hbox{$=$}}}
    \raisebox{-0.5\height}
    {\includegraphics[width=0.5\linewidth]{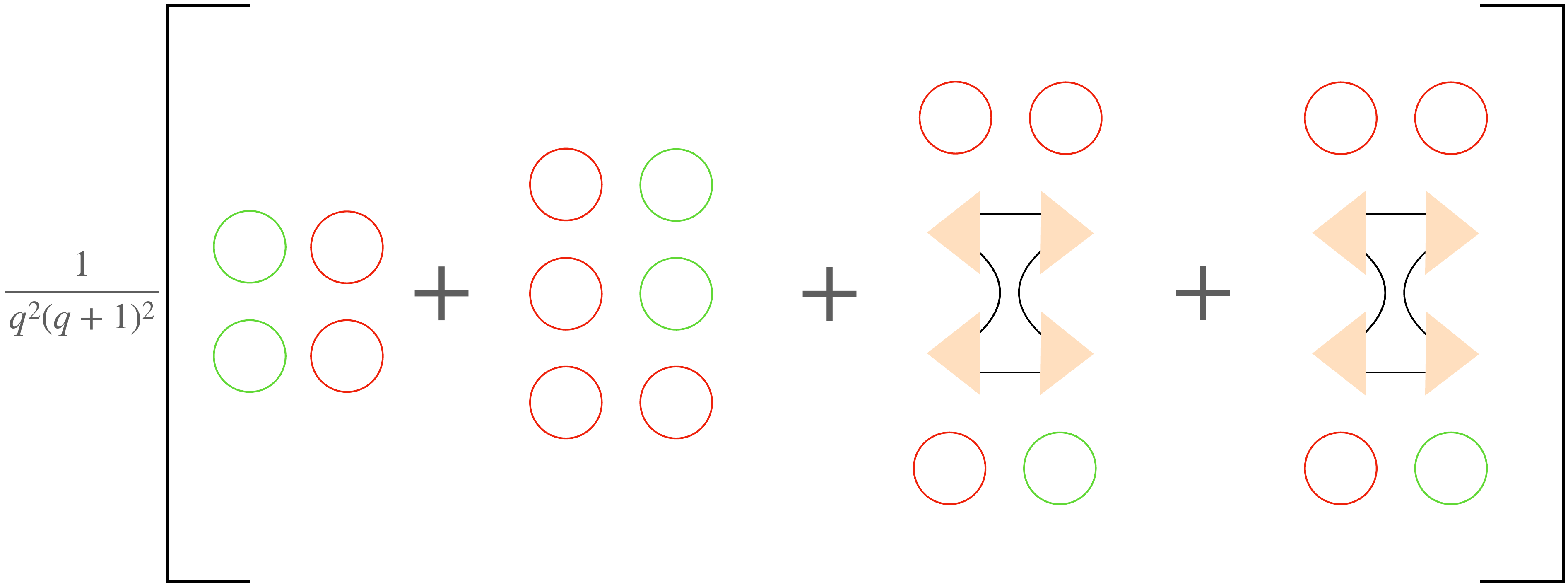}}\\
    &\mathrel{\vcenter{\hbox{$=$}}}
    \frac{a^4b^2+a^2b^2+2a^3b\tr_1[\rho_1^2]}{q^2(q+1)^2} \leq \frac{a^4b^2+a^2b^2+2a^3b}{q^2(q+1)^2}.\nonumber
\end{align}
\end{widetext}
The upper bound again follows from evaluating the connections between copies of the bulk state $\rho=\ketbra{\Phi}{\Phi}$ that produce $\tr_1 \rho_1^2 \leq 1$.
For this example, together with the expression for $\mathbb{E}[\braket{\psi}{\psi}]$ from \eqref{EQ_EXP_NORM_EX}, we then end up with a rigorous large deviation bound
\begin{align}
\label{complicated}
     \mathbbm{P} \left[\left( \frac{\braket{\psi}{\psi}}{\mathbb{E}[\braket{\psi}{\psi}]} - 1 \right)^2 \geq \varepsilon\right] 
    &\leq \frac{1}{\varepsilon} \left(\frac{a^4 b^2 + a^2 b^2 + 2a^3 b}{a^4 b^2 (1 + \frac{1}{q})^2} -1 \right) \nonumber\\
    &\leq  \frac{a^2 b^2 + 2a^3 b}{\varepsilon\, a^4 b^2} \ ,
\end{align}
where $q=a^2 b$ is again the total dimension per tensor.

For a general RTN geometry, we can again use the identity/flip formalism to map the fourth-order Weingarten expression into an Ising model partition function.
We first write
\begin{align}
    &\mathbb{E}[ (\braket{\psi}{\psi})^2 ] = \nonumber\\
    &\quad \bra{\sigma}^{\otimes 2}\bra{1}^{\otimes n}_\text{phys} \prod_{\langle i,j \rangle \in E_\text{int}} \bra{\chi_{i,j}}^{\otimes 2} \, \prod_{k=1}^{n_\text{v}} \frac{\ket{1}_k + \ket{F}_k}{q_k(q_k+1)} \ ,
\end{align}
with the difference to \eqref{EQ_G_OVERLAP4} being an additional projection $\bra{1}_\text{ext}$ for each boundary leg, corresponding to the inner product $\braket{\psi}{\psi}$ in each of the two factors. Again we end up with a sum over all vertices being either in the $\ket{1}$ or $\ket{F}$ configuration, with each state vector now acting on both internal and external (physical) legs.
With the $k$-th tensor having $m_k$ internal and $l_k-m_k$ external legs, this implies a decomposition
\begin{equation}
    \ket{1,F}_k = \ket{1,F}_\text{int}^{\otimes m_k} \otimes \ket{1,F}_\text{phys}^{\otimes (l_k-m_k)} \otimes \ket{1,F}_\text{log} \ ,
\end{equation}
with the inner products (omitting the logical states)
\begin{align}
    \braket{1}{1}_\text{int} &=\braket{F}{F}_\text{int}= b^2 \ , & \braket{1}{F}_\text{int} &= b \ , \\
    \braket{1}{1}_\text{ext} &=\braket{F}{F}_\text{phys} = a^2 \ , & \braket{1}{F}_\text{phys} &= a \ .
\end{align}
This brings us to the full expression
\begin{align}
    &\frac{\mathbb{E} [ (\braket{\psi}{\psi})^2 ]}{(\mathbb{E} [ \braket{\psi} ])^2} 
    = \frac{\bra{\Phi}\bra{1}_\text{phys}^{\otimes n}}{a^{2 n}}\prod_{\langle i,j \rangle \in E_\text{int}} \frac{\bra{\chi_{i,j}}^{\otimes 2}}{b_{\langle i,j \rangle}^2} \, \prod_{k=1}^{n_\text{v}} \frac{\ket{1}_k + \ket{F}_k}{1 +  \frac{1}{q_k}} \nonumber\\
    &=\prod_{k=1}^{n_\text{v}} \frac{1}{1 +  \frac{1}{q_k}} 
    \quad\sum_{\sigma} \tr_a[\rho_a^2]
    \prod_{\langle i,j \rangle \in E_\text{int}} b^{\frac{\sigma_i \sigma_j -1}{2}}  \prod_{\langle i,\emptyset \rangle \in E_\text{ext}} a^{\frac{\sigma_i-1}{2}}  \nonumber\\
    &\leq \prod_{k=1}^{n_\text{v}} \frac{1}{1 +  \frac{1}{q_k}} 
    \underbrace{\quad\sum_{\sigma}\prod_{\langle i,j \rangle \in E_\text{int}} b^{\frac{\sigma_i \sigma_j -1}{2}}  \prod_{\langle i,\emptyset \rangle \in E_\text{ext}} a^{\frac{\sigma_i-1}{2}} }_{Z_{\log\sqrt{b},\log\sqrt{a}}} \ ,
\end{align}
where $\langle i,\emptyset \rangle$ denotes an external edge connecting vertex $i$ to a physical site, and the sum runs over $n_\text{v}$ classical spins $\sigma_k = \pm1$.
This result can be expressed in terms of a modified Ising partition function with an external magnetic field $h_k$ on all external sites $k$,
\begin{align}
    Z_{J,h} &= \frac{1}{N} \sum_{\sigma}  e^{ \sum_{\langle i,j \rangle \in E_\text{int}} J_{\langle i,j\rangle}\sigma_i \sigma_j + \sum_{\langle i,\emptyset \rangle \in E_\text{ext}} h_i \sigma_i } \ , \\
    N &= e^{ \sum_{\langle i,j \rangle \in E_\text{int}} J_{\langle i,j\rangle} + \sum_{\langle i,\emptyset \rangle \in E_\text{ext}} h_i} \ .
\end{align}
Equivalently, it can be described by an Ising partition function on a new graph $G^\prime$ for which each boundary site is associated with an additional vertex with fixed spin $\sigma_k=+1$, and only the spins on the remaining vertices (i.e., those corresponding to tensors in the RTN) are allowed to vary. In this picture, all aligned spins lead to a factor $1$ per edge, while anti-aligned spins lead to either $1/a$ or $1/b$ depending on whether the edge connects to a boundary site or not.
In the original picture without boundary spins, we can expand $Z_{\log\sqrt{b},\log\sqrt{a}}$ at large dimensions $a$ and $b$ around the all-spin-up configuration, leading to
\begin{align}
    Z_{\log\sqrt{b},\log\sqrt{a}} = 1 + \sum_{k=1}^{n_v} \frac{1}{b^{m_k} a^{l_k-m_k}}+ \frac{1}{a^n}  +\dots \ , 
\end{align}
where we have included only corrections where a single spin $\sigma_k$ is the $-1$ configuration, as well as the term where \emph{all} spins are in the $-1$ configuration (the only term without $b$ dependence). Note that the fixed-spin boundary sites break the $\mathbb{Z}_2$ symmetry of the partition function under total spin flip, unlike $Z_{\log\sqrt{b}}$ considered in the main text.
Assuming that the tensor network forms a simply connected graph, each vertex connecting to a boundary site has at least one more internal edge, allowing us to bound
\begin{align}
    Z_{\log\sqrt{b},\log\sqrt{a}} = 1 + O\left( \frac{n_v}{b} + \frac{1}{a^n} \right) \ .
\end{align}
This finally leads us to a large deviation bound
\begin{align}
\label{EQ_NORM_DEV_BOUND_GEN}
     \mathbbm{P} \left[\left( \frac{\braket{\psi}{\psi}}{\mathbb{E}[\braket{\psi}{\psi}]} - 1 \right)^2 \geq \varepsilon\right] 
    &\leq \frac{1}{\varepsilon} \left( \frac{Z_{\log\sqrt{b},\log\sqrt{a}}}{\prod_{k=1}^{n_v} (1 + \frac{1}{q_k})} - 1\right) \nonumber\\
    &\leq O\left( \frac{n_v}{\varepsilon\,b} + \frac{1}{\varepsilon\,a^n} \right) \ .
\end{align}
This shows that relative norm fluctuations become negligible at large physical and bond dimensions $a$ and $b$ for general RTNs. Note that these bounds can be much tighter for specific RTN geometries: For example, an MPS/RTT geometry with closed boundary conditions will scale at most with $O({n}/{(\varepsilon a b^2)} + {1}/{(\varepsilon a^n}))$, so the large $a$ limit is sufficient to strongly suppress norm fluctuations even if $b$ remains small.

We now provide a physical and intuitive argument for why what we are doing is valid. Let the partition function $Z$ be a random variable, the annealed free energy is $F_{\text{annealed}}\approx \ln \mathbb{E}[Z]$ and the quenched free energy is $F_{\text{quenched}}\approx\mathbb{E}[\ln(Z)]$. By Jensen's inequality $\mathbb{E}[\ln Z] \leq \ln \mathbb{E}[Z]$. So the quenched free energy is lower bounded by the annealed free energy. We have equality if $Z$ is constant or there is no disorder. Otherwise, they differ by the cumulants, which in second order will be given by the variance of the random variable $\ln Z$. The mathematical analog of self-averaging in our case is measure concentration which justifies why we can move from averaging one to the other.

The deviations between sample-normalized and on-average-normalized expectation values can be quantified further.
For example, starting with the bound \eqref{EQ_NORM_DEV_BOUND_GEN}, we find that fluctuations of $\braket{\psi}{\psi}/\mathbb{E}[\braket{\psi}{\psi}]$ above $\varepsilon$ are suppressed at large $a$ and $b$. Let us write the right-hand side of Eq.\ \eqref{EQ_NORM_DEV_BOUND_GEN} as $\eta/\varepsilon$.
Considering the above large deviation bound, we can consider ``good events'' with deviation from normalization greater or equal than $\varepsilon$, and ``bad events'' with smaller variance.
One can then write the expectation value of inner products as a sum of the expectation value for the good and bad events and upper-bound each separately.
In this way, one finds for $\varepsilon<1$ as a rigorous bound
\begin{eqnarray}
    \mathbb{E}
    \left(\frac{\langle \psi|O|\psi\rangle}{
    \langle\psi|\psi\rangle
    }\right)\leq
    \frac{\mathbb{E} [\langle \psi|O|\psi\rangle]}{
    \mathbb{E}[\langle \psi|
    \psi\rangle]
    }\frac{1}{1-\sqrt{\varepsilon}}
    + \sqrt{\eta} \|O\| \ ,\quad
\end{eqnarray}
and via H{\"o}lder's inequality $\norm{fg}_1\leq \norm{f}_2\norm{g}_2$,
\begin{eqnarray}
    \mathbb{E}
    \left(\frac{\langle \psi|O|\psi\rangle}{
    \langle\psi|\psi\rangle
    }\right)\geq
    \frac{\mathbb{E} [\langle \psi|O|\psi\rangle]}{
    \mathbb{E}[\langle \psi|
    \psi\rangle]
    }\frac{1}{1+\sqrt{\varepsilon}}
    - \sqrt{\eta} \|O\| \ ,\quad
\end{eqnarray}
where $\|.\|$ denotes the operator norm, so the largest singular value of $O$. For practical purposes, this means that we can replace the above expectation values. 
In a similar way, one can upper bound the variance. One finds
\begin{align}
    \mathrm{Var}
\left(\frac{\langle \psi|O|\psi\rangle}{
    \langle\psi|\psi\rangle
    }\right)
    &\leq
    \frac{1}{(1-\sqrt{\varepsilon})^2}
\frac{\mathbbm{E}[
\langle\psi|O|\psi\rangle^2]
    }
    {\mathbbm{E}[
    \langle \psi|O|\psi\rangle]^2}\\
    \nonumber
    &+\sqrt{\eta}
\left(\mathbbm{E}\left[
\frac{\langle \psi|O|\psi\rangle^4}
{
\langle \psi|\psi\rangle^4}
\right]\right)^{1/2} ,
\end{align}
assuming the right-hand side is finite.
In the limit of increasing bond dimension, all higher cumulants will similarly concentrate. This means that the distribution is highly peaked. This then justifies why it is reasonable to replace the objects we are averaging over in the way we have done. Similar arguments can be made for the 2-R\'enyi entropy that actually features a Lifshitz constant upper-bounded by a constant. It is important to stress that this discussion is added here to render the discussion of the main text rigorous, but these arguments are interesting in their own right and also apply to other published results in the literature.

\section{Partition function relations under vertex fusion}
\label{APP:PROOF1}

We can lower-bound the ratio between $\Delta Z$ and $Z^\prime$ by considering how it is composed from terms of the partition function $Z_{\setminus j,k}$ of the vertices (and the edges connecting them) other than $j$ and $k$. 
Denoting as $m_k \leq l_k$ the number of \emph{internal} legs connected to the $k$-th vertex, it can be expanded into
%\begin{equation}
%    Z_{\setminus j,k} = \sum_{w_j=0}^{m_j-m_{j k}-1} \sum_{w_k=0}^{m_k-m_{j k}-1} \sum_{w_{j k}=0}^{m_{j k}} Z_{\setminus j,k}(w_j,w_k,w_{j k}) \ ,
%\end{equation}
\begin{equation}
    Z_{\setminus j,k} = \sum_{w_j=0}^{m_j-1} \sum_{w_k=0}^{m_k-1} Z_{\setminus j,k}(w_j,w_k) \ ,
\end{equation}
where $Z_{\setminus j,k}(w_j,w_k)$ contains all terms with a specific \emph{spin weight} (number of $-1$ spins) on the vertices connected to either $j$ or $k$. Symmetry under a spin flip on all vertices (excluding $j$ and $k$) results in an equality $Z_{\setminus j,k}(w_j,w_k) = Z_{\setminus j,k}(m_j-1-w_j,m_k-1-w_k)$.
We can then write
\begin{align}
    &Z^\prime = 2 \sum_{w_j=0}^{m_j-1} \sum_{w_k=0}^{m_k-1}  \frac{Z_{\setminus j,k}(w_j,w_k)}{b^{w_j+w_k}}  \nonumber\\
    &=   2 \sum_{w_j=0}^{m_j-1} \sum_{w_k=0}^{\lceil \frac{m_k-1}{2} \rceil}  \left( \frac{Z_{\setminus j,k}(w_j,w_k)}{b^{w_j+w_k}} + \frac{Z_{\setminus j,k}(w_j,w_k)}{b^{m_j+m_k-2-w_j-w_k}} \right) 
    \end{align}
    and
\begin{align}    
    &\Delta Z = 2 \sum_{w_j=0}^{m_j-1} \sum_{w_k=0}^{m_k}  \frac{Z_{\setminus j,k}(w_j,w_k)}{b^{m_j-w_j+w_k}}  \nonumber\\
    &=   2 \sum_{w_j=0}^{m_j-1} \sum_{w_k=0}^{\lceil \frac{m_k-1}{2} \rceil}  \left( \frac{Z_{\setminus j,k}(w_j,w_k)}{b^{m_j-w_j+w_k}} + \frac{Z_{\setminus j,k}(w_j,w_k)}{b^{m_k+w_j-w_k}} \right)  \ .
\end{align}
Here, we have assumed, without loss of generality, that all bonds have the same dimension $b$.
The terms in each sum arises all configurations in which $\sigma_j=\sigma_k=1$ (for $Z^\prime$) and $\sigma_j=-1,\sigma_k=+1$ (for $\Delta Z$), both of which are counting twice to account for the configuration of equal weight where all spins are flipped. In each second step, we then applied the spin flip symmetry for $Z_{\setminus j,k}$. Without knowledge of the remaining graph, the terms $Z_{\setminus j,k}(w_j,w_k)$ cannot be determined further; however, we can bound the minimum relative weight between each term pair of terms in $\Delta Z$ and $Z^\prime$ by the ratio of the $w_j=w_k=0$ terms:
\begin{align}
\frac{\frac{1}{b^{m_j+m_k-2-w_j-w_k}} + \frac{1}{b^{m_k+w_j-w_k}}}{\frac{1}{b^{w_j+w_k}} + \frac{1}{b^{m_k+w_j-w_k}}} &= \frac{b^{m_j+2w_k} + b^{m_k + 2w_j}}{b^2 + b^{m_j+m_k}} \nonumber\\
&\geq \frac{b^{m_j} + b^{m_k}}{b^2 + b^{m_j+m_k}} \ .
\end{align}
This means that the ratio $\Delta Z/Z^\prime$ can be lower-bounded in its entirety as
\begin{align}
\frac{\Delta Z}{Z^\prime} \geq \frac{b^{m_j} + b^{m_k}}{b^2 + b^{m_j+m_k}} \ .
\end{align}
Repeating the steps above for arbitrary dimension on any bond leads to the more general form
\begin{align}
\label{EQ_Z_RELATIONS_GENERAL}
\frac{\Delta Z}{Z^\prime} \geq \frac{q_j + q_k}{b_{\langle j,k \rangle}^2 + q_j\, q_k} \ .
\end{align}

\section{Partition function bounds for hyperbolic graphs}
\label{APP:PROOF2}

\begin{figure*}[ht]
    \centering
    \includegraphics[width=0.9\linewidth]{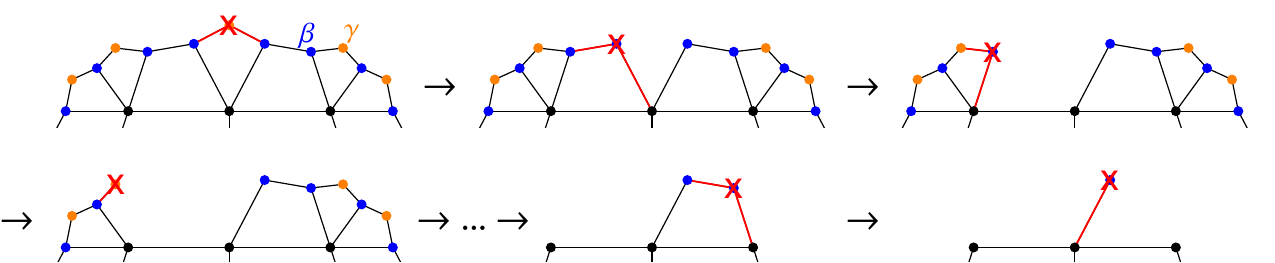}
    \caption{Iteratively removing vertices from a layer in the $\{5,4\}$ tiling. The vertices in the outer layer are color-coded according to their type ($\beta$ or $\gamma$), with a red cross denoting the vertex removed in each step and its connected edges shaded in red. We see that removing an $\beta$ or $\gamma$ vertex generally involves removing two or one edges, respectively, with the exception of the first (two instead of one) and last removed vertex (one instead of two).
    }
    \label{fig:vertex-removal}
\end{figure*}

Here we consider the Ising partition function $Z^{\{p,q\}}$ for hyperbolic graphs (i.e., $\frac{1}{p}+\frac{1}{q}<\frac{1}{2}$) in the regime $p,q>3$. Following the notation of Ref.\ \cite{Jahn:2025manylogical}, such graphs (or \emph{tilings} of the hyperbolic plane) can be built in layers of vertices through a process of \emph{vertex inflation} \cite{Boyle:2018uiv,Jahn:2019mbb} in which the vertices come in two types $\beta$ and $\gamma$. They are distinguished by how many legs of each vertex are connected to the next (outer) layer, with $\beta$ vertices having $p-3$ such legs and $\gamma$ vertices $p-2$ (another type $\alpha$ with $p=4$ legs appears only in $q=3$ tilings).
In each layer, the $\beta$ and $\gamma$ vertices alternate \emph{quasiperiodically}, with the sequence at one layer determining the sequence on the next via letter \emph{inflation rules} \cite{Boyle:2018uiv,Jahn:2019mbb}.
Specifically, the number $n_\beta^{(L)},n_\gamma^{(L)}$ of vertices of each type at the $L$-th inflation layer (the zeroth layer comprised of a single vertex) is given by
\begin{align}
    \begin{pmatrix}
        n_\beta^{(L+1)} \\
        n_\gamma^{(L+1)}
    \end{pmatrix}
    = M \begin{pmatrix}
        n_\beta^{(L)} \\
        n_\gamma^{(L)}
    \end{pmatrix} \ ,
\end{align}
where the \emph{substitution matrix} $M$ takes the form \cite{Jahn:2025manylogical}
\begin{equation}   
    M = 
    \begin{pmatrix}
        p-3 & p-2 \\
        (p-3)(q-3)-1 & (p-2)(q-3)-1  
    \end{pmatrix} \ .
\end{equation}
The ratio $n_\beta^{(L)}/n_\gamma^{(L)}$ quickly converges to a constant as $L$ is increased.
From the largest eigenvalue of $M$, one then finds
\begin{align}
\label{EQ_VERTEX_TYPE_RATIO_PQ}
    \frac{n_\beta^{(L)}}{n^{(L)}} &= \frac{p}{2} - 1 - \frac{1}{r} \ , \\
    \frac{n_\gamma^{(L)}}{n^{(L)}} &= 2 - \frac{p}{2} + \frac{1}{r} \ .
\end{align}
where $n^{(L)}=n_\beta^{(L)}+n_\gamma^{(L)}$ and $r$ is the rate of vertices over boundary sites defined in Eq.\ \eqref{EQ_RATE_HOLO_PQ}.

We now wish the run the inflation process in reverse, iteratively removing vertices and their connected edges and applying the bound \eqref{EQ_Z_BOUND_UPPER} in each step. Fig.\ \ref{fig:vertex-removal} shows this process for a single layer of vertices for the example of a $\{5,4\}$ tiling. Ignoring outer legs that are present when removing the outermost layer of the tiling (connected to the physical sites), which would only make the bound tighter, we see that removing a $\beta$ vertex involves removing two edges, while removing a $\gamma$ vertex involves removing one (for the first $\gamma$ and last $\beta$ vertex, the roles are reversed).
For a sufficiently large graph where the ratio between each type of vertex is well-approximated by \eqref{EQ_VERTEX_TYPE_RATIO_PQ}, we can invoke \eqref{EQ_Z_BOUND_UPPER} to bound
\begin{align}
    Z^{\{p,q\}} &\leq  \left( 1 + \frac{1}{b} \right)^{(2 - \frac{p}{2} + \frac{1}{r}) n_v}  \left( 1 + \frac{1}{b^2} \right)^{(\frac{p}{2} - 1 - \frac{1}{r}) n_v} \nonumber\\
&=  \left( \left( 1 + \frac{1}{b} \right)^{\frac{r(4-p)+1}{2}}  \left( 1 + \frac{1}{b^2} \right)^{\frac{r(p-2)-1}{2}} \right)^n \ ,
\end{align}
where we have written the last expression in terms of the number $n=n_v/r$ of physical sites.
In the $p,q>3$ regime, this upper bound takes its maximum value at $p=4.q=5$, so that we arrive at the general bound 
\begin{align}
    Z^{\{p,q\}} &\leq  \left( \left( 1+\frac{1}{b} \right)^{\frac{1}{2}} \left( 1+\frac{1}{b^2} \right)^{\frac{\sqrt{3}-1}{2}} \right)^n \ ,
\end{align}
for any $p,q>3$.

\section{Coulomb gas and eigenvalues of the CUE}
\label{APP:coulombgas}
Here, we provide a deeper background into the Coulomb gas formalism, largely following the discussion in Ref.\ \cite{forrester2010loggases}.
The canonical formalism of statistical mechanics applies to any mechanical system of $N$ particles that are free to move in a fixed domain $\Omega$, in equilibrium at temperature $T$. The
\emph{probability density function}
(p.d.f.)  
for the event that the particles are at positions $\mathbf{r}_1, \dots,  \mathbf{r}_N$ as 
\begin{equation}
    \frac{1}{\hat{Z}_N}e^{-\beta U(\mathbf{r}_1, \dots,  \mathbf{r}_N)}
\end{equation}
where $U(\mathbf{r}_1, \dots,  \mathbf{r}_N)$ denotes the total potential energy of the system and $\beta$ is Boltzmann's constant. The normalization is given by 
\begin{equation}
    \hat{Z}_N = \int_\Omega d\mathbf{r}_1 \dots \int_\Omega d \mathbf{r}_N e^{-\beta (\mathbf{r}_1, \dots,  \mathbf{r}_N)}.
\end{equation}
The term $ e^{-\beta (\mathbf{r}_1, \dots,  \mathbf{r}_N)}$ is the Boltzmann  factor and $\hat{Z}_N/N! := Z_N$ is the canonical partition function. 
For the log-potential Coulomb systems, the potential energy $U$ is calculated according to two dimensional electrostatics, and $\Omega$ must be one or two dimensional. The particles can be thought of as infinitely long parallel charged lines which are perpendicular to the confining domain. In a vacuum the electrostatic potential $\Phi$ at a point $\mathbf{r}=(x,y)$ due to a two-dimensional unit charge at $\mathbf{r}'=(x',y')$ is given by the solution of the Poisson equation 
\begin{equation}
\label{poisson}
\nabla^2_{\mathbf{r}}\Phi(\mathbf{r}, \mathbf{r}') = -2\pi \delta(\mathbf{r}-\mathbf{r}')
\end{equation}
where \begin{equation}
    \nabla^2_{\mathbf{r}}:= \frac{\partial^2}{\partial x^2} + \frac{\partial^2}{\partial y^2}.
\end{equation}
The solution of 
the Poisson equation 
is 
\begin{equation}
\label{soln1}
    \Phi(\mathbf{r},\mathbf{r}')=-\log(|\mathbf{r}-\mathbf{r}'|/l)
\end{equation}
where $l$ is an arbitrary length scale that will be set to one. A Coulomb gas is said to consist of one component if all $N$ particles are of like charge $q$ for example. To stop the particles from all repelling to the boundary, a neutralizing background charge density $-q\rho_b(\mathbf{r})$ is imposed, with the electro-neutrality condition $\int_\Omega \rho_b(\mathbf{r}) d\mathbf{r}=N$. The total potential energy $U$ therefore consists of the 
sum of the electrostatic energy of the particle-particle interaction 
\begin{equation}
    U_1 := -q^2 \sum_{1 \leq j<k \leq N} \log|\mathbf{r}_k - \mathbf{r}_j|,
\end{equation}
the particle-background interaction 
\begin{align}
     U_2&:= q^2 \sum_{j=1}^N V(\mathbf{r}_j), \\
     V(\mathbf{r}_j) &:= \int_\Omega \log|\mathbf{r}-\mathbf{r}_j|\rho_b(\mathbf{r})d\mathbf{r}
\end{align}
and the background-background interaction 
\begin{align}
    U_3 &:= \frac{-q^2}{2} \int_\Omega d\mathbf{r} \rho_b(\mathbf{r}) \int_\Omega d \mathbf{r} \rho_b(\mathbf{r}) \log|\mathbf{r}-\mathbf{r}_j| \\
    &=\frac{-q^2}{2} \int_\Omega \rho_b(\mathbf{r}')V(\mathbf{r}')d\mathbf{r}'
    \nonumber
\end{align}
the factor of $\frac{1}{2}$ in $U_3$ is included to compensate for the double counting of the potential energy implicit in the double integration. For this expression for $U$ we conclude that the Boltzmann factor of a one-component log-potential Coulomb system (log-gas) is of the form 
\begin{equation}
    e^{-\beta U_3} \prod_{l=1}^{N} e^{-\Gamma V(\mathbf{r}_l)} \prod_{1 \leq j<k \leq N}|\mathbf{r}-\mathbf{r}_j|^\Gamma
\end{equation}
where $\Gamma := q^2/k_BT$. For a given geometry and background density, the potentials $V(\mathbf{r})$ and $U_3$ can be explicitly evaluated. As an example, we single out the following.

\begin{thm}[Boltzmann  factor of a one-component log-potential Coulomb system]
\label{bolt}
    The Boltzmann  factor of a one-component log-potential Coulomb system of $N$ particles of charge $q=1$, confined to a circle of radius $R$ with a uniform neutralizing background, is given by 
    \begin{equation}
    \label{prod}
        R^{-N \beta/2} \prod_{1 \leq j < k \leq N} |e^{i \theta_k - e^{i \theta_j}}|^\beta
    \end{equation}
    where the position of each particle has been specified in polar coordinates. 
\end{thm}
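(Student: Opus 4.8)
The plan is to specialize the general log-gas Boltzmann factor
$$
e^{-\beta U_3}\prod_{l=1}^N e^{-\Gamma V(\mathbf{r}_l)}\prod_{1\le j<k\le N}|\mathbf{r}_k-\mathbf{r}_j|^{\Gamma}
$$
established above to the circular geometry, setting $q=1$ so that $\Gamma=\beta$, and with a uniform neutralizing background. The strategy is to evaluate the particle--background potential $V$ and the background--background energy $U_3$ explicitly, then to collect all resulting powers of $R$ while the nontrivial angular dependence survives untouched in the particle--particle factor.

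First I would parametrize each particle in the complex plane as $\mathbf{r}_j = R\,e^{\ii\theta_j}$, so that $|\mathbf{r}_k-\mathbf{r}_j| = R\,|e^{\ii\theta_k}-e^{\ii\theta_j}|$ and the pair product factorizes cleanly as $R^{\beta N(N-1)/2}\prod_{j<k}|e^{\ii\theta_k}-e^{\ii\theta_j}|^{\beta}$. The uniform background on the circle of circumference $2\pi R$ carries linear density $\rho_b = N/(2\pi R)$, fixed by the electroneutrality condition $\int_\Omega \rho_b\,d\mathbf{r}=N$. The central computation is then
$$
V(\mathbf{r}_j) = \frac{N}{2\pi}\int_0^{2\pi}\log\bigl|R e^{\ii\theta}-R e^{\ii\theta_j}\bigr|\,d\theta
= N\log R + \frac{N}{2\pi}\int_0^{2\pi}\log\bigl|e^{\ii\theta}-e^{\ii\theta_j}\bigr|\,d\theta .
$$

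By rotational invariance the remaining integral is independent of $\theta_j$, and setting $\theta_j=0$ reduces it to $\int_0^{2\pi}\log|e^{\ii\theta}-1|\,d\theta$. The hard part will be evaluating this integral: writing $|e^{\ii\theta}-1| = 2|\sin(\theta/2)|$ and invoking the classical identity $\int_0^\pi \log\sin u\,du = -\pi\log 2$ shows it vanishes. Hence $V(\mathbf{r}_j)=N\log R$ is \emph{constant} on the circle, which immediately gives $\prod_l e^{-\beta V(\mathbf{r}_l)} = R^{-\beta N^2}$; feeding this constant $V$ into $U_3 = -\tfrac{1}{2}\int_\Omega \rho_b\,V\,d\mathbf{r}' = -\tfrac{N^2}{2}\log R$ then produces the background factor $e^{-\beta U_3}=R^{\beta N^2/2}$.

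Finally I would assemble the three contributions and track the exponent of $R$, namely $\tfrac{\beta N(N-1)}{2} - \beta N^2 + \tfrac{\beta N^2}{2} = -\tfrac{\beta N}{2}$, which leaves exactly $R^{-N\beta/2}\prod_{j<k}|e^{\ii\theta_k}-e^{\ii\theta_j}|^{\beta}$, as claimed. The only genuine analytic input is the vanishing of the angular log-integral (a Jensen-type mean-value identity on the unit circle); everything else is bookkeeping of $R$-powers, so I expect that single integral to be the only step requiring real care.
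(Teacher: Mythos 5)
Your proposal is correct and follows essentially the same route as the paper: evaluate $V(\mathbf{r})=N\log R$ on the circle (the angular integral $\int_0^{2\pi}\log|e^{\ii\theta}-1|\,d\theta$ vanishing), deduce $U_3=-\tfrac{N^2}{2}\log R$, and substitute into the general log-gas Boltzmann factor with $q=1$, $\Gamma=\beta$. If anything, you are slightly more explicit than the paper, which merely asserts the vanishing of the angular integral and leaves the $R$-power bookkeeping implicit, whereas you supply the $\log\sin$ identity and check the exponent $\tfrac{\beta N(N-1)}{2}-\beta N^2+\tfrac{\beta N^2}{2}=-\tfrac{\beta N}{2}$ directly.
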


\begin{proof}
    it is generally true that two points $\mathbf{r}$ and $\mathbf{r}'$ in the plane $|\mathbf{r} - \mathbf{r}'|= |Z - z'|$ are the corresponding points in the complex plane. Hence, if $\mathbf{r}$ and $\mathbf{r}'$ lie on a circle of radius $R$ with positions specified by polar coordinates, then $|\mathbf{r} - \mathbf{r}'| = R|e^{i \theta}-e^{i \theta'}|$. This formula gives the required expression for the product over pairs in (\ref{prod}) and allows the potential to be written as 
    \begin{align}
    V(\mathbf{r}) &= \frac{N}{2 \pi R} \int_0^{2\pi} \log|Re^{i \theta'} - R e^{i \theta}|R d\theta' \\
    &= N \log R + \frac{N}{2 \pi} \int_0^{2\pi} \log|e^{i\theta '}-1| d\theta'.
    \nonumber
    \end{align}
    The last integral vanishes and so $V(\mathbf{r})=N\log R$, using this result gives $U_3 = \frac{-q^2}{2}N^2 \log R$. Substituting these into (\ref{prod}) and noting $q=1, \Gamma=\beta$ gives the desired expression for the Boltzmann  factor. The Boltzmann  factor being proportional to the p.d.f.\  for the location of the particles, occurs in the definition of all statistical quantities associated with the equilibrium state. 
\end{proof}
We now consider the CUE ensemble where each eigenvalue can be written as $\lambda_j= e^{i \theta_j}$.
\begin{thm}[Eigenvalue probability density function of the CUE]
    The eigenvalue p.d.f.\ of the CUE is given by 
    \begin{equation}
        \frac{1}{C_{\beta,N}} \prod_{1 \leq j < k \leq N} |e^{i \theta_k - e^{i \theta_j}}|^2, \quad - \pi < \theta_1 \leq \pi.
    \end{equation}
\end{thm}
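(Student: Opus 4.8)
The plan is to derive the eigenvalue density directly from the defining invariance of the Haar (CUE) measure under $U \mapsto WUW^\dagger$ for fixed $W \in U(N)$ (the Weyl integration formula), and then to observe that the result coincides with the $\beta=2$ instance of the Coulomb gas Boltzmann factor of Theorem~\ref{bolt}. First I would spectrally decompose a generic unitary as $U = V\Lambda V^\dagger$, with $\Lambda = \operatorname{diag}(\lambda_1,\dots,\lambda_N)$, $\lambda_j = e^{\ii\theta_j}$, and $V \in U(N)$ the matrix of eigenvectors. This decomposition is unique only up to permutations of the eigenphases (accounting for the combinatorial factor $1/N!$ absorbed in $C_{\beta,N}$) and up to right multiplication of $V$ by a diagonal phase matrix, so that $V$ genuinely parametrizes the coset $U(N)/U(1)^N$. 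The change of variables from the matrix entries of $U$ to the angular coordinates $(\theta_1,\dots,\theta_N)$ together with the eigenvector frame is the heart of the argument.

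The key computation is the Jacobian of this map. I would evaluate the invariant line element $\tr(\mathrm{d}U^\dagger \mathrm{d}U)$ in the eigenbasis by setting $\mathrm{d}A := V^\dagger \mathrm{d}V$ (anti-Hermitian, since $V$ is unitary) and writing
\begin{equation}
    V^\dagger \mathrm{d}U\, V = \mathrm{d}A\,\Lambda - \Lambda\,\mathrm{d}A + \mathrm{d}\Lambda .
\end{equation}
Reading off entries, the diagonal part gives $(V^\dagger \mathrm{d}U\, V)_{jj} = \mathrm{d}\lambda_j = \ii\, e^{\ii\theta_j}\,\mathrm{d}\theta_j$, while the off-diagonal part gives $(V^\dagger \mathrm{d}U\, V)_{jk} = (\lambda_k - \lambda_j)\,\mathrm{d}A_{jk}$ for $j \neq k$. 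Since the metric is $\sum_{j,k}|(V^\dagger \mathrm{d}U\, V)_{jk}|^2$, the angular and frame directions decouple: each eigenphase contributes a flat factor $|\mathrm{d}\lambda_j| = \mathrm{d}\theta_j$ (as $|\ii e^{\ii\theta_j}|=1$), and each off-diagonal pair contributes $|\lambda_j - \lambda_k|^2$ to the metric determinant. Taking the square root of the determinant over all pairs $j<k$ yields the Jacobian factor $\prod_{j<k}|e^{\ii\theta_j} - e^{\ii\theta_k}|^2$, with the remaining $\mathrm{d}A$ integration producing the eigenphase-independent Haar volume of the flag manifold $U(N)/U(1)^N$. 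Absorbing all constants into $C_{\beta,N}$ gives the claimed density on $-\pi < \theta_j \leq \pi$.

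Finally, I would close the loop with the Coulomb gas picture: the derived density is exactly the Boltzmann factor~\eqref{prod} of Theorem~\ref{bolt} evaluated at $\beta = 2$ and $R = 1$, identifying the CUE spectrum with a one-component log-gas of $N$ unit charges on the unit circle at inverse temperature $\beta = 2$. The main obstacle is the Jacobian step — specifically, justifying cleanly that the off-diagonal directions contribute precisely $|\lambda_j - \lambda_k|^2$ and that the eigenvector-frame integration factors out into an eigenphase-independent constant. This requires correctly handling the $U(1)^N$ stabilizer redundancy of the decomposition $U = V\Lambda V^\dagger$, so that no spurious $\theta$-dependence leaks in from the coset volume; the repulsion factor, which is the only nontrivial $\theta$-dependence, then arises entirely from the $(\lambda_k-\lambda_j)$ weights on the off-diagonal tangent directions.
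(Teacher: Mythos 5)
Your proposal is correct, but it is worth noting that the paper itself offers no proof of this statement at all: the theorem is quoted as a classical result (the appendix announces it is ``largely following'' Ref.~\cite{forrester2010loggases}), and the paper's only argumentative move is the subsequent comparison with the Boltzmann factor of Theorem~\ref{bolt}. What you supply is the standard Weyl-integration-formula derivation, and your execution of it is sound: the decomposition $V^\dagger \mathrm{d}U\, V = \mathrm{d}A\,\Lambda - \Lambda\,\mathrm{d}A + \mathrm{d}\Lambda$ is right, the diagonal entries of $\mathrm{d}A$ correctly drop out (this is exactly the $U(1)^N$ stabilizer you flag), the metric $\tr(\mathrm{d}U^\dagger \mathrm{d}U)$ block-diagonalizes into $\sum_j \mathrm{d}\theta_j^2 + \sum_{j\neq k}|\lambda_j-\lambda_k|^2\,|\mathrm{d}A_{jk}|^2$, and since each pair $j<k$ carries two real directions each weighted by $|\lambda_j-\lambda_k|^2$, the square root of the metric determinant yields precisely $\prod_{j<k}|e^{\ii\theta_j}-e^{\ii\theta_k}|^2$, with the flag-manifold volume factoring out as a $\theta$-independent constant absorbed (together with the $1/N!$ from eigenvalue ordering) into $C_{\beta,N}$. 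Two minor points: your derivation tacitly works away from the measure-zero locus of degenerate eigenvalues, which is fine but worth a remark in a fully rigorous write-up; and note that the paper's displayed formula contains a typo (it should read $|e^{\ii\theta_k}-e^{\ii\theta_j}|^2$ with the restriction on all $\theta_j$, not just $\theta_1$), which your derivation implicitly corrects. The trade-off between the two routes is clear: the paper's citation-based presentation keeps the appendix short and emphasizes only the log-gas identification it actually needs, whereas your argument makes the appendix self-contained and exhibits where the $\beta=2$ repulsion genuinely comes from, namely the $(\lambda_k-\lambda_j)$ weights on the off-diagonal tangent directions of the conjugation orbit.
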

Comparing this with (\ref{bolt}), 
we see that the eigenvalue p.d.f.\ for the circular ensemble is directly proportional to the Boltzmann factor of the one-component log-potential Coulomb gas on a circle. An equivalent interpretation is that the log-gas is defined on a line with periodic boundary conditions. Suppose the line is in the $x-$direction and of length $L$. To specify a two dimensional Coulomb system in this setting, for the 
pair potential we seek a solution of the Poisson equation (\ref{poisson}) subject to the semi-periodic boundary condition $\Phi((x+L,y),(x',y'))=\Phi((x,y),(x',y'))$. The solution must depend on $x-x'$ and $y-y'$ and for $\mathbf{r} \sim \mathbf{r}'$, $\Phi(\mathbf{r}, \mathbf{r}') \sim - \log|\mathbf{r}-\mathbf{r}'|$ which is the solution of the Poisson equation in free boundary conditions with $l=1$. Complex analysis then allows us to assert that the real part of an analytic function satisfies Laplace's equation that 
\begin{align}
\label{pairpot}
    &\Phi(\mathbf{r},\mathbf{r}')= \nonumber \\ 
    &= - \log(|\sin(\pi(x-x' + i(y-y')/L))|(L/\pi)).
\end{align}
With the particles confined to the segment $[0,L]$ of the $x-$axis, this reduces to 
\begin{align}
    - \log(|\sin(\pi(x-x' + i(y-y')/L))|(L/\pi)) = \nonumber\\
    = -\log(|e^{2\pi i x/L}-e^{2 \pi i x'/L}|(L/2\pi))
\end{align}
thus revealing the equivalence to (\ref{soln1}) with $\mathbf{r}, \mathbf{r}'$ confined to a circle. 

Suppose for a log-gas system interacting via the pair potential (\ref{pairpot}) instead of the particles being confined to the line segment $[0,L]$ in the $x-$direction, they are confined to the full line in the $y-$direction. Up to an additive constant, the pair potential is then $-\log|\sinh(\pi(y-y')/L)|$, and if the particles are restrained from repelling to infinity by an attractive harmonic potential, the Boltzmann  factor is then of the form 
\begin{equation}
    \prod_{j=1}^N e^{-\beta c' y_j^2/2} \prod_{1 \leq j < k \leq N}|\sinh(\pi(y_k-y_j)/L)|^2, 
\end{equation}
for $-\infty < y_j < \infty$ where this corresponds to the partition function 
that occurs in Chern-Simons \cite{Giasemidis_2014}. 
    
\end{document}